\documentclass[reqno,12pt,a4paper]{amsart}

\usepackage{euscript}
\usepackage{multicol}
\usepackage{amsmath}
\usepackage{times}

\usepackage{amsmath,amsfonts,amsthm,amssymb,amsxtra}
\usepackage{mathrsfs}
\usepackage{amssymb}
\usepackage{amsfonts}
\usepackage{latexsym}
\usepackage{amsthm}
\usepackage[utf8]{inputenc}

\usepackage{mathtools,xparse}

\DeclarePairedDelimiter{\oldnormaux}{\bracevert}{\bracevert}

\NewDocumentCommand{\oldnorm}{som}{%
  \IfBooleanTF{#1}
    {\oldnormaux*{#3}}
    {\IfNoValueTF{#2}
       {\oldnormaux*{\vphantom{dq}#3}}
       {\oldnormaux[#2]{#3}}%
    }%
}
\theoremstyle{plain}
\newtheorem{theorem}{\bf Theorem}[section]
\newtheorem{lemma}[theorem]{\bf Lemma}
\newtheorem{corollary}[theorem]{\bf Corollary}
\newtheorem{proposition}[theorem]{\bf Proposition}

   \def\cH{{\mathcal H}}

\def\o{\omega}

\def\ve{\varepsilon}    \def\vk{\varkappa}

\usepackage{pgfplots}
 
\pgfplotsset{compat = newest}

\numberwithin{equation}{section}

\let\ge\geqslant
\let\le\leqslant
\let\geq\geqslant
\let\leq\leqslant
\newcommand{\ca}{\begin{cases}}
\newcommand{\ac}{\end{cases}}
\newcommand{\ma}{\begin{pmatrix}}
\newcommand{\am}{\end{pmatrix}}
\renewcommand{\[}{\begin{equation}}
\renewcommand{\]}{\end{equation}}
\def\eq{\begin{equation}}
\def\qe{\end{equation}}
\def\[{\begin{equation}}

\title[Anderson localization on    quantum graphs]{ Anderson localization on      quantum graphs coded  by     elements of  a  subshift of finite type}

\author{Oleg Safronov}
\email{osafrono@charlotte.edu}
\address{Department of Mathematics and Statistics,  UNCC,   Charlotte,  NC}

\begin{document}
\maketitle

\thispagestyle{empty}

\begin{abstract}
We study Schrödinger operators on quantum graphs where the number of edges between points is determined by orbits of a "shift of finite type".
We prove Anderson localization for these systems.
\end{abstract}

\section{Main result}

Building on the results of \cite{Saf} regarding the positivity of the Lyapunov exponent, this paper establishes the spectral characteristics and proves localization for quantum graphs with a variable number of edges. 
By intertwining the positivity of the Lyapunov exponent with precise large deviation estimates, we develop a rigorous framework for these disordered systems.
This paper establishes the quantum graph analogue of the celebrated Avila, Damanik, and Zhang result  \cite{ADZ2}.   We prove that for metric graphs governed by a subshift of finite type, the Schrödinger operator exhibits a pure point spectrum on \([0, \infty)\) alongside exponentially decaying eigenfunctions for almost every realization.

For a  positive integer  $\ell>1$,  let ${\mathcal A}^{\Bbb Z}$   be   the collection  of all  infinite sequences $\{\omega_n\}_{n\in {\Bbb Z}}$  such that $\omega_n\in {\mathcal A}$, where ${\mathcal A}= \{1,\dots,\ell\}$.
To introduce a non-trivial topological structure, we restrict our focus to sequences that exclude specific transitions. 
Formally,  given a set of forbidden pairs ${\mathcal F}\subset {\mathcal A}\times {\mathcal A}$,  we consider  the  collection $\Omega\subset{\mathcal A}^{\Bbb Z}$
of all  infinite sequences $\{\omega_n\}_{n\in {\Bbb Z}}$  for which
\[\notag
(\omega_n, \omega_{n+1}) \notin {\mathcal F}, \qquad \forall n\in {\Bbb Z}.
\]
It is easy to see that $\Omega$  is a compact metric space with respect  to
the   metric $d(\cdot,\cdot)$ defined by  $$d(\omega,\omega')=e^{-N(\omega,\omega')},$$
where $N(\omega,\omega')$ is  the  largest  nonnegative integer   such that $\omega_n=\omega_n'$   for all $|n|< N(\omega,\omega')$.
Define the mapping      $T:\,\Omega\to\Omega$     by
$$
\bigl(T\omega\bigr)_n=\omega_{n+1},\qquad  \forall n\in {\Bbb Z}.
$$
Such a mapping $T$ is called a subshift of finite type.   If ${\mathcal F}=\emptyset$, then $T$ is called  the full shift.

For each $\omega\in \Omega$, we construct  the  graph $\Gamma_\omega$,
 displayed  below for the  case  where $\ell=2$, ${\mathcal F}=\emptyset$,  and $\omega=\dots2,1,1,2,2,2,1,1,2,1,2,1,1,1,\dots$

\begin{tikzpicture}
\fill[black] (-3,0) circle (1pt) (-2,0) circle (1pt) (-1,0) circle (1pt) (0,0) circle (1pt) (1,0) circle (1pt) (2,0) circle (1pt)  (3,0) circle (1pt)   (4,0) circle (1pt) (5,0) circle (1pt) (6,0) circle (1pt) (7,0) circle (1pt) (8,0) circle (1pt) 
(9,0) circle (1pt)  (10,0) circle (1pt);
\draw (-3.2,0) --  (-3,0) ;
\draw (-3,0)to[out=90,in=90] node[midway,above] {$2$} (-2,0);
\draw (-3,0)to[out=-90,in=-90]  (-2,0);

\draw (-2,0) -- node[midway,above] {$1$}  (-1,0) ;

\draw (-1,0) --  (0,0) node[midway,above] {$1$} ;

\draw (0,0)to[out=90,in=90] node[midway,above] {$2$} (1,0);
\draw (0,0)to[out=-90,in=-90]  (1,0);
\draw (1,0)to[out=90,in=90] node[midway,above] {$2$} (2,0);
\draw (1,0)to[out=-90,in=-90]  (2,0);
\draw (2,0)to[out=90,in=90] node[midway,above] {$2$} (3,0);
\draw (2,0)to[out=-90,in=-90]  (3,0);
\draw (3,0) -- node[midway,above] {$1$}  (4,0) ;

\draw (4,0) --  (5,0) node[midway,above] {$1$} ;
\draw (5,0)to[out=90,in=90] node[midway,above] {$2$} (6,0);
\draw (5,0)to[out=-90,in=-90]  (6,0);
\draw (6,0) -- node[midway,above] {$1$}  (7,0) ;
\draw (7,0)to[out=90,in=90] node[midway,above] {$2$} (8,0);
\draw (7,0)to[out=-90,in=-90]  (8,0);
\draw (8,0) -- node[midway,above] {$1$}  (9,0) ;
\draw (9,0) -- node[midway,above] {$1$}  (10,0) ;
\draw (10,0) -- node[midway,above] {$1$}  (11,0) ;

\end{tikzpicture}

Namely,
let ${\Bbb Z}$ be the  set of integer numbers.  For each  $\omega\in \Omega$ and  $n\in {\Bbb Z}$,  we  consider  $\omega_n$ copies  of the   interval $[n,n+1]$.
Denoting    these  copies  by $I_{n,j}$, where $j=1,\dots,\omega_n$,
we define the graph $\Gamma_\omega$   as the union
\[
\Gamma_\omega=\bigcup_{n\in {\Bbb Z}}\Bigl(\bigcup_{j=1}^{\omega_n}I_{n,j}\Bigr).
\]
While  the interiors of  the intervals $I_{n,j}$ are assumed to be disjoint,  we will also assume that
their   endpoints are shared in the  sense  that $n$  and $n+1$ belong  to all  intervals  $I_{n,j}$.   Thus,
\[
\bigcap_{j=1}^{\omega_n}I_{n,j}=\{n\}\cup \{n+1\}.
\]

We equip \(\Gamma _{\omega }\) with the natural Lebesgue measure, which restricts to the standard Lebesgue measure on each interval \(I_{n,j}\).  The main object of our study is the Schrödinger operator \(H_{\omega }\), formally defined by
\[
H_{\omega }u=-u^{\prime \prime }.
\]
 The domain \(D(H_\omega)\) of this operator is a subspace of the orthogonal sum of Sobolev spaces on the intervals
\[\notag
D(H_\omega)\subset \bigoplus_{n=-\infty}^\infty\bigoplus_{j=1}^{\omega_n} \quad W^{2,2}(I_{n,j}).
\]
To be in the domain \(D(H_\omega)\), all functions \(u\) must satisfy two requirements:
\begin{enumerate}
\item The functions \(u\) must be continuous on the entire graph \(\Gamma _{\omega }\).
\item Denoting the restriction of \(u\) to \(I_{n,j}\) by \(u_{n,j}\), the derivatives at each vertex \(n \in \mathbb{Z}\) must satisfy:
\begin{equation}\label{Kirchhoff}
\sum_{j=1}^{\omega_n}u'_{n,j}(n) =\sum_{j=1}^{\omega_{n-1}}u'_{n-1,j}(n), \quad \forall n\in {\Bbb Z}.
\end{equation}
\end{enumerate}
The  last  relation  is  called Kirchhoff's gluing   condition  at the point $n$.  Under these boundary and matching conditions, the operator \(H_{\omega }\) is self-adjoint in the Hilbert space \(L^2(\Gamma_\omega)\).

Since \(\Omega \) is a metric space, we can consider probability measures on its Borel \(\sigma \)-algebra. Let \(\mu \) be a \(T\)-ergodic probability measure on \(\Omega \). We restrict our focus to measures satisfying a bounded distortion property. To formally define this, we first introduce the cylinder sets
\[\notag
[n; j_0,\dots, j_k ]=\{\omega\in \Omega:\,\, \omega_{n+s}=j_s,\,\, s=0,\dots ,k\}
\]

\bigskip

{\it Definition}.  The measure  $\mu$  is  said to be a measure   with  a  bounded  distortion property, 
provided there is a  constant $C>1$ such that
\[\notag
C^{-1}\leq \frac{\mu([n;j_0,\dots j_k]\cap [l;i_0,\dots i_s])}{\mu([n;j_0,\dots j_k]) \cdot \mu ( [l;i_0,\dots i_s])}\leq C
\]
for all $l>n+k$  and  $[n;j_0,\dots j_k]\cap [l;i_0,\dots i_s]\neq \emptyset$.

\bigskip
Our main result is the following  theorem.
\begin{theorem}\label{mainthm}
Let $\Omega$ be a subshift of finite type and let $\mu$  be a $T$-ergodic measure with  the bounded  distortion property
such  that ${\rm supp}\, (\mu)=\Omega$. Suppose  $T$ has at least one fixed point and at least one non-fixed  point.   

\begin{enumerate}
\item Then for $\mu$-almost every $\omega$, the operator $H_\omega$
has  pure point spectrum equal to $[0,\infty)$.  
 \item Furthermore,  there is a finite subset ${\frak X}\subset [0,2\pi]$,  independent of $\omega$,  such that
for any   eigenvalue $E$  of $H_\omega$,  if $\sqrt{E}-2\pi k\notin {\frak X}$ for  all $k=0,1,\dots$, then the corresponding eigenfunction
decays exponentially as $|x|\to\infty.$

\end{enumerate}
\end{theorem}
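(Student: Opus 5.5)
The plan is to reduce $H_\omega$ to a one-dimensional transfer-matrix problem and then run the Avila--Damanik--Zhang scheme \cite{ADZ2} for subshifts of finite type with bounded distortion. For $E=k^2>0$, a solution of $-u''=Eu$ on $\Gamma_\omega$ that is continuous at the vertices may be taken symmetric on the $\omega_n$ parallel edges. Non-symmetric parts are supported on single edge pairs and vanish at vertices, so they only occur for $\sin k=0$, i.e. $\sqrt E\in \pi\Bbb Z$. Kirchhoff's condition \eqref{Kirchhoff} then becomes a jump condition $\omega_n u'(n^+)=\omega_{n-1}u'(n^-)$. This yields a locally constant $SL(2,\Bbb R)$-cocycle $A_E(\omega)=D_{\omega_0}R_k D_{\omega_0}^{-1}$-type products: a free rotation over unit length conjugated by the diagonal matrices $\mathrm{diag}(1,\omega_n)$, normalised to determinant one. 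Generalized eigenfunctions correspond to cocycle orbits. By Sch'nol-type arguments, the spectral measure is supported on energies admitting polynomially bounded solutions.

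The first input is the earlier result \cite{Saf}, which I assume gives the following. For $\mu$-a.e. $\omega$, the Lyapunov exponent $L(E)>0$ for all $E>0$ outside a discrete set of exceptional energies. These are the energies where all the transfer matrices over one period commute or are $\pm I$, which happens when $k\in\pi\Bbb Z$ together with finitely many resonant values. The hypothesis that $T$ has a fixed point (a constant word $j^\infty$) and a non-fixed point supplies two non-commuting matrices, which gives positivity by Furstenberg-type reasoning. The reduction to $k \bmod 2\pi$ produces the finite set ${\frak X}\subset[0,2\pi]$, and the spectrum equals $[0,\infty)$ by Kotani/ergodicity plus the existence of arbitrarily long constant blocks (periodic approximants with spectrum $[0,\infty)$). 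Next I would prove a uniform large deviation estimate on compact subintervals $J$ of $(0,\infty)\setminus\{\text{exceptional}\}$:
\[
\mu\Bigl\{\omega:\ \bigl|\tfrac1n\log\|A_E^n(\omega)\|-L(E)\bigr|>\varepsilon\Bigr\}\le Ce^{-cn}.
\]
Here bounded distortion lets one treat the locally constant cocycle as nearly a Bernoulli/Markov cocycle. I would argue this via a spectral-gap transfer-operator argument, or, as in ADZ, by comparing $\mu$ with a Markov measure on cylinders up to the constant $C$. Uniformity in $E$ follows from continuity of $L$ and compactness.

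With LDT and positive $L$, the localization argument follows the Bourgain--Goldstein/Jitomirskaya route as adapted in \cite{ADZ2}. First, LDT gives, for most $\omega$ and for boxes $[-N,N]$, Green's function decay at $E$ unless $E$ is within $e^{-cN}$ of an eigenvalue of the restricted operator. Second, a generalized eigenfunction at $E$ forces non-decay near $0$, so $E$ is close to an eigenvalue of the box at the origin. Third, the key elimination of double resonances shows that, for a.e. $\omega$, there is no $E$ simultaneously resonant at the origin and on a box centred near $\pm 2N$. This is the step I expect to be the main obstacle. Without analyticity, one cannot use semialgebraic sets. Instead I would mimic ADZ: the box eigenvalues near scale $N$ depend only on finitely many coordinates. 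Bounded distortion gives approximate independence between the cylinder at $0$ and the cylinder at distance $\sim N$. For fixed eigenvalues at the origin, which are polynomially many, the probability that the far box has a resonant eigenvalue is $e^{-cN}$ by LDT. Borel--Cantelli then completes this step. Finally, on the resulting full-measure set, every polynomially bounded solution decays exponentially at energies away from the exceptional set, giving part (2). Since the exceptional energies are countable and spectral measures of a.e. $\omega$ give no weight to a fixed countable set (no eigenvalues at fixed $E$ a.s., plus compactly supported eigenfunctions at $k\in\pi\Bbb Z$ being point spectrum anyway), the spectrum is pure point, giving part (1).
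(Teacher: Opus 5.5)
Your architecture (a locally constant $\mathrm{SL}(2,\mathbb R)$ cocycle, positivity from Theorem~\ref{thm2}, ULD, elimination of double resonances by fixing the finitely many eigenvalues of an origin box and decoupling a far window by bounded distortion, then Borel--Cantelli) is the paper's. Your Weyl-sequence argument for $\sigma(H_\omega)=[0,\infty)$ via long constant blocks is fine; the paper leaves this point implicit. The endgame differs. The paper never proves decay on far boxes directly. Proposition~\ref{propo4.6} gives $\frac1m\ln\|A^E_m(T^{s_0+r}\omega)\|\ge L(E)-\varepsilon$ for short blocks $m\in\{K,2K\}$ at every $r\in[K^{10},\bar K]$. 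The Avalanche Principle then glues these into $\frac1n\ln\|A^E_n(\omega)\|\to L(E)$ at each generalized eigenvalue, and exponential decay is deduced from this regularity.

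\textbf{Gap: uniformity in the energy.} The generalized eigenvalue $E$ depends on $\omega$, so your first step (Green's function decay ``for most $\omega$'' at a given $E$) must hold on one full-measure set for all $E\in J$ simultaneously. Your plan addresses this only for the far box, through the conditioning on origin eigenvalues. You need it in two further places:
\begin{enumerate}
\item for the upper bounds $\|A^E_j(T^s\omega)\|\le e^{(L(E)+\varepsilon)j}$ in the numerator of \eqref{47};
\item to produce, at the actual $E$, good boxes flanking the origin box.
\end{enumerate}
Without the flanking boxes, $u(s_0)=1$ and polynomial boundedness only show that a determinant is atypically small. That does not put $\tilde E$ within a tiny distance of $\sigma(\cH_{\omega,[a,b]})$, because the deficit can be spread over many factors $|\tilde E-\tilde E_j|$. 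Yet your conditioning on origin eigenvalues needs exactly such a distance bound.

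A union bound over an energy net of mesh $\Gamma^{-n}$ costs a factor $\Gamma^n$, which the ULD rate $e^{-c(\varepsilon)n}$ need not beat. The paper's remedy is the chain Lemma~\ref{lemma4.3} $\Rightarrow$ Lemma~\ref{lemma4.2} $\Rightarrow$ Lemma~\ref{lemma4.1}: the average of $r$ block quantities $g_n(T^{ns+s_0}\omega,E)$ deviates from $L(E)$ with probability at most $e^{-c\varepsilon^2 r}$. With $r=n^4$ (Proposition~\ref{propo4.4}) this beats both the $\Gamma^n$-point net and the $e^{n^{7/2}}$ shifts, and \eqref{32} passes from the net to all of $I$. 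This yields Corollary~\ref{coln1}, which gives the upper bounds via blocks of length $\lceil n^{1/5}\rceil$. It also yields Claim~1, which gives good boxes of length about $K^3$ on both sides of $s_0$. Claim~2 then gives $\|G^E_{\omega,[a,b]}\|\ge e^{K^2}$. Your plan has no counterpart to this mechanism.

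\textbf{Two related points.} First, in your single-scale picture (origin box $[-N,N]$, threshold $e^{-cN}$, far box of length $\sim N$) you cannot transfer the far-box event from an origin eigenvalue to the true $E$. With a priori bounds that event is stable only under perturbations of size about $\Gamma^{-N}$, and nothing makes $e^{-cN}$ that small. The paper separates scales: the origin box has length at most $2K^9+1$, the threshold is $e^{-K^2}$, and the far blocks have length $K$ or $2K$, so $C\Gamma^{2K}e^{-K^2}$ is negligible. That separation is exactly why the paper then needs the Avalanche Principle to reassemble the short blocks.

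Second, uniformity of the ULD constants in $E$ does not follow from continuity of $L$ and compactness. In the paper it comes from uniqueness and weak-$*$ continuity of the u-state on $I$, via Lemma~\ref{lemma3.8} and the martingale argument of Theorem~\ref{thm3.6}. This is why $I$ must also avoid the su-state energies $\mathcal E$, not only the zeros of $L$ and the points $(\pi n)^2$.
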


This theorem can  be viewed as an analogue of the result by Avila, Damanik, and Zhang \cite{ADZ2} in which  their framework is changed  
from standard one-dimensional discrete Schrödinger operators to a different setting of quantum graphs while maintaining similar underlying dynamics.
Although we adopt the methodology developed by Avila, Damanik, and Zhang \cite{ADZ2}, certain elements of our proofs differ  to accommodate our specific setting.

\section{Positivity of the Lyapunov exponent. Large  deviations}

The proof of Theorem~\ref{mainthm} hinges on the positivity of the Lyapunov exponent, a property established in \cite{Saf}. Crucially, the conditions on \(\mu \) in \cite{Saf} were even weaker than those required here; it sufficed to assume that \(\mu \) exhibits a local product structure.

To formalize this structural property, we first partition the underlying dynamics into the past and the  future.  Let \(\Omega _{+}\) and \(\Omega _{-}\) denote the spaces of semi-infinite sequences:
$$
\Omega_+=\{\{\omega_n\}_{n\geq0}:\,\, \omega\in \Omega\}
\quad\text{
and 
}\quad
\Omega_-=\{\{\omega_n\}_{n\leq0}:\,\, \omega\in \Omega\}.
$$
Then using  the natural projection  $\pi_\pm$
 from $\Omega$ onto $\Omega_\pm$,  we  define  $\mu_\pm=(\pi_\pm)_*\mu$  on $\Omega_\pm$  to
 be
the pushforward measures of $\mu$.  After  that,  for each $1\leq j\leq \ell$, we introduce
 the
cylinder sets
$$
[0;j]=\{\omega\in\Omega:\,\, \omega_0=j\}\quad
\text{and}\quad
[0;j]_\pm=\{\omega\in\Omega_\pm:\,\, \omega_0=j\}.
$$ 
A local product structure  is  a relation between   the measures $\mu_j=\mu\bigl|_{[0;j]}$  and   the measures $\mu_j^\pm=\mu_\pm\bigl|_{[0;j]}$.
To describe this relation, we need  to consider
the 
natural homeomorphisms
$$P_j: [0;j]\to [0;j]_-\times [0;j]_+   $$ 
defined  by  $$P_j(\omega)=\bigl(\pi_-\omega,\pi_+\omega\bigr),\qquad \forall \omega\in \Omega.$$

\smallskip

{\it Definition}. We say that $\mu$  has a local product structure if there is a positive  density   $\psi:\Omega\to (0,\infty)$
such that for each $1\leq j\leq \ell$, the function  $\psi\circ P_j^{-1}$  belongs to $ L^1\bigl( [0;j]_-\times [0;j]_+, \mu^-_j\times\mu^+_j\bigr)$
and
$$
\bigl(P_j\bigr)_*d\mu_j=\psi\circ P_j^{-1}\, d( \mu^-_j\times\mu^+_j).
$$

\begin{proposition}  Let $k\neq \pi l$  for all $l \in {\Bbb Z}$.  Let $u\in C(\Gamma_\omega)$   be an absolutely  continuous  solution of the  equation
\[\label{u''=k}
-u''(x)=k^2 u(x),\qquad  \text{for  a.e. } \quad  x\in \Gamma_\omega.
\]
satisfying  Kirchhoff's condition \eqref{Kirchhoff} at each $n\in {\Bbb Z}$. Then $u$ satisfies the discrete recurrence relation
\begin{equation}\label{SEquation}
\omega_n u(n+1)+\omega_{n-1}u(n-1) -(\omega_n+\omega_{n-1}) \cos(k) u(n)=0.
\end{equation} 
\end{proposition}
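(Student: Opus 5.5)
On each edge $I_{n,j}=[n,n+1]$, equation \eqref{u''=k} is a second-order ODE with constant coefficients, so $u_{n,j}$ is a linear combination of $\cos(k(x-n))$ and $\sin(k(x-n))$. Continuity of $u$ on $\Gamma_\omega$ fixes the endpoint values $u_{n,j}(n)=u(n)$ and $u_{n,j}(n+1)=u(n+1)$, independently of $j$. Since $k\neq \pi l$, we have $\sin k\neq 0$, and the Dirichlet problem on $[n,n+1]$ has a unique solution. Explicitly,
\[\notag
u_{n,j}(x)=\frac{u(n)\sin\bigl(k(n+1-x)\bigr)+u(n+1)\sin\bigl(k(x-n)\bigr)}{\sin k},
\]
so all $\omega_n$ copies carry the same function. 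This is the only place where the hypothesis $k\notin\pi{\Bbb Z}$ enters.

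Next we differentiate this formula to obtain the derivatives at both ends of the edge:
\[\notag
u'_{n,j}(n)=\frac{k\bigl(u(n+1)-\cos(k)\,u(n)\bigr)}{\sin k},\qquad
u'_{n-1,j}(n)=\frac{k\bigl(\cos(k)\,u(n)-u(n-1)\bigr)}{\sin k},
\]
where the second identity comes from the same formula applied on $[n-1,n]$ at $x=n$.

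Summing over $j$ multiplies the first expression by $\omega_n$ and the second by $\omega_{n-1}$. Substituting into Kirchhoff's condition \eqref{Kirchhoff} and multiplying by $\sin(k)/k$ gives
$$\omega_n\bigl(u(n+1)-\cos(k)\, u(n)\bigr)=\omega_{n-1}\bigl(\cos(k)\, u(n)-u(n-1)\bigr).$$
Rearranging yields exactly \eqref{SEquation}. For $k=0$ the statement is excluded by hypothesis, since $0=\pi\cdot 0$, so dividing by $k$ is harmless.

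There is no serious obstacle. The only point needing care is the regularity claim: we should note that an absolutely continuous $u$ satisfying the equation a.e. on each open edge is in fact smooth there. This follows because $u'$ is absolutely continuous on each edge, which is how we read the statement, so the ODE holds classically. The boundary derivatives are then well defined and the explicit formula above is justified.
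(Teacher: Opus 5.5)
Your proposal is correct and follows the paper's proof essentially verbatim. Both use the Dirichlet interpolation formula on each edge (valid since $\sin k\neq 0$), compute the two endpoint derivatives, and substitute them into Kirchhoff's condition weighted by $\omega_n$ and $\omega_{n-1}$. Your extra remarks make explicit two points the paper leaves implicit: all $\omega_n$ parallel edges carry the same function, and the hypothesis should be read as $u'$ being absolutely continuous on each edge.
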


{\it Proof}.
The  solution \(u(x)\) of \eqref{u''=k} on the interval \([n, n+1]\) can be expressed in terms of its boundary values \(u(n)\) and \(u(n+1)\) at the endpoints:\[\notag
u(x)=u(n)\frac{\sin (k(n+1-x))}{\sin (k)}+u(n+1)\frac{\sin (k(x-n))}{\sin (k)},\quad x\in [n,n+1].\]
Taking the derivative of this solution with respect to \(x\),
we obtain \[\notag
u^{\prime }(x)=-ku(n)\frac{\cos (k(n+1-x))}{\sin (k)}+ku(n+1)\frac{\cos (k(x-n))}{\sin (k)}.\]
Therefore, evaluating this at \(x \to n^+\) yields the outgoing right derivative \(u'(n^+)\):\[\notag
u^{\prime }(n^+)=-ku(n)\frac{\cos (k)}{\sin (k)}+ku(n+1)\frac{1}{\sin (k)}.\]
Similarly,  the 
 left derivative \(u'(n^-)\)equals \[\notag
 u^{\prime }(n^-)=-ku(n-1)\frac{1}{\sin (k)}+ku(n)\frac{\cos (k)}{\sin (k)}.\]
Kirchhoff's matching condition at each vertex \(n \in \mathbb{Z}\) weights the outward directed derivatives by the corresponding number of  edges  \(\omega _{n}\) and \(\omega _{n-1}\). The total flux out of vertex \(n\) must equal zero:\[\notag
\omega _{n}u^{\prime }(n^+)-\omega _{n-1}u^{\prime }(n^-)=0.\]
Consequently,
\[\notag
\omega _{n}\Big(-u(n)\cos (k)+u(n+1)\Big)-\omega _{n-1}\Big(-u(n-1)+u(n)\cos (k)\Big)=0.\]
$\,\,\,\Box$

\bigskip
We will say that $E=k^2$ is a generalized eigenvalue for $H_\omega$, if \eqref{u''=k}  has a nontrivial solution obeying
\[\label{genereig}
|u(n)|\leq C_u (1+|n|),\qquad \forall n\in {\Bbb Z}.
\]
A solution satisfying \eqref{genereig} is called a  generalized  eigenfunction.

It  turns  out that spectral properties of $H_\o$ are  closely related  to that  of the operator ${\mathcal H}_\o$ which is defined on $\ell^2({\Bbb Z})$  by 
\[\label{JacobiH}
\bigl[{\mathcal H}_\o u\bigr](n)=\frac{2\o_n u(n+1)}{\sqrt{(\o_{n+1}+\o_n)(\o_n+\o_{n-1})}}+\frac{2\o_{n-1} u(n-1)}{\sqrt{(\o_{n-1}+\o_{n-2})(\o_n+\o_{n-1})}}
\]
for each $u\in \ell^2({\Bbb Z})$.  Generalized  eigenvalues and generalized  eigenfunctions  for ${\mathcal H}_\o$  are introduced in the same way as  for $H_\o$.

\begin{proposition} Let $k\neq \pi l$  for all $l\in {\Bbb Z}$.
The  point $E=k^2>0$ is  a  generalized eigenvalie  of the operator $H_\o$ if and only  if
$2\cos k$ is a  generalized eigenvalie  of the operator ${\mathcal H}_\o$. 
\end{proposition}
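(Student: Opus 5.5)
The plan is to pass between the two problems through a diagonal change of variables that turns the recurrence \eqref{SEquation} into the eigenvalue equation for ${\mathcal H}_\o$, and to check that this change of variables preserves the growth condition \eqref{genereig}.

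\textbf{Step 1: reduce to the discrete equation.} By the preceding Proposition, every solution $u$ of \eqref{u''=k} satisfying Kirchhoff's condition \eqref{Kirchhoff} gives a sequence $u(n)$ obeying \eqref{SEquation}. Conversely, let a sequence $\{u(n)\}$ obey \eqref{SEquation}. Define $u$ on every copy $I_{n,j}$ by the sine-interpolation formula from that proof. This is legitimate because $\sin k\neq 0$. The resulting function is continuous on $\Gamma_\o$, solves $-u''=k^2u$ on each edge, and by the computation already carried out satisfies \eqref{Kirchhoff} at every vertex. Hence the correspondence $u\mapsto\{u(n)\}$ is a bijection between Kirchhoff solutions of \eqref{u''=k} and solutions of \eqref{SEquation}. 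The bound \eqref{genereig} refers only to the values $u(n)$, so the polynomial growth condition is the same on both sides. In particular, nontriviality transfers: if $u(n)=0$ for all $n$, the interpolation formula forces $u\equiv 0$.

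\textbf{Step 2: symmetrize.} Put $d_n=\o_n+\o_{n-1}$ and $v(n)=\sqrt{d_n}\,u(n)$. Divide \eqref{SEquation} by $\sqrt{d_n}$ and substitute $u(n\pm1)=v(n\pm1)/\sqrt{d_{n\pm1}}$. Here $d_{n+1}=\o_{n+1}+\o_n$ and $d_{n-1}=\o_{n-1}+\o_{n-2}$, so the equation becomes
\[\notag
\frac{\o_n v(n+1)}{\sqrt{d_{n+1}d_n}}+\frac{\o_{n-1}v(n-1)}{\sqrt{d_{n-1}d_n}}=\cos(k)\,v(n).
\]
Multiplying by $2$ gives exactly ${\mathcal H}_\o v=2\cos(k)\,v$ with ${\mathcal H}_\o$ as in \eqref{JacobiH}. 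Every step is reversible, so solutions of \eqref{SEquation} correspond bijectively to formal solutions of ${\mathcal H}_\o v=2\cos k\,v$.

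\textbf{Step 3: transfer the growth bound.} Since $\o_n\in\{1,\dots,\ell\}$, we have $2\le d_n\le 2\ell$. Therefore $|v(n)|\le\sqrt{2\ell}\,|u(n)|$ and $|u(n)|\le|v(n)|/\sqrt2$, so \eqref{genereig} holds for $u$ if and only if it holds for $v$, with a modified constant. Nontriviality is likewise preserved.

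Combining Steps 1--3 gives the equivalence: $E=k^2$ is a generalized eigenvalue of $H_\o$ exactly when $2\cos k$ is a generalized eigenvalue of ${\mathcal H}_\o$. The only delicate point is the converse direction in Step 1. One must verify that the interpolated function really satisfies Kirchhoff's condition and is well defined on all $\o_n$ parallel edges simultaneously. This holds because every copy $I_{n,j}$ carries the same interpolant, and the flux computation in the previous proof is an identity that can be read in either direction.
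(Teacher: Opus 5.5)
Your proof is correct and follows the paper's argument. The paper's proof is exactly the substitution $v(n)=\sqrt{\o_n+\o_{n-1}}\,u(n)$ together with the observation that $\o_n+\o_{n-1}$ is bounded above and below, so the polynomial growth condition transfers; you go further by checking explicitly that this substitution carries \eqref{SEquation} into ${\mathcal H}_\o v=2\cos k\, v$, and by spelling out the converse direction (sine interpolation on all parallel edges, with Kirchhoff's condition recovered), which the paper leaves implicit.
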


{\it Proof.}  It is  enough to note  that  the corresponding  generalized eiegnfunctions
of the two operators ${\mathcal H}_\o$ and $H_\o$  are related to each other  by
\[\notag
v(n)=\sqrt{\omega _{n}+\omega _{n-1}}\,u(n).
\]
Because the weights \(\omega _{n}\) are bounded from above and below by strictly positive constants, the polynomial growth condition on $u(n)$:\[\notag
|u(n)|\le C_{u}(1+|n|)\]
translates directly to the sequence $v(n)$:\[\notag
|v(n)|\le C_{v}(1+|n|).\]
$\,\,\Box$

\bigskip

Spectral properties of $H_\omega$ are related to the 
behavior of  solutions to the equation
\eqref{SEquation}.
On the other hand,   all  solutions
to \eqref{SEquation} can be described in terms of the cocycles  $(T, A^E)$  with  $A^{E}:\, \Omega\to{\rm SL}(2, {\Bbb R})$ defined by
\begin{equation}\label{defineA}
 A^E(\omega)= \sqrt{\frac{\omega_{0}}{\omega_{-1}}} \begin{pmatrix}
\frac{\omega_0+\omega_{-1}}{\omega_0}\cos(k)& -\frac{\omega_{-1}}{\omega_0}\\
1& 0
\end{pmatrix}, \qquad E=k^2.
\end{equation}
Namely,  $u$ is a solution of \eqref{SEquation} if  and only if
$$
\begin{pmatrix} u(n)\\ u(n-1) \end{pmatrix}= \sqrt{\frac{\omega_{-1}}{\omega_{n-1}}}A^E_n(\omega)\cdot  \begin{pmatrix} u(0)\\ u(-1) \end{pmatrix},\qquad \forall n\in {\Bbb Z},
$$
where
$$
A^E_n(\omega)=\begin{cases}A^E(T^{n-1}\omega)\cdots A^E(\omega)\quad \text{if}\quad n\geq 1;\\
[A^E_{-n}(T^n\omega)]^{-1}
\quad \text{if}\quad n\leq -1;\\ 
{\rm Id}\quad \text{if}\quad n=0.
\end{cases}
$$
The Lyapunov exponent  for $A^E$ and $\mu$
is defined by
$$
L(E)=\lim_{n\to\infty}\frac1n \int \ln(\|A^E_n(\omega)\|)d\mu(\omega).
$$
Clearly,  $L(E)\geq 0$.
By Kingman’s subaddive ergodic theorem, 
$$
\frac1n \ln(\|A^E_n(\omega)\|)\qquad \text{converges to}\quad L(E)\qquad \text{as}\quad n\to \infty,
$$
for $\mu$-almost every $\omega\in \Omega$.

The  main result of \cite{Saf} desribes the properties of   the set
\[\notag
{\frak L}(\mu)=\{E\in [0,4\pi^2]:\,\, L(E)=0\}.
\]

\begin{theorem}\label{thm2}Let \(T:\Omega\to\Omega\) be a subshift of finite type and \(\mu \) be a \(T\)-ergodic probability measure with full support (\({\rm supp }\,\mu=\Omega\)) and a local product structure.   Assume \(T\) possesses at least one fixed point and at least one non-fixed point \(\omega\in \Omega\). Then the set \({\frak L}(\mu)\) is finite.  Furthermore, if \(E>4\pi^2\) satisfies \((\sqrt E-2\pi n)^2 \notin {\frak L}(\mu)\) for all \(n=1,2,\dots\), then \(L(E)>0\).
\end{theorem}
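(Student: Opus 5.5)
The second assertion of Theorem~\ref{thm2} follows from the first by periodicity, so I would handle it last. Observe from \eqref{defineA} that $A^E(\omega)$ depends on $E=k^2$ only through $c=\cos k$. Given $E>4\pi^2$, choose $n\geq 1$ with $\sqrt E-2\pi n\in[0,2\pi)$ and set $E'=(\sqrt E-2\pi n)^2\in[0,4\pi^2]$. Then $\cos\sqrt{E'}=\cos\sqrt E$, so $A^{E'}\equiv A^{E}$ and $L(E)=L(E')$. If $E'\notin{\frak L}(\mu)$, then $L(E)>0$. Hence everything reduces to proving that ${\frak L}(\mu)$ is finite. Since $k\mapsto k^2$ is a bijection from $[0,2\pi]$ onto $[0,4\pi^2]$, it suffices to show that $\{k\in[0,2\pi]: L(k^2)=0\}$ is finite. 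The finitely many points $k\in\{0,\pi,2\pi\}$, where $\sin k=0$, can simply be added to the exceptional set.

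For the finiteness I would use the invariance principle of Avila--Viana, in the form used by Avila--Damanik--Zhang for locally constant cocycles. The cocycle $A^E$ depends only on $(\omega_{-1},\omega_0)$, so it is locally constant. The measure $\mu$ has a local product structure and full support. Under these hypotheses, $L(E)=0$ forces the stable and unstable holonomies to agree, and gives a continuous $A^E$-invariant family of conformal structures on $\Omega$. In particular, all periodic data are simultaneously conjugate into $SO(2)$ in a compatible way.

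Concretely, let $p=(\dots,a,a,a,\dots)$ be the fixed point. Then $A^E(p)=\begin{pmatrix}2c&-1\\1&0\end{pmatrix}$, which is elliptic for $c\in(-1,1)$ and preserves a unique conformal structure $\xi_p$. Next, use the non-fixed point together with full support and the finite-type structure. This produces a point $q$ homoclinic to $p$ whose orbit passes through a letter $b\neq a$. Because the cocycle is locally constant, the holonomy $h_q$ along this homoclinic excursion is a finite product $M(c)=A^E(T^{m}q)\cdots A^E(T^{-m}q)$ whose entries are polynomials in $c$. Vanishing of $L$ forces $M(c)$ to map $\xi_p$ to itself. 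That is, $M(c)$ commutes with the rotation group conjugate to $A^E(p)$. Writing $R_c$ for a conjugacy $R_c\,A^E(p)\,R_c^{-1}\in SO(2)$, which depends algebraically on $c$, this condition becomes the vanishing of an explicit function, for example
\[\notag
F(c)=\bigl\|R_cM(c)R_c^{-1}\bigr\|_{HS}^2-2,
\]
or equivalently of a polynomial $P(c)$ obtained after clearing the square roots $\sqrt{1-c^2}$. Thus every $k$ with $L(k^2)=0$ and $\sin k\neq 0$ gives a root $c=\cos k$ of $P$. Since $\cos$ is finite-to-one on $[0,2\pi]$, finiteness follows provided $P\not\equiv0$.

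The main obstacle is showing that $P$ is not identically zero. I would attack this by looking at the leading behaviour in $c$. The product $M(c)$ contains factors $A^E(T^j q)$ with $(\omega_{j-1},\omega_j)$ differing from $(a,a)$. Hence the diagonal weights $(\omega_j+\omega_{j-1})/\omega_j$ and the off-diagonal weights $\omega_{j-1}/\omega_j$ differ from those of the fixed point. The top-degree term of $M(c)$ in $c$ is then a rank-one matrix with coefficient $\prod_j (\omega_j+\omega_{j-1})/\omega_j$. I would check that this coefficient is incompatible with $M$ preserving $\xi_p$ identically in $c$; for instance, evaluate at a real $c$ with $|c|>1$, continued analytically, where the comparison is a hyperbolic-versus-elliptic one. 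If the leading-term argument is inconclusive, the fallback is to compare with the pure $b$-excursion, or to use two different homoclinic loops and show that their holonomies cannot both commute with $A^E(p)$ for all $c$.
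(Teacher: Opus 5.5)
The paper does not prove Theorem~\ref{thm2}; it quotes it from \cite{Saf}. Judging from the su-state material the paper recalls, \cite{Saf} follows the same invariance-principle route you take, so your proposal has to stand on its own. Your reduction of the second assertion to finiteness of ${\frak L}(\mu)$ via $c=\cos\sqrt E$ is correct. The homoclinic excursion through a letter $b\neq a$ also exists, since ergodicity together with full support makes $T$ transitive on $\Omega$.

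The first gap is the step from $L(E)=0$ to ``$M(c)$ fixes $\xi_p$''. The invariance principle gives a continuous family of probability measures $m_\omega$ on ${\Bbb R\Bbb P}^1$ invariant under $A^E$ and the holonomies, not a family of conformal structures. So you only know that $m_p$ is invariant under $A^E(p)$ and under $M(c)$. For $k/\pi\notin{\Bbb Q}$ this is enough: $m_p$ is then the unique $A^E(p)$-invariant measure and it determines $\xi_p$. For $k/\pi\in{\Bbb Q}$, however, $A^E(p)$ has finite order and preserves other measures, for example atomic measures on finite orbits. These $k$ are dense, so they cannot be added to the exceptional set. The missing step is Furstenberg's lemma. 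If $\langle A^E(p),M(c)\rangle$ is not relatively compact, every invariant probability measure is carried by at most two directions. An elliptic $A^E(p)$ can preserve such a set only if $A^E(p)^2=-{\rm Id}$, i.e.\ $c=0$. This value is genuinely exceptional, since all $A^E(\omega)$ are antidiagonal there. Otherwise the group is relatively compact and fixes a point of the upper half-plane, which must be $\xi_p$. No square roots are needed: for $M=\begin{pmatrix}\alpha&\beta\\ \gamma&\delta\end{pmatrix}$, $[M,A^E(p)]=0$ iff $\beta+\gamma=0$ and $\alpha-\delta=2c\gamma$.

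The second and more serious gap is the non-degeneracy $P\not\equiv0$, which you leave open. Your leading-term heuristic cannot settle it as stated. The top-degree parts of your product $M(c)$ and of $A^E(p)$ are both multiples of $\begin{pmatrix}1&0\\0&0\end{pmatrix}$, so the top coefficient of the commutator vanishes identically. For $|c|>1$, commuting with the hyperbolic $A^E(p)$ is no contradiction by itself. The comparison works one degree lower:

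1. Take $q$ with $q_n=a$ for $n\le0$ and for $n\ge N$, and with $q_{N-1}\neq a$.
2. The homoclinic holonomy is then $A^E(p)^{-1}\Pi A^E(p)$ with $\Pi=A^E(T^Nq)\cdots A^E(Tq)$. It therefore suffices to test $\Pi=\begin{pmatrix}\alpha&\beta\\ \gamma&\delta\end{pmatrix}$.
3. The scalar prefactors telescope to $\sqrt{q_N/q_0}=1$. With $t_j=(q_{j-1}+q_j)/q_j$, an induction gives: $\alpha$ has degree $N$ and leading coefficient $\prod_{j=1}^{N}t_j$; $\gamma$ has degree $N-1$ and leading coefficient $\prod_{j=1}^{N-1}t_j$; and $\deg\delta\le N-2$.
4. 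Hence the coefficient of $c^N$ in $\alpha-\delta-2c\gamma$ is $(t_N-2)\prod_{j=1}^{N-1}t_j$. This is nonzero, because $t_N=2$ would force $q_{N-1}=q_N$.

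Adding the values $c\in\{0,\pm1\}$ leaves finitely many admissible $c$, hence finitely many $k\in[0,2\pi]$, and ${\frak L}(\mu)$ is finite.
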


To further analyze the dynamics, we introduce the geometric structure of stable and unstable sets.

{\it Definition }.   Let $T:\Omega\to \Omega$ be a subshift of finite type.
The local stable set of a point $\omega\in \Omega$ is defined by
$$
W_{\rm loc}^s(\omega)=\{\omega'\in\Omega:\,\, \omega'_n=\omega_n\quad \text{for}\quad n\geq 0\}
$$
and the local unstable set of $\omega$  is defined by
$$
W_{\rm loc}^u(\omega)=\{\omega'\in\Omega:\,\, \omega'_n=\omega_n\quad \text{for}\quad n\leq 0\}.
$$

\bigskip

For $\omega'\in W_{\rm loc}^s(\omega)$,  we define  the relative transfer matrix $H_{\omega',\omega}^{s,n}$  by
$$
H_{\omega,\omega'}^{s,n}=\bigl[A^E_n(\omega')\bigr]^{-1}A^E_n(\omega).
$$
Since $d(T^j\omega',T^j\omega)\leq e^{-j}$ tends to $0$ as $j\to \infty$,  there is an index $n_0$  for which
$$
H_{\omega,\omega'}^{s,n}=H_{\omega,\omega'}^{s,n_0}\qquad \text{for }\quad  n\geq n_0.
$$
In this case, we define  the stable  holonomy  $H_{\omega,\omega'}^s$ by
$$
H_{\omega,\omega'}^s=H_{\omega,\omega'}^{s,n_0}.
$$
The unstable holonomy $H_{\omega,\omega'}^u$  for $\omega'\in W_{\rm loc}^u(\omega)$  is defined  analogously by
$$
H_{\omega,\omega'}^u=
\bigl[A^E_n(\omega')\bigr]^{-1}A^E_n(\omega)   \qquad \text{for all}\quad n\leq -n_0.
$$

These abstract definitions  of  holonomies work not only for  the cocycle \eqref{defineA},  but also  for  any locally constant  function  $A^E:\Omega\to{\rm SL}(2,{\Bbb R})$.
However, if $A^E$ is   defined by \eqref{defineA}, then the matrices $H_{\omega,\omega'}^s$ and $H_{\omega,\omega'}^u$  become  very specific.

\begin{proposition}
Let $A^E$ be  defined  in \eqref{defineA}.  Then
\[
H_{\omega,\omega'}^{s}=\bigl[A^E(\omega')\bigr]^{-1}A^E(\omega), \qquad  \text{for any}\quad \omega'\in W^s(\omega).
\]
Similarly,
\[
H_{\omega,\omega'}^{u}={\rm Id}, \qquad  \text{for any}\quad \omega'\in W^u(\omega).
\] 
\end{proposition}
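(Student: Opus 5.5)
The plan is to compute the relative transfer matrices $\bigl[A^E_n(\omega')\bigr]^{-1}A^E_n(\omega)$ directly from the product formula for $A^E_n$. The only structural input is that the matrix \eqref{defineA} is locally constant with a very short window: $A^E(\omega)$ depends only on the two coordinates $\omega_0$ and $\omega_{-1}$. Consequently $A^E(T^j\omega)$ depends only on $\omega_j$ and $\omega_{j-1}$. Everything then reduces to checking which factors in the cocycle products coincide for $\omega$ and $\omega'$.

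\emph{Stable case.} Let $\omega'\in W^s_{\rm loc}(\omega)$, so $\omega'_n=\omega_n$ for all $n\ge 0$. For $j\ge 1$ both indices $j$ and $j-1$ are nonnegative, hence $A^E(T^j\omega')=A^E(T^j\omega)$. For $n\ge 1$ I will write
$$A^E_n(\omega)=M_n A^E(\omega),\qquad A^E_n(\omega')=M_n A^E(\omega'),\qquad M_n=A^E(T^{n-1}\omega)\cdots A^E(T\omega),$$
with $M_n={\rm Id}$ when $n=1$. Each factor has determinant one, so $M_n$ is invertible and cancels:
$$\bigl[A^E_n(\omega')\bigr]^{-1}A^E_n(\omega)=\bigl[A^E(\omega')\bigr]^{-1}M_n^{-1}M_nA^E(\omega)=\bigl[A^E(\omega')\bigr]^{-1}A^E(\omega).$$
This expression does not depend on $n\ge1$. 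One may therefore take $n_0=1$, and the stable holonomy equals $\bigl[A^E(\omega')\bigr]^{-1}A^E(\omega)$, as claimed.

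\emph{Unstable case.} Let $\omega'\in W^u_{\rm loc}(\omega)$, so $\omega'_n=\omega_n$ for all $n\le 0$. For $n\le -1$ the definition gives
$$A^E_n(\omega)=\bigl[A^E_{-n}(T^n\omega)\bigr]^{-1}=\bigl[A^E(T^{-1}\omega)\cdots A^E(T^{n}\omega)\bigr]^{-1}.$$
This involves only factors $A^E(T^j\omega)$ with $n\le j\le -1$. Each such factor depends on $\omega_j$ and $\omega_{j-1}$, both of which have nonpositive index. These coordinates coincide with those of $\omega'$, so $A^E_n(\omega')=A^E_n(\omega)$ for every $n\le -1$. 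Hence the relative matrix is ${\rm Id}$ for all $n\le -1$; again $n_0=1$ works and $H^u_{\omega,\omega'}={\rm Id}$.

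There is no real obstacle. The only point requiring care is bookkeeping of indices: one must confirm that $A^E(T^j\omega)$ involves $\omega_j,\omega_{j-1}$ (and not $\omega_{j+1}$). That asymmetry is precisely why the stable holonomy retains the single factor at $j=0$, which sees $\omega_{-1}$, while the unstable holonomy is trivial.
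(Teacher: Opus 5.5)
Your proof is correct and is the same direct computation the paper relies on. The paper states the proposition without proof and later notes only that $A^E(\omega)$ depends on $\omega_{-1},\omega_0$ ("the past"); hence the factors with $j\ge 1$ cancel in the stable case, and the factors with $j\le -1$ coincide in the unstable case. You were also right to read $W^s(\omega)$ and $W^u(\omega)$ as the local sets $W^s_{\rm loc}(\omega)$ and $W^u_{\rm loc}(\omega)$: the holonomies are only defined there, and for points of the global stable or unstable sets the formulas would in general acquire extra factors.
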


The general theory of dynamical systems  tells us that
the  cocycle
$$
(T,A^E): \Omega\times {\Bbb R \Bbb P}^1\to {\Bbb R \Bbb P}^1
$$
defined  by 
$$
(T, A^E)(\omega,\xi)= (T\omega, A^E(\omega) \xi)
$$
has an invariant probability measure $m$  on $\Omega\times {\Bbb R \Bbb P}^1.$ We say that such a measure $m$ projects to $\mu$
if $m(\Delta \times {\Bbb R \Bbb P}^1)=\mu(\Delta)$  for all Borel subsets $\Delta$ of $\Omega$.  Given any $T$-invariant  measure $\mu$  on $\Omega$,
   one can  find a $(T,A^E)$-invariant measure  $m$  that projects to $\mu$ by applying  the standard Krylov-Bogolyubov trick used to construct  invariant measures.

\smallskip

%%%%%%%%%%%%%%%%%%%%%%%%%%%%%%%%%%

{\it Definition}.  Suppose $m$ is  a $(T, A^E)$-invariant probability measure 
on $\Omega\times {\Bbb R \Bbb P}^1$
that projects to $\mu$.   A disintegration of $m$ 
 is a measurable family $\{m_\omega:\quad \omega\in \Omega\}$ of probability measures on ${\Bbb R \Bbb P}^1$
having the property
$$
m(D)=\int_\Omega m_\omega(\{\xi\in {\Bbb R \Bbb P}^1:\,\, (\omega,\xi)\in D\})d\mu(\omega)
$$
for each measurable set  $D\subset \Omega\times{\Bbb R \Bbb P}^1.$

\smallskip
  Existence of  such a disintegration is guaranteed 
by Rokhlin’s theorem. 
Moreover, if
$\{\tilde m_\omega:\quad \omega\in \Omega\}$  is another disintegration of $m$ then  $m_\omega=\tilde m_\omega$  for $\mu$-almost
every $\omega\in \Omega$. 
It is  easy to see  that $m$  is
$(T, A^E)$-invariant if and only if $A^E(\omega)_*m_\omega=m_{T\omega}$
for $\mu$-almost every  $\omega\in \Omega$.

\smallskip

{\it Definition}.  
A $(T,A^E)$-invariant measure $m$  on $\Omega\times {\Bbb R \Bbb P}^1$ that projects to $\mu$ is said to be an s-state for $A^E$ provided it has 
a disintegration $\{m_\omega:\quad \omega\in \Omega\}$  such that   for  $\mu$-almost every $\omega\in \Omega $, 
\begin{enumerate}
\item $$\quad A^E(\omega)_* m_\omega=m_{T \omega},$$
\item  $$\bigl( H^s_{\omega,\omega'}\bigr)_*m_\omega=m_{\omega'}\qquad
\text{ for every}\quad \omega'\in W^s(\omega).$$
\end{enumerate}

\smallskip

{\it Definition}.  
A $(T,A^E)$-invariant measure $m$  on $\Omega\times {\Bbb R \Bbb P}^1$ that projects to $\mu$ is said to be a u-state for $A^E$ provided it has 
a disintegration $\{m_\omega:\quad \omega\in \Omega\}$  such that   for  $\mu$-almost every $\omega\in \Omega $, 
\begin{enumerate}
\item$$\quad A^E(\omega)_* m_\omega=m_{T \omega},$$
\item $$ \bigl( H^u_{\omega,\omega'}\bigr)_*m_\omega=m_{\omega'}\qquad
\text{ for every}\quad \omega'\in W^u(\omega).$$
\end{enumerate}

\bigskip

Let $\mathcal{E}$  be the set of energies $E\geq 0$  for which  there is a measure $m$
that  is both  a u-state and an s-state.   It was  shown in \cite{Saf} that the set $\mathcal{E}$  is discrete.
It is convenient to add  the squares of integer multiples of $\pi$   to $\mathcal{F}$, and consider the union
\[\notag
\tilde {\mathcal E}={\mathcal E} \cup\bigcup_{n=0}^\infty \{ (\pi n)^2 \}.
\]
For $\eta>0$,  we define the set 
\[\notag
B_\eta(\tilde {\mathcal E})=\{ E> 0:\,\,  {\rm dist} (E,\tilde{ \mathcal{E}}) <  \eta \}.
\]
In what follows, $I$  is any compact  subinterval of $[0,\infty)$  that does not intersect $B_\eta(\tilde {\mathcal E})$.

\begin{lemma}
For every $E\in I$ the cocycle $A^E$ has a  unique u-state $m^{u,E}$.
It  depends on $E\in I$ continuously in the weak-$*$ topology.
\end{lemma}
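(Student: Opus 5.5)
For every $E\in I$ the Lyapunov exponent is positive, and I plan to build the u-state from the Oseledets unstable direction, then get uniqueness and continuity by standard compactness arguments. Since $I\cap B_\eta(\tilde{\mathcal E})=\emptyset$, Theorem~\ref{thm2} and the discreteness of ${\mathcal E}$ from \cite{Saf} should give $L(E)>0$ on $I$. Here I am assuming, as in \cite{Saf}, that a zero exponent forces an su-state, so that energies with $L(E)=0$ lie in ${\mathcal E}$. Oseledets' theorem then gives measurable directions $e^u(\omega)$ and $e^s(\omega)$ in ${\Bbb R\Bbb P}^1$, distinct for $\mu$-a.e.\ $\omega$. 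The direction $e^u(\omega)$ is determined by the past $\{\omega_n\}_{n<0}$ (more precisely by $\omega_n$ with $n\le 0$, since $A^E(\omega)$ depends on $\omega_0,\omega_{-1}$). It is the limit of the images of $A^E_{-n}(\omega)^{-1}$ as $n\to\infty$, that is, of $A^E_n(T^{-n}\omega)$ applied to generic directions.

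For existence, I set $m^{u,E}_\omega=\delta_{e^u(\omega)}$. Invariance $A^E(\omega)_*m_\omega=m_{T\omega}$ is the equivariance of the Oseledets splitting. The u-holonomy condition needs $e^u(\omega')=H^u_{\omega,\omega'}e^u(\omega)=e^u(\omega')$ for $\omega'\in W^u_{\rm loc}(\omega)$, and this holds because $H^u={\rm Id}$ by the Proposition above and $e^u$ depends only on the coordinates $n\le 0$.

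For uniqueness, I take any u-state $m$ with disintegration $\{m_\omega\}$. Invariance gives $m_\omega=A^E_n(T^{-n}\omega)_*m_{T^{-n}\omega}$. I then use the u-holonomy invariance together with the local product structure implied by bounded distortion. The aim is to show that $m_\omega$ depends only on $\pi_-\omega$ and that the sampled measures $m_{T^{-n}\omega}$ cannot concentrate on the contracting direction of $A^E_n(T^{-n}\omega)$, which is roughly the direction $e^s(T^{-n}\omega)$. The standard argument, as in Avila--Viana and \cite{ADZ2}, goes in two steps. First, an atom of $m_{T^{-n}\omega}$ at $e^s(T^{-n}\omega)$ for a positive proportion of $n$ would make $m$ give positive mass to the graph of $e^s$. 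Second, since $e^s$ depends on the future coordinates while $m_\omega$ is constant along unstable sets, the product structure would force $e^s$ to be essentially independent of the future. That would make $\delta_{e^s}$ both an s- and u-state, contradicting $E\notin{\mathcal E}$. Hence $m_\omega(\{e^s\})=0$, and the expansion pushes all mass to $e^u(\omega)$, so $m_\omega=\delta_{e^u(\omega)}$. I expect this step, excluding mass on the stable direction, to be the main obstacle.

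For continuity, I take $E_j\to E$ in $I$ and a weak-$*$ limit point $m$ of $m^{u,E_j}$. The cocycle depends continuously on $E$ and is locally constant in $\omega$, so invariance passes to the limit. The u-property, being invariance under the identity holonomy, is the condition that the measure is compatible with the $\sigma$-algebra of the past. This should be preserved under weak-$*$ limits when phrased as $m$ being determined by its conditional on cylinders of the past, using the local product structure. So $m$ is a u-state for $E$, and uniqueness gives $m=m^{u,E}$. Compactness of the probability measures on $\Omega\times{\Bbb R\Bbb P}^1$ then yields convergence of the whole family.
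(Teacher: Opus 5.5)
Your route is the standard one that the paper imports from Lemma~3.4 of \cite{ADZ2}:
\begin{itemize}
\item $L(E)>0$ on $I$, by the invariance principle and $E\notin\mathcal{E}$;
\item existence via $m_\omega=\delta_{e^u(\omega)}$, using $H^u_{\omega,\omega'}=\mathrm{Id}$ and the fact that $e^u$ depends only on the past;
\item uniqueness from the absence of su-states;
\item continuity from weak-$*$ closedness of u-states, together with uniqueness and compactness.
\end{itemize}
The existence and continuity parts are fine. For closedness you do not even need the product structure. Every measure involved projects to $\mu$, and for this cocycle the u-property says that $\omega\mapsto m_\omega$ is measurable with respect to the past. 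That is equivalent to $\int f(\omega)\,g(\pi^-\omega,v)\,dm=\int \mathbb{E}_\mu(f\mid\pi^-)(\omega)\,g(\pi^-\omega,v)\,dm$ for all bounded measurable $f$ and continuous $g$. This identity passes to weak-$*$ limits once you approximate $f$ and $\mathbb{E}_\mu(f\mid\pi^-)$ in $L^1(\mu)$ by continuous functions.

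The step that does not go through as written is the second half of your uniqueness argument. Let $b=m_\omega(\{e^s(\omega)\})$; it is $T$-invariant, hence a.e.\ constant, and suppose $b>0$. From $m_\omega=m_{\omega'}$ for $\omega'\in W^u_{\rm loc}(\omega)$ you only learn that $m_\omega$ has an atom of mass $b$ at $e^s(\omega')$ for a.e.\ $\omega'$ in the leaf. So $e^s$ takes at most $1/b$ values on the leaf. This alone, with or without the product structure, does not make $e^s$ constant there, so you cannot yet conclude that $\delta_{e^s}$ is a u-state.

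The missing ingredient is the dichotomy, and it must come \emph{before} this step, not after. Assume $b<1$; if $b=1$ there is nothing to do. Let $m^s$ be the measure with disintegration $\{\delta_{e^s(\omega)}\}$, and run your expansion argument on $m'=(1-b)^{-1}\bigl(m-b\,m^s\bigr)$. This measure is still invariant and gives no mass to the graph of $e^s$. Use returns of $T^{-n}\omega$ to a set of positive measure on which:
\begin{itemize}
\item $m'_\eta$ gives mass $<\epsilon$ to a $\delta$-neighbourhood of $e^s(\eta)$;
\item the angle between $e^u(\eta)$ and $e^s(\eta)$ is bounded below;
\item the Oseledets constants are uniform.
\end{itemize}
This yields $m'_\omega=\delta_{e^u(\omega)}$, so every invariant measure projecting to $\mu$ has $m_\omega=(1-b)\,\delta_{e^u(\omega)}+b\,\delta_{e^s(\omega)}$ with $b$ constant.

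Now let $m$ be a u-state with $b>0$. Since $e^u(\omega')=e^u(\omega)$, subtracting $(1-b)\,\delta_{e^u(\omega)}$ from both sides of $m_\omega=m_{\omega'}$ gives $e^s(\omega')=e^s(\omega)$ for a.e.\ $\omega'$ in the leaf. After changing $e^s$ on a null set to a version depending only on the past, $m^s$ is a u-state. It is always an s-state, which contradicts $E\notin\mathcal{E}$. Hence $b=0$ and $m=m^{u,E}$. With this reordering your proof is complete.
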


The proof is identical to that of Lemma 3.4 in \cite{ADZ2}.

\smallskip

%%%%%%%%%%%%%%%%%%%%%%%%%%%

Below, we  use the notation $\Omega^\pm_j=\pi^\pm [0; j]$.
For a point  $\omega^{-,j}\in \Omega_j^-$, we define $W^u_{\rm loc}(\omega^{-,j})=W^u_{\rm loc}(\omega)$  where $\omega$ is any point in $\Omega$
with the property $\pi^-\omega=\omega^{-,j}$.

The following lemma was   proved in \cite{ADZ2}.

\begin{lemma}\label{lemma3.7}
For a fixed  $\omega^{-,j}\in \Omega_j^-$, let  $\nu^u$ be  a probability measure
on $W^u_{\rm loc}(\omega^{-,j})$ having the property that
\[\notag
C^{-1}  \leq  \frac{d(\pi_*^+\nu^u)}{d\mu_j^+} \leq C.
\]
Then 
\[\notag
\frac1n\sum_{k=0}^{n-1}T^k_* \nu^u \to  \mu,\quad \text{as}\quad  n\to \infty,
\]  in   the  weak-$*$  topology uniformly in $\omega^{-,j}$, $ 1\leq j\leq \ell$ and $\nu$.
\end{lemma}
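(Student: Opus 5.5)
The plan is to reduce the statement to an equidistribution property of cylinder sets under the shift, using the bounded distortion property of $\mu$ and the ergodicity of $T$. By weak-$*$ density it suffices to check convergence on indicator functions of cylinder sets $[0;i_0,\dots,i_s]$, since their linear combinations are dense in $C(\Omega)$ (cylinders are clopen). Uniformity over all admissible $\nu^u$, all $\omega^{-,j}$ and all $j$ will follow because each cylinder is handled with constants that depend only on $C$, the distortion constant of $\mu$ and the cylinder, not on $\nu^u$ or $\omega^{-,j}$.

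\textbf{Step 1.} For a cylinder $Z=[0;i_0,\dots,i_s]$, note that $T^k_*\nu^u(Z)=\nu^u(T^{-k}Z)=\nu^u([k;i_0,\dots,i_s])$. For $k\ge 1$ this event depends only on coordinates $\omega_k,\dots,\omega_{k+s}$ with $k>0$. On $W^u_{\rm loc}(\omega^{-,j})$ every point is determined by its future $\pi^+\omega$, so $\nu^u([k;i_0,\dots,i_s])=(\pi^+_*\nu^u)(\{\omega_k=i_0,\dots,\omega_{k+s}=i_s\})$. The density bound gives
\[
C^{-1}\mu_j^+(\cdot)\le \pi^+_*\nu^u(\cdot)\le C\mu_j^+(\cdot).
\]
This bound alone yields only comparability, not convergence, so a finer argument is needed.

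\textbf{Step 2 (main step).} Decompose $\pi^+_*\nu^u$ using its density $\rho=d(\pi^+_*\nu^u)/d\mu_j^+\in[C^{-1},C]$. Approximate $\rho$ in $L^1(\mu_j^+)$ by a function $\rho_N$ that is constant on cylinders $[0;j,j_1,\dots,j_N]$. Conditional expectation onto the $\sigma$-algebra generated by the first $N$ coordinates keeps the bounds $[C^{-1},C]$. Since the error $\|\rho-\rho_N\|_{L^1}$ is not uniform in $\rho$, I would instead split the average as
\[
\frac1n\sum_{k\le N}+\frac1n\sum_{k>N}.
\]
For $k>N$, write $\pi^+_*\nu^u=\sum_{w}\pi^+_*\nu^u([0;j,w])\,\nu_w$, summing over words $w$ of length $N$. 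For each piece, compare $\mu([0;j,w]\cap[k;i_0,\dots,i_s])$ with the product $\mu([0;j,w])\mu(Z)$ using the bounded distortion property.

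The difficulty is that bounded distortion yields only a multiplicative constant, not asymptotic independence. So the key extra input is ergodicity of $\mu$ in the averaged form
\[
\frac1n\sum_k\mu(A\cap T^{-k}Z)\to\mu(A)\mu(Z)
\]
for every Borel $A$, which follows from Birkhoff's theorem. I would apply this with $A$ ranging over $\pi_+^{-1}$-preimages of sets in $\Omega_j^+$. The uniform density bound implies that $\nu^u$ is absolutely continuous with respect to $\mu_j$ in the future coordinates. Averages of $\rho\cdot(\mathbf 1_Z\circ T^k)$ can then be controlled via the mean ergodic theorem in $L^2(\mu)$: $\frac1n\sum \mathbf 1_Z\circ T^k\to\mu(Z)$ in $L^2$, which gives
\[
\Bigl|\int\rho\,\frac1n\sum_k\mathbf 1_Z\circ T^k\,d\mu_j^+-\mu(Z)\int\rho\,d\mu_j^+\Bigr|\le \|\rho\|_{L^2}\,\Bigl\|\frac1n\sum_k\mathbf 1_Z\circ T^k-\mu(Z)\Bigr\|_{L^2(\mu_j^+)}.
\]
Here $\|\rho\|_{L^2}\le C$, so the estimate is uniform in $\rho$. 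Passing from $\mu$ to $\mu_j^+$ requires realizing $\rho$ as a function on $\Omega$ via $\pi^+$, and using that $\mathbf 1_Z\circ T^k$ depends only on positive coordinates for $k\ge1$. Bounded distortion, or the local product structure, ensures $\mu_j^+$ is comparable to $\mu$ restricted to $[0;j]$ pushed forward, which makes this transfer legitimate.

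\textbf{Step 3.} Finally, uniformity in $j$ is automatic since there are only $\ell$ values, and uniformity in $\omega^{-,j}$ holds because the bound above never used $\omega^{-,j}$. I expect the main obstacle to be Step 2: justifying the passage from the two-sided measure $\mu$ to the one-sided $\mu_j^+$ so that the $L^2$ mean ergodic estimate applies with constants independent of $\nu^u$.
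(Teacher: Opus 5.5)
Your decisive estimate is correct, and it proves the lemma by a genuinely different route from the one the paper relies on. The paper gives no proof and simply cites \cite{ADZ2}. The argument there, which the paper echoes in Lemma~\ref{lemma3.8} where uniformity ``follows again from uniqueness of the limit'', is a soft compactness argument. Since $\mu^+$ is $T_+$-invariant, $\pi^+_*T^k_*\nu^u=(T_+)^k_*\pi^+_*\nu^u\le C(T_+)^k_*\mu^+_j\le C\mu^+$. Hence any weak-$*$ accumulation point $\tilde\nu$ of the averages, even along varying $\nu^u$ and $\omega^{-,j}$, is $T$-invariant and satisfies $\tilde\nu([-m;w])=\tilde\nu([0;w])\le C\mu([0;w])=C\mu([-m;w])$. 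So $\tilde\nu\ll\mu$, and $\tilde\nu=\mu$ by ergodicity; uniformity then follows by compactness from uniqueness of the limit. You instead write $T^k_*\nu^u(Z)=\int\rho\,(\mathbf 1_{Z^+}\circ T_+^k)\,d\mu^+_j$ and pair the density against the mean ergodic theorem. Since $\|\rho\|_{L^2(\mu^+_j)}\le C$, the error depends only on $C$ and the cylinder $Z$. Your route gives an explicit per-cylinder rate and makes the uniformity visible; the soft route avoids all approximation bookkeeping. Both use only ergodicity and the upper bound $\rho\le C$; neither the lower bound $C^{-1}$ nor bounded distortion is needed.

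Two points need fixing. First, linear combinations of indicators of one-sided cylinders $[0;i_0,\dots,i_s]$ are \emph{not} dense in $C(\Omega)$, since they depend only on $\pi^+\omega$; you must test two-sided cylinders $[-m;w]$. This is harmless: $T^{-k}[-m;w]=T^{-(k-m)}[0;w]$ for $k\ge m$ and $\mu([-m;w])=\mu([0;w])$, so the Ces\`aro averages differ from your case by at most $2m/n$. Second, what you flag as the main obstacle is not one. By definition $\mu^+_j=\pi^+_*(\mu|_{[0;j]})$, and $\mathbf 1_Z\circ T^k=\mathbf 1_{Z^+}\circ T_+^k\circ\pi^+$ for future cylinders $Z$ and $k\ge 0$. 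So the transfer between $\mu$ and $\mu^+_j$ is an identity and needs neither bounded distortion nor local product structure. The abandoned pieces of Step 2 (the approximation $\rho_N$, the split of the sum at $N$, the word decomposition) can simply be dropped.
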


Note  that the notion of uniform convergence for this sequence of measures is well-defined because the space of considered probability measures is metrizable.

In what follows,  we use   the notation $F^E$  for the mapping  from $\Omega\times \Bbb{RP}^1$ to itself defined by
\[\notag
F^E(\omega, v)=(T\omega, A^E(\omega) v).
\]

%%%%%%%%%%%%%%%%%%%%%%%%%%%%%%%%%%

\begin{lemma} \label{lemma3.8}
Let $E \in  I$ and $\omega^{-,j}\in \Omega_j^-$.   Suppose $m$  is a probability measure on
$W^u_{loc}(\omega^{-,j})\times {\Bbb R \Bbb P}^1
$,
whose projection $\nu^u$
to $W^u_{
loc}(\omega^{-,j})$
 satisfies the assumptions of
Lemma~\ref{lemma3.7}. 
Then
\[
\frac{1}{n}
\sum_{k=0}^{n-1}
(F^E)
^k_* m\to  m^{u,E}
\]
in the weak-* topology,      uniformly in $\omega^{-,j}$,  $E\in I$,  and such choices of $m$.
\end{lemma}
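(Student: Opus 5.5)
We argue by compactness and uniqueness, following Lemma 3.8 of \cite{ADZ2}. Assume the claim fails. Then there are a continuous test function $\varphi$ on $\Omega\times{\Bbb R \Bbb P}^1$, a number $\epsilon>0$, and sequences $n_i\to\infty$, $E_i\in I$, $\omega^{-,j_i}_i$ and admissible measures $m_i$ such that
\[\notag
\Bigl|\int\varphi\, d\Bigl(\frac1{n_i}\sum_{k=0}^{n_i-1}(F^{E_i})^k_*m_i\Bigr)-\int\varphi\, dm^{u,E_i}\Bigr|\geq\epsilon .
\]
Since $I$ and $\Omega\times{\Bbb R \Bbb P}^1$ are compact, we pass to a subsequence along which $E_i\to E\in I$ and the Cesàro averages $M_i:=\frac1{n_i}\sum_{k<n_i}(F^{E_i})^k_*m_i$ converge weak-$*$ to some $M$. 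By the continuity in the preceding Lemma, $m^{u,E_i}\to m^{u,E}$. It therefore suffices to show $M=m^{u,E}$, which yields a contradiction.

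\emph{Step 1: $M$ projects to $\mu$ and is $F^E$-invariant.} The projection of $M_i$ to $\Omega$ is $\frac1{n_i}\sum_k T^k_*\nu^u_i$. By Lemma~\ref{lemma3.7} this converges to $\mu$ uniformly in the data, so $M$ projects to $\mu$. For invariance, note that $F^{E_i}_*M_i-M_i=\frac1{n_i}\bigl((F^{E_i})^{n_i}_*m_i-m_i\bigr)$ has total variation at most $2/n_i$. Because $A^E$ is locally constant in $\omega$ and continuous in $E$, $F^{E_i}\to F^E$ uniformly on the compact space. Hence $F^{E_i}_*M_i\to F^E_*M$, and so $F^E_*M=M$.

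\emph{Step 2: $M$ is a u-state.} This is the main obstacle. The key identity is $H^u={\rm Id}$ from the Proposition above. Consequently, for $\omega'\in W^u_{\rm loc}(\omega)$ with $\omega'_n=\omega_n$ for $n\le 0$, we have $A^E_k(\omega')=A^E_k(\omega)$ only as long as the symbols entering $A^E$ agree. This fails after a few steps. What remains true is that unstable holonomies are trivial, so the backward products agree. We must show that the conditional measures of $M$ along local unstable sets are invariant under these holonomies, i.e. that they are constant along $W^u_{\rm loc}$ in the coordinates trivialized by $H^u$.

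The plan is to follow \cite{ADZ2}. Each $(F^{E_i})^k_*m_i$ is supported on $T^k(W^u_{\rm loc}(\omega^{-,j_i}_i))\times{\Bbb R \Bbb P}^1$. This set is a finite union of pieces $W^u_{\rm loc}(\cdot)$. On each piece, the fiber measure is obtained from a fixed measure on $W^u_{\rm loc}(\omega^{-,j_i}_i)\times{\Bbb R \Bbb P}^1$ by applying $A^{E_i}_k$. Along a piece, these products differ only by unstable holonomies. After the pieces are pushed forward into a single unstable leaf, the conditional measures are therefore related by $H^u$, up to an error that vanishes as $k\to\infty$. The bounded-distortion hypothesis on $\pi^+_*\nu^u$ passes to the pushed-forward pieces, since $\mu$ has bounded distortion. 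Averaging over $k$ and taking limits, we obtain that $M$ admits a disintegration satisfying $(H^u_{\omega,\omega'})_*M_\omega=M_{\omega'}$ for $\omega'\in W^u_{\rm loc}(\omega)$. The delicate point is making this stable under weak-$*$ limits. We handle it by testing against functions that are $H^u$-invariant along local unstable leaves; this class is closed and stable under the above operations, as in \cite{ADZ2}.

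\emph{Step 3: conclusion.} By the preceding Lemma, $E\in I$ has a unique u-state, so $M=m^{u,E}$, which contradicts the choice of $\epsilon$. Uniformity in $\omega^{-,j}$, $E$ and $m$ follows from the contradiction setup, because all of these data were allowed to vary along the sequence.
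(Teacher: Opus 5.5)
Your Steps 1 and 3 are sound and follow the paper's compactness--uniqueness skeleton. Your invariance argument, via the total-variation bound $2/n_i$ and the uniform convergence $F^{E_i}\to F^E$, is cleaner than the paper's. The gap is Step 2, and it is not a matter of detail.

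You say that on each piece the fibre measure is obtained from a fixed measure by applying $A^{E_i}_k$. This tacitly assumes that $m$ has the same fibre measure at every point of $W^u_{\rm loc}(\omega^{-,j})$, i.e.\ $m=\nu^u\times\eta$. The lemma allows any $m$ with the prescribed projection, and for such $m$ the conclusion is false. Here is a counterexample.
\begin{itemize}
\item Recall $L(E)>0$ on $I$ (otherwise $E\in\mathcal E$ by the invariance principle). Let $E^s(\omega)$ be the Oseledets contracting direction. It is determined by $(A^E_n(\omega))_{n\geq1}$, so it depends only on $\omega_{\geq-1}$. It is defined $\nu^u$-a.e.\ by bounded distortion and the density bound on $\pi^+_*\nu^u$.
\item Take $m=\int\delta_\omega\times\delta_{E^s(\omega)}\,d\nu^u(\omega)$. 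Then $(F^E)^k_*m=\int\delta_{T^k\omega}\times\delta_{E^s(T^k\omega)}\,d\nu^u(\omega)$.
\item Approximate a test function by one depending on $(\omega_{-p},\dots,\omega_p,v)$. For $k>p$, the $k$-th term is the integral, against $\pi^+_*\nu^u$, of a fixed bounded function composed with $T_+^{k-p}$. Since $d\pi^+_*\nu^u/d\mu^+$ is bounded, the mean ergodic theorem for $(T_+,\mu^+)$ shows that the Ces\`aro averages converge to $m^s=\int\delta_\omega\times\delta_{E^s(\omega)}\,d\mu$.
\item Using $H^s_{\omega,\omega'}=[A^E(\omega')]^{-1}A^E(\omega)$, one checks that $m^s$ is an s-state. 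If it were also a u-state we would have $E\in\mathcal E$. Hence $m^s\neq m^{u,E}$.
\end{itemize}
So Step 2 can only be proved after restricting the lemma to products $m=\nu^u\times\eta$. That is all Theorem~\ref{thm3.6} uses: $A^E_s$ is constant on each cylinder $[1;i_1,\dots,i_s]$ of the leaf, so the normalized push-forwards of restrictions of $\mu^u_{\omega^{-,j}}\times\delta_v$ are again products.

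Even for products, your mechanism for the weak-$*$ limit does not work. Since $H^u={\rm Id}$, the functions invariant along local unstable sets are exactly the functions of $(\pi^-\omega,v)$. Integrals against them only see $(\pi^-\times{\rm Id})_*M$. Let $\{M^-_{\omega^-}\}$ disintegrate $(\pi^-\times{\rm Id})_*M$; then $M$ and $\int\delta_\omega\times M^-_{\pi^-\omega}\,d\mu(\omega)$ give the same integral to every such function. So no closedness property of that class can certify that $M$ is a u-state. Being a u-state means that $v$ and $\omega_{>0}$ are conditionally independent given $\omega_{\leq0}$. This is a property of conditional measures, it is not preserved under weak-$*$ limits, and bounded distortion of the leaf densities does not by itself restore it.

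A genuinely dynamical input that uses the product structure is needed. For instance, for $v\neq E^s(\omega)$ the direction $A^E_k(\omega)v$ aligns with the most expanded direction of $A^E_k(\omega)$, which depends only on the past of $T^k\omega$. One then has to show that $\nu^u\times\eta$ gives zero mass to the graph of $E^s$, i.e.\ that $(E^s)_*\nu^u$ has no atoms. This is where bounded distortion and $E\notin\mathcal E$ would have to be used, with enough uniformity for the compactness argument.

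Following the paper here would not help either. It asserts $[A^E(T_-\omega_-)]^{-1}_*\tilde m^-_{\omega_-}=\tilde m^-_{T_-\omega_-}$ ``by $F^E_-$-invariance''. But $T_-$ is not invertible, so invariance yields this only after averaging over the preimages of $T_-\omega_-$. The pointwise identity is equivalent to the u-state property itself. The reconstruction formula quoted from \cite{AV} holds for every invariant lift and carries no information about it.
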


{\it Proof. }  By compactness of $\Omega\times {\Bbb R \Bbb P}^1$,  we see that 
the   collection of measures of the form
\[
\frac{1}{n}
\sum_{k=0}^{n-1}
(F^E)
^k_* m
\]
has  weak-$*$ accumulation points.
Consider such an accumulation point and denote it by $\tilde m$. 
Without loss of generality, we may just assume that 
\[
\frac{1}{n}
\sum_{k=0}^{n-1}
(F^E)
^k_* m\to  \tilde m.
\] 
We only need to prove that $\tilde m=m^{u,E}$ and that the convergence is uniform.
Clearly,  $\tilde m$ is
invariant under $F^E$  and it projects to the limit  of  the sequence 
\[
\frac1{n}
\sum_{k=0}^{n-1}
T^k_*
\nu^u
\]
in the first component.    By  Lemma~\ref{lemma3.7},  it means that $\tilde m$  projects to $\mu$.

Let us  show now that any disintegration  $\{\tilde m_\omega\}_{\omega\in \Omega}$  of $\tilde m$ is invariant under the
unstable holonomy. 
Since $A$ depends only on the past,  any  unstable  holonomy is  the identity ${\rm Id}$.   Thus we only need to show that 
$\tilde m_\omega =\tilde m_{\omega'}$ for $\omega$ and $\omega'$
in the same local unstable set. 

For this purpose,  we let 
$\tilde m^- = (\pi^-\times {\rm Id})_* \tilde m,$ which is a measure on $\Omega_-\times {\Bbb R \Bbb P}^1$.   

Since $A^E$  depends only  on the past,  it naturally descends to a map on $\Omega_-$.
Namely, $A^E(\omega)=\tilde A^E(\pi^-\omega)$
 for some map $\tilde A^E:\Omega_-\to{\rm SL}(2,{\Bbb R})$, which   we will still denote by  $A^E $. 
 
 Let $F^E_-$ be  the 
action of $(T_-,A_{-1}^E) $ on $\Omega_-\times {\Bbb  RP}^1$  where $T_-$ is the right shift on $\Omega_-$ and 
$A_{-1}^E(\omega_-)=A^E(T_-\omega_-)$.
 Then 
 \[(\pi^-\times {\rm Id}) \circ (T, A^E)^{-1}= (T_-,A_{-1}^E
) \circ (\pi^-\times {\rm Id}),\]
which may be  written in the form $ (\pi_-
\times {\rm  Id}) \circ  (F^E )^{-1} = F^E_-(\pi_-\times {\rm Id}).$
This implies that  $\tilde m^-$ 
is invariant
under $F^E_-$.  Indeed,
\begin{equation*}
\begin{aligned}
(F^E_-)_*\tilde m^- 
&= (F^E_-)_*(\pi^-\times {\rm Id})_*\tilde m \\
&= (\pi^-\times {\rm Id})_*(F^E)^{-1}_*\tilde m \\
&= (\pi^-\times {\rm Id})_*\tilde m = \tilde m^-.
\end{aligned}
\end{equation*}
Let $\{ \tilde m^-_{\omega_- }\} $ be a disintegration of $\tilde m^-$.
By $ F^E_-$-invariance,   we then have
$[A^E(T_-\omega_-)]^{-1}
_*\tilde m^-_{\omega_-}=\tilde m^-_{T_-\omega_-}
$,
or equivalently,
\[ (A^E_{-1}
(\omega_-))_* \tilde m^-_{
\omega_-} = \tilde m^-_{
T_-\omega_-} .
\]
By a special case of \cite{AV}, Lemma 3.4,   the disintegration $\{\tilde m_\omega\}$ of $\tilde m$ can be recovered
from the disintegration
$ \{\tilde m^-_{\omega_-}\}$ of $\tilde m^-$
via
\[
\tilde m_\omega  = \lim_{n\to\infty}
(A^E_{-n}
(\pi_-(T^n\omega)))_*\tilde m^-_{
\pi^-(T^n \omega)}
\]
This  implies that
\[\label{constW}
\tilde m_\omega=\tilde m^-_{\pi^-\omega}
\]
so $\tilde m_\omega$  is  constant on  the local unstable set. 
This concludes the proof that $\tilde m$
is a u-state. Therefore, by uniqueness of the u-state, it must be equal to $m^{u,E}$.
Uniform convergence follows again from uniqueness of the limit.  
Indeed, let $\rho$  be  the metric  on the space of probability measures
defined on $\Omega\times \Bbb{RP}^1$. 
For instance, such a metric may be
chosen as \[
\rho(m, m')=\sum_{s=1}^\infty 2^{-s}\Bigl|\int_{\Omega\times \Bbb{RP}^1} f_s dm-\int_{\Omega\times \Bbb{RP}^1} f_s dm'\Bigr|,
\]
where $\{f_s\}_{s=1}^\infty$ is a dense subset of the unit ball  in $C(\Omega\times \Bbb{RP}^1)$.

%%%%%%%%%%%%%%%%%%%%%%%%%%%%%%%%%%%%%%%%%

Let $\ve>0$.
Suppose there is a sequence of measures $m_j$
satisfying   conditions of the lemma and a sequence of  energies $E_j\to E$ such that
\[\label{bol'shee}
\rho\Bigl(\frac1{n_j}  \sum_{k=0}^{n_j-1} (F^{E_j})^k_* m_j,  \, m^{u,E_j}\Bigr)>\ve \qquad  \forall j.
\]
By compactness, we may assume that $\frac1{n_j}  \sum_{k=0}^{n_j-1} (F^{E_j})^k_* m_j$  converges to some probability measure $\tilde m$.  But then
\[
\frac1{n_j}  \sum_{k=0}^{n_j-1} (F^{E_j})^{k+1}_* m_j-\frac1{n_j}  \sum_{k=0}^{n_j-1} (F^{E_j})^k_* m_j\to 0,\qquad \text{as}\quad j\to \infty
\]
implies that the sequence  $F^{E_j}\tilde m$ also converges to $\tilde m$.  Thus,  the measure $\tilde m$ is $F^E$-invariant.
Repeating the arguments  that lead us to \eqref{constW}, we  conclude  that $\tilde m$ is a u-state  for $A^E$.
On the other  hand,  \eqref{bol'shee} implies that 
\[\notag
\rho\Bigl(  \tilde m, \, m^{u,E} \Bigr)\geq \ve,
\]  which contradicts the fact that $\tilde m=m^{u, E}$.
$\Box$

\medskip

%%%%%%%%%%%%%%%%%%%%%%%%%%%%%%%%%%%%%%%%%%%%%%%

 A disintegration of $\mu$  with respect to  the local unstable sets is a (measurable) family
 of measures $\mu^u_{\omega^-}$ 
 \[\notag
 \{\mu^u_{\omega^-}: \,\,\text{probability  measure on}\, W^u_{\rm loc}(\omega^-)\}_{\omega^-\in \Omega_-}
 \]
 having the property  that
 \[\notag
 \mu(D)=\int_{\Omega_-}\mu^u_{\omega^-}\{ \omega\in D:\,\,\pi^-(\omega)=\omega^-\}  \,d\mu^-(\omega^-)
 \]
 Such a disintegration exists by Rokhlin's theorem.  

For a   continuous  function $\phi$  on $\Omega\times {\Bbb R \Bbb P}^1$, define $S_n(\phi)$
by
\[\notag
S_n(\phi)=\sum_{k=0}^{n-1} \phi\circ (F^E)^k.
\]
\begin{theorem}\label{thm3.6}
For every  $\phi=\bar \phi\in C^\alpha(\Omega\times {\Bbb R \Bbb P}^1)$, $\ve>0$, and $E_0\in I$,
there are positive  constants $C$, $c$, and $r$  such that
\[\notag
\mu^u_{\omega^-}\Bigl\{    \omega\in W^u_{\rm loc}(\omega^-):\,\, \Bigl| \frac1n S_n(\phi)(\omega, v) -\int \phi\, d m^{u, E}\Bigr|>\ve \Bigr\}< C e^{-c n}
\]
uniformly in $(\omega^-, v)\in \Omega_-\times {\Bbb R \Bbb P}^1$  and $E\in I\cap (E_0-r,E_0+r)$.

\end{theorem}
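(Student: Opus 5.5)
We follow the scheme of \cite{ADZ2} (Theorem 3.6 there): combine the uniform convergence of Lemma~\ref{lemma3.8} with the bounded distortion of $\mu$ to get a block-wise exponential estimate.

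\emph{Step 1: a block estimate.} Fix $\varepsilon>0$. By Lemma~\ref{lemma3.8} applied to $m=\mu^u_{\omega^-}\times\delta_v$, there is $N$ with
\[\notag
\Bigl|\frac1N\int S_N(\phi)(\omega,v)\,d\mu^u_{\omega^-}(\omega)-\int\phi\,dm^{u,E}\Bigr|<\varepsilon/4 .
\]
This holds uniformly in $(\omega^-,v)$, in the energy and, more generally, in all measures $\nu^u$ with density in $[C^{-1},C]$ relative to $\mu^+_j$. Bounded distortion of $\mu$ guarantees that the conditional measures $\mu^u_{\omega^-}$, and their normalized restrictions to cylinders, have this property. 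Continuity of $E\mapsto m^{u,E}$ and of $E\mapsto A^E$ (for fixed $N$, $S_N\phi$ depends continuously on $E$ uniformly in $(\omega,v)$) lets us keep this bound for all $E\in I\cap(E_0-r,E_0+r)$ with $r$ small.

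\emph{Step 2: conditioning along blocks.} Write $n=qN+p$ and split $S_n\phi=\sum_{i<q}S_N\phi\circ (F^E)^{iN}+O(N\|\phi\|_\infty)$. Let $X_i(\omega)=\frac1N S_N\phi\bigl((F^E)^{iN}(\omega,v)\bigr)-\int\phi\,dm^{u,E}$. Condition on the cylinder $\mathcal C_i$ determined by $\omega_0,\dots,\omega_{iN-1}$. Given $\mathcal C_i$, the point $T^{iN}\omega$ lies in a local unstable set, the fibre point $A^E_{iN}(\omega)v=:v_i$ is fixed (since $A^E$ depends only on the past), and the conditional law of $T^{iN}\omega$ under $\mu^u_{\omega^-}|_{\mathcal C_i}$ is again a measure on $W^u_{\rm loc}$ with density in $[C^{-1},C]$ with respect to $\mu_j^+$. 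This last claim is where bounded distortion is used. The Hölder condition on $\phi$ controls the dependence of $X_i$ on coordinates beyond the block: since $A^E$ is locally constant in the past, only the $\Omega$-variable is perturbed, by at most $e^{-k}$, so the error is summable. Hence
\[\notag
\mathbb E\bigl[X_i\mid \mathcal C_i\bigr]\le \varepsilon/4+O(N^{-1}),
\]
uniformly. We may enlarge $N$ so that the right-hand side is at most $\varepsilon/2$.

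\emph{Step 3: exponential bound.} The $X_i$ are bounded by $2\|\phi\|_\infty$. Apply an Azuma–Hoeffding (or exponential Markov) inequality to the supermartingale differences $X_i-\mathbb E[X_i\mid\mathcal C_i]$:
\[\notag
\mu^u_{\omega^-}\Bigl\{\frac1q\sum_{i<q}X_i>\varepsilon\Bigr\}\le e^{-cq\varepsilon^2/\|\phi\|_\infty^2}.
\]
The same argument applied to $-\phi$ gives the lower deviation. Since $q\sim n/N$, this yields the bound $Ce^{-cn}$. The main obstacle is Step 2: verifying that conditioning $\mu^u_{\omega^-}$ on past cylinders of arbitrary length produces measures whose $\pi^+$-projections have densities uniformly comparable to $\mu_j^+$, so that Lemma~\ref{lemma3.8} applies uniformly at every block. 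We would derive this directly from the bounded distortion inequality by comparing $\mu(\mathcal C_i\cap[iN;\cdot])$ with products of cylinder measures and passing to the limit along cylinders defining the conditional measures.
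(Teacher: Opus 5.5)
Your proposal is essentially the paper's argument: Lemma~\ref{lemma3.8} applied to normalized pushforwards of cylinder restrictions of $\mu^u_{\omega^-}\times\delta_v$ (admissible by bounded distortion), a martingale along blocks of length $N$, Azuma--Hoeffding, and H\"older continuity of $\phi$; the paper builds the martingale from the adapted cylinder averages $Y_i$ of $S_i(\phi)$ and compares $S_{nN}(\phi)$ with $Y_{nN}$ at the end, but its compensator is the same sum of conditioned block averages. The one thing to fix is adaptedness: your $X_i$ depends on all future coordinates through $\phi$, so $X_i-\mathbb{E}[X_i\mid\mathcal C_i]$ is not a martingale difference for $(\mathcal C_i)$, and you should replace $X_i$ by $\mathbb{E}[X_i\mid\mathcal C_{i+1}]$, which your H\"older estimate shows costs only $O(1/N)$ --- this, rather than the bound on $\mathbb{E}[X_i\mid\mathcal C_i]$ (which needs only Lemma~\ref{lemma3.8} and bounded distortion), is where H\"older continuity is actually used.
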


The proof of this theorem relies on the following lemma, which is stated in the remarks following Lemma 3.9 of \cite{ADZ2}. This result is widely known as the Azuma–Hoeffding inequality \cite{Az, Hoe}.

\begin{lemma}\label{lemma3.9}
Let \(\{d_n\}_{n\in \mathbb{N}}\) be a sequence of random variables   such   that
\[\label{21}
{\Bbb E}(d_{n+1}| {\frak F}_{n})=0,
\]  where ${\frak F}_{n}$  is the $\sigma$-algebra  generated  by the functions $d_1,\dots ,d_n$.
Suppose that
\[\notag
\|d_n\|_\infty\leq a,\qquad  \forall n.
\]
Then for every $\ve>0$, there is a constant $c>0$  such that
\[\notag
{\Bbb P}\Bigl( \Bigl| \frac1N \sum_{n=1}^N d_n\Bigr|  >\ve \Bigr)<e^{-\frac{c\ve^2 N}{2a^2}},\qquad  \forall N\geq 1.
\]
\end{lemma}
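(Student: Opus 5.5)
The plan is the standard exponential-moment (Chernoff) argument for bounded martingale differences, followed by a choice of $c$ that absorbs the numerical prefactor. Put $S_N=\sum_{n=1}^N d_n$. By symmetry (replace $d_n$ by $-d_n$) it suffices to bound ${\Bbb P}(S_N>N\ve)$ and then add the two one-sided bounds.

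\emph{Step 1: Hoeffding's lemma, conditionally.} If $X$ is a random variable with $|X|\le a$ and ${\Bbb E}(X\,|\,{\frak G})=0$, then ${\Bbb E}(e^{tX}\,|\,{\frak G})\le e^{t^2a^2/2}$ for every $t\in{\Bbb R}$. To see this, use convexity of $x\mapsto e^{tx}$ on $[-a,a]$:
\[\notag
e^{tx}\le \tfrac{a-x}{2a}e^{-ta}+\tfrac{a+x}{2a}e^{ta}.
\]
Taking conditional expectations kills the linear term and gives ${\Bbb E}(e^{tX}\,|\,{\frak G})\le\cosh(ta)\le e^{t^2a^2/2}$.

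\emph{Step 2: iterate along the filtration.} Since $S_{N-1}$ is ${\frak F}_{N-1}$-measurable, condition \eqref{21} and Step 1 give
\[\notag
{\Bbb E}\,e^{tS_N}={\Bbb E}\bigl(e^{tS_{N-1}}\,{\Bbb E}(e^{td_N}\,|\,{\frak F}_{N-1})\bigr)\le e^{t^2a^2/2}\,{\Bbb E}\,e^{tS_{N-1}}.
\]
By induction, ${\Bbb E}\,e^{tS_N}\le e^{Nt^2a^2/2}$.

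\emph{Step 3: Markov and optimization.} Markov's inequality gives
\[\notag
{\Bbb P}(S_N>N\ve)\le e^{-tN\ve+Nt^2a^2/2}.
\]
Choosing $t=\ve/a^2$ yields the bound $e^{-\ve^2N/(2a^2)}$. Adding the bound for $-S_N$ gives
\[\notag
{\Bbb P}\bigl(|S_N|>N\ve\bigr)\le 2e^{-\ve^2N/(2a^2)}.
\]

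\emph{Step 4: absorbing the factor $2$.} This is the only point where the specific form of the statement, with $c$ depending on $\ve$, matters. Take $c=\tfrac12$ and let $N_0=N_0(\ve,a)$ be such that $e^{-\ve^2N/(4a^2)}<\tfrac12$ for $N\ge N_0$. Then for $N\ge N_0$,
\[\notag
2e^{-\ve^2N/(2a^2)}<e^{-\ve^2N/(4a^2)}=e^{-c\ve^2N/(2a^2)},
\]
which is the claimed inequality. If $\ve\ge a$, then $|S_N/N|\le a\le\ve$ always, so the probability is $0$ and the inequality holds for every $N\ge1$ and any $c$.

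\emph{Main obstacle: small $N$ with $\ve<a$.} For $N<N_0$ the claimed strict inequality forces the probability to be strictly less than $1$, and this is not automatic. The case $N=1$, $d_1=\pm a$ with equal probability shows it can fail. I would handle this range as follows: shrinking $c$ does not help here, since $e^{-c\ve^2N/(2a^2)}<1$ for every $c>0$. Instead I would note that in the application (Theorem~\ref{thm3.6}) the bound is only needed with a multiplicative constant $C$, which covers the finitely many $N<N_0$. So the statement should be read as holding for all sufficiently large $N$, or equivalently for all $N\ge1$ up to a multiplicative constant, and this is the form that will be used.
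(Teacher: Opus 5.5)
Your argument is correct and is the standard conditional-Hoeffding/Chernoff proof of the Azuma--Hoeffding inequality, which is all the paper relies on here (it gives no proof of its own, only the citations \cite{Az, Hoe} and \cite{ADZ2}); the base case of your induction uses ${\Bbb E}\,d_1=0$, i.e.\ \eqref{21} read at $n=0$ with ${\frak F}_0$ trivial, which is the natural reading. Your observation that the statement as literally written cannot hold for $N<a/\ve$ when $\ve<a$, whatever $c>0$ is (e.g.\ $d_1=\pm a$, $d_n=0$ for $n\ge 2$), is also correct. What you actually prove is $2e^{-\ve^2N/(2a^2)}$ for all $N$, hence the stated form with $c=\tfrac12$ for $N\ge N_0(\ve,a)$, and this is exactly what Theorem~\ref{thm3.6} needs, since its conclusion carries a multiplicative constant $C$.
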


%%%%%%%%%%%%%%%%%%%%%%%%%%%%%%%%%%%%%

{\it Proof of Theorem ~\ref{thm3.6}.}  While we have made minor cosmetic changes to the proof provided by the authors of \cite{ADZ2},
 the core ideas remain the same.   Given its elegance, we felt it necessary to include this proof.
 
By the bounded distortion property of $\mu$, there exists a  constant $C\geq 1$
such that for each $1\leq j\leq \ell$ and $\mu^-$-almost every $\omega^{-,j}\in \Omega_j^-$
\[\label{25}
 C^{-1}\leq \frac{d( \pi^+_*\mu^u_{\omega^{-,j}})}{d\mu^+_j}\leq C .
\]
Indeed,   let $\omega\in \Omega_j^+$, then
\begin{equation*}
\begin{aligned}
\frac{d( \pi^+_* \mu^u_{\omega^{-,j}})}{d\mu^+_j}(\omega) 
&= \lim_{n\to\infty} \frac{\pi^+_* \mu^u_{\omega^{-,j}}([0,\omega_0,\omega_1,\dots, \omega_n])}{\mu^+_j([0,\omega_0,\omega_1,\dots, \omega_n])} \\
&= \lim_{n\to\infty} \frac{\mu^u_{\omega^{-,j}}(W^u_{\mathrm{loc}}(\omega^{-,j})\cap [0,\omega_0,\omega_1,\dots, \omega_n])}{\mu^+_j([0,\omega_0,\omega_1,\dots, \omega_n])} \\
&= \lim_{n\to\infty}\lim_{l\to\infty} \frac{\int_{[-l;i_{-l},\dots,i_{-1},j]} \mu^u_{\omega^{-,j}}(W^u_{\mathrm{loc}}(\omega^{-,j})\cap [0,\omega_0,\omega_1,\dots, \omega_n]) \, d\omega^{-,j}}{\mu_j^-([-l;i_{-l},\dots,i_{-1}]) \mu^+_j([0,\omega_0,\omega_1,\dots, \omega_n])} \\
&= \lim_{n\to\infty}\lim_{l\to\infty} \frac{\mu([-l;i_{-l},\dots,i_{-1},]\cap[0,\omega_0,\omega_1,\dots, \omega_n])}{\mu_j^-([-l;i_{-l},\dots,i_{-1}]) \mu^+_j([0,\omega_0,\omega_1,\dots, \omega_n])} .
\end{aligned}
\end{equation*}
Without loss of generality, we will  assume that
 inequality \eqref{25} holds not only  almost everywhere,  but for all $\omega^{-,j}\in \Omega^{-,j}$,
and, therefore,  $\mu^u_{\omega^{-,j}}$
satisfies the assumption of
Lemma~\ref{lemma3.7}. 

We will work with  integrals of a  lift  of the measure $\mu^u_{\omega^{-,j}}$ defined as follows.
 Fix any $v\in {\Bbb R \Bbb P}^1$ and set $m$ to be the product of the measures $\mu^u_{\omega^{-,j}}$  and $\delta_v$
 defined on
  $W_{\rm loc}^u(\omega^{-,j})\times {\Bbb R \Bbb P}^1$. Then $m$  satisfies the assumptions of Lemma ~\ref{lemma3.8}.

  Since $\pi^+ T=T_+\pi^+$,  we have the relations $\pi^+ T^s=(T_+)^s\pi^+$,   and $\pi^+_* T^s_*\mu^u_{\omega^{-,j}}=(T_+)^s_*\pi^+_*\mu^u_{\omega^{-,j}}$,   which imply that
  \[\notag
 C^{-1}\leq \frac{d( \pi^+_*T^s_*\mu^u_{\omega^{-,j}})}{d(T_+)^s_*\mu^+_j}\leq C.
\]
By $T_+$-invariance of $\mu^+$,  and   the bounded  distortion property of the measure,  for admissible $j,i_1,\dots,i_s$,  we have
\[\notag
C^{-1}\mu([0;j,i_1,\dots,i_{s-1}])\leq \frac{d((T_+)^s_*\mu^+_j\vert_{[0;j,i_1,\dots, i_s]})}{d\mu_{i_s}^+}\leq C\mu([0;j,i_1,\dots,i_{s-1}]).
\]
The above estimate implies that 
\[\notag
\frac1{\mu^u_{\omega^{-,j}([0;j,i_1,\dots,i_s])}}T^s_*\bigl( \mu^u_{\omega^{-,j}}\vert_{[1;i_1,\dots,,i_s]}\bigr)
\]
is a probability measure
that satisfies \eqref{25}, and thereby,  the assumptions of Lemma~\ref{lemma3.7}. 
  Moreover, the measure 
\[\label{0.17}
\frac1
{m([0 ; j, i_1, . . . , i_s] \times {\Bbb  R\Bbb P}^1)}
(F^E)^s_*\bigl(m \vert_{[1:i_1,...,i_s]\times{\Bbb R \Bbb P}^1}\bigr)
\]
obeys the conditions  of Lemma ~\ref{lemma3.8}.
Now for each  $\omega\in W^u_{\rm loc}(\omega^{-,j})$, and $i\in {\Bbb N}$ we define the set
\[\notag
D_i(\omega)
:=
[1; \omega_1, . . . , \omega_i] \cap W^u_{\rm loc}(\omega^{-,j})\times {\Bbb R \Bbb P}^1.
\]
If $\omega_1=i_1,\dots,\omega_s=i_s$, then  the measure \eqref{0.17}  can be  written as
\[\notag
\frac1
{m(D_s(\omega))}
(F^E)^s_*\bigl(m \vert_{D_s(\omega)}\bigr)
\]
Note also that for every continuous  function $\psi$ on $\Omega$,   we have
\[\notag
\int_{(F^E)^s(D_{i}(\omega))}\psi(\tilde \omega) d(F^E)^s_*m(\tilde \omega)=
\int_{D_{i}(\omega)}\bigl[\psi\circ (F^E)^s\bigr](\tilde \omega) dm(\tilde \omega).
\]
Consider a H\"older continuous
function $\phi\in C^\alpha(\Omega\times {\Bbb R \Bbb P}^1)$. 
 By Lemma ~\ref{lemma3.8} and the facts described above,   given $\varepsilon>0$, 
 there is an $N\geq 1$ such that for every$i\geq 1$ and every $\omega\in W^u_{\rm loc}(w^{-,j})$,
 \[\label{26} \Bigl|
 \frac1{m(D_i(\omega)) } \int_{D_i(\omega)} \frac1{N} S^E_N (\phi\circ(F^E)^i)dm-\int_\Omega \phi \, dm^{u,E}\Bigr|<\varepsilon/4\]
 Define now the functions  $Y_i: W^u_{\rm loc}(\omega^{-,j})\to {\Bbb R}$ by setting
 \[
 Y_i(\omega)=\frac1{m(D_i(\omega))}\int_{D_i(\omega)}S^E_i(\phi)dm.
\]
Obviously, $Y_i$ depends only on $\omega_0, \dots,\omega_i$.  Let ${\frak B}_i$ be the $\sigma$-algebra generated by the functions 
$Y_0,\dots, Y_i$,
which is basically generated by the cylinder sets $[0;n_0,\dots,n_i]$. In particular, the
conditional expectation of  $Y_{i+N}$ with respect to ${\frak B}_i$
is
\begin{equation*}
\begin{split}
\mathbb{E}\Bigl(Y_{i+N} \big\vert \mathfrak{B}_i\Bigr)(\omega) 
&= \frac{1}{m(D_i(\omega))} \sum_{D_{i+N}(\tilde{\omega}) \subset D_i(\omega)} m(D_{i+N}(\tilde{\omega})) Y_{i+N}(\tilde{\omega}) \\
&= \frac{1}{m(D_i(\omega))} \sum_{D_{i+N}(\tilde{\omega}) \subset D_i(\omega)} \int_{D_{i+N}(\tilde{\omega})} S^E_{i+N}(\phi) \, dm \\
&= \frac{1}{m(D_i(\omega))} \int_{D_i(\omega)} S^E_{i+N}(\phi) \, dm.
\end{split}
\end{equation*}
%%%%%%%%%%%%%%%%%%%%%%%%%%%%%%%%%%%%%%%%%%%%%%%%
Clearly,
\[\notag
{\Bbb E}(Y_i\vert {\frak B}_i)=Y_i.
\]
Thus, the estimate \eqref{26} can be rewritten as follows
\[\label{28}
\Bigl|\frac1{N}{\Bbb E}\Bigl(Y_{N+i}-Y_i\vert {\frak B}_i\Bigr)-\int_\Omega \phi dm^{u, E}\Bigr|<\varepsilon/4.
\]
Define now  the sequence 
\[\notag
X_n=Y_{nN}-\sum_{k=1}^n {\Bbb E}\Bigl( Y_{kN}-Y_{(k-1)N}\vert {\frak B}_{(k-1)N}\Bigr).
\]
It is  easy to see that the sequence $\{X_n\}$  is a martingale, that is, \eqref{21} holds for $d_n=X_{n+1}-X_{n}$. Indeed,
\begin{equation}
\label{29}
\begin{split}
X_{n+1}-X_n &= Y_{(n+1)N} - Y_{nN} - \mathbb{E}\Bigl( Y_{(n+1)N}-Y_{nN} \bigm\vert \mathfrak{B}_{nN} \Bigr) \\
&= Y_{(n+1)N} - \mathbb{E}\Bigl( Y_{(n+1)N} \bigm\vert \mathfrak{B}_{nN} \Bigr).
\end{split}
\end{equation}
Since the  $\sigma$-algebra ${\frak F}_n$ generated by  the functions $X_1,\dots,X_n$  is precisely ${\frak B}_{nN}$,
 the above equation implies that for all $n\geq 1$,
 \[
 {\Bbb E}\Bigl(X_{n+1}-X_n\vert\, {\frak F}_n\Bigr)=0
 \]
We claim that 
\[
\sup_n \|X_{n+1}-X_n\|_\infty\leq a<\infty,
\]
  because of the H\"older continuity of  $\phi$.
  Indeed, it is clear that
  $$|X_1|=|Y_N-{\Bbb E}(Y_N-Y_0\vert {\frak B}_0)|\leq (2\|\phi\|_\infty+1) N=CN.$$
On the other  hand, 
$X_{n+1}- X_n $ may be rewritten as
\[\notag
\frac1{m(D_{(n+1)N}(\omega))}\int_{D_{(n+1)N}(\omega)}S_{(n+1)N}^E(\phi)dm-\frac1{m(D_{Nn}(\omega))}\int_{D_{Nn}(\omega)}S_{(n+1)N}^E(\phi)dm.
\]
Clearly, it is  sufficient to prove that the difference of  two values of $S_{(N+1)n}^E(\phi)$  at the points $(\omega',v)$
and $(\omega'',v)$  does not exceed a constant, if $\omega', \omega''\in D_{Nn}(\omega)$.
Note that $A^E_i(\omega')v$
 is independent of $\omega$ for all $\omega'\in D_{nN}(\omega)$
and all  $0\leq i\leq nN$.
So, we may denote $A^E_i(\omega')v$  by $v_i$.
Now, for any $\omega',\omega''\in D_{nN}(\omega)$,
\[\notag
\begin{aligned}
& \vert S_{(n+1)N}^E(\phi)(\omega', v) - S_{(n+1)N}^E(\phi)(\omega'',v)\vert \\& \quad\leq \Bigl(\sum_{i=0}^{nN-1} + \sum_{i=nN}^{(n+1)N-1}\Bigr) \vert\phi((F^E)^i(\omega',v)) - \phi((F^E)^i(\omega'',v))\vert \\
&\quad\leq \sum_{i=0}^{nN-1} \vert\phi(T^i\omega',v_i) - \phi(T^i\omega'',v_i)\vert + CN \quad\leq C\sum_{i=0}^{nN-1} d(T^i\omega',T^i\omega'')^\alpha + CN \\
&\quad\leq C\sum_{i=0}^{nN-1} e^{-\alpha(nN-i)} + CN \quad\leq C_\phi + CN.
\end{aligned}
\]
Thus, it follows from Lemma ~\ref{lemma3.9}  that for every $\delta>0$  and all $n\geq 1$, we have
\[\label{30}
\mu^u_{\omega^{-,j}}  \Bigl\{\omega\in W^u_{\rm loc}(\omega^{-,j}): \, \frac1n X_n>\delta \Bigr\}={\Bbb P}\Bigl\{ \frac1n X_n>\delta \Bigr\}<e^{-\frac{\delta^2}{2a^2}n}
\]
Suppose $\omega\in W^u_{\rm loc}(\omega^{-,j})$  satisfies
\[\label{31}
\Bigl\vert\frac1{nN}S_{nN}^E(\phi)(n,v)-\int_{\Omega\times{\Bbb R \Bbb P}^1} \phi\,dm^{u,E}\Bigr\vert>\varepsilon
\]
 We claim that  this inequality  holds on the set $D_{nN}(\omega)$ with $\varepsilon/2$. 
Indeed, repeating the  arguments that worked for the bound of $\|X_{n+1}-X_n\|_\infty$, we conclude that
for all $\omega'\in D_{nN}(\omega)$ $=[0,\omega_0,\omega_1,\dots,\omega_{nN}]\cap W^u_{\rm loc}(\omega^{-,j})$,
\[\notag
\vert S_{nN}^E(\phi)(\omega', v)-S_{nN}^E(\phi)(\omega,v)\vert<C_\phi.
\]
Combining this estimate with  \eqref{31}, we obtain  that
\[
\Bigl\vert  \frac1{nN}Y_{nN}(\omega)- \int_{\Omega\times{\Bbb R \Bbb P}^1} \phi\,dm^{u,E}  \Bigr\vert>\varepsilon/2
\]
for all $n\geq N_0$, where $N_0$ depends only on $\phi$ and $\varepsilon$.
 Combining the latter  inequality with  \eqref{28},  we conclude that if $\omega$  satisfies \eqref{31}, then
 \begin{align*}
\Bigl\vert \frac{1}{nN} X_{nN} \Bigr\vert
&= \Bigl\vert \frac{1}{nN} Y_{nN} - \frac{1}{n} \sum_{k=1}^n \frac{1}{N} \mathbb{E}\Bigl(Y_{kN} - Y_{(k-1)N} \Big\vert_{\mathfrak{B}_{(k-1)N}} \Bigr) \Bigr\vert \\
&\geq \Bigl\vert \frac{1}{nN} Y_{nN} - \int_{\Omega\times\mathbb{R}\mathbb{P}^1} \phi \, dm^{u,E} \Bigr\vert \\
&\quad - \Biggl\vert \frac{1}{n} \sum_{k=1}^n \biggl( \frac{1}{N} \mathbb{E}\bigl(Y_{kN} - Y_{(k-1)N} \big\vert_{\mathfrak{B}_{(k-1)N}} \bigr) - \int_{\Omega\times\mathbb{R}\mathbb{P}^1} \phi \, dm^{u,E} \biggr) \Biggr\vert > \frac{\varepsilon}{4}.
\end{align*}
 Applying \eqref{30} with $\delta=\varepsilon N/4$, we obtain for all $n\geq N_0$,
 \[\notag
 \mu^u_{\omega^{-,j}}\Bigl\{\omega\in W^u_{\rm loc}(\omega^{-,j}):\,\,
 \Bigl\vert\frac1{nN}S_{nN}^E(\phi)(\omega,v)-\int_{\Omega\times{\Bbb R \Bbb P}^1} \phi\,dm^{u,E}\Bigr\vert>\varepsilon\Bigr\}<e^{-\frac{\varepsilon^2 N}{32a^2 }nN}.
 \]
 It remains to note that the same arguments  could be repeated  for a sequence of indices $n N+l$ replacing the sequence $n N$.   
 The only property of this sequence that we  used  is  that the distance  between two consecutive indices in the sequence  is $N$.
 This completes the proof of Theorem~\ref{thm3.6}. $\,\,\Box$
 
 %%%%%%%%%%%%%%%%%%%%%%%%%%%%%%%%%%%%%%%%%%%%%%%%%%%%%%%%%%%%%%
 
 \bigskip
 
 For $v\in {\Bbb R \Bbb P}^1$, we  define $\underline{v}$ as  the unit vector
 having the same  direction as $v$.
 It is shown in \cite{ADZ2}   that Theorem~\ref{thm3.6}  implies the following result.
 
 \begin{theorem}\label{thm3.5} For any $\ve>0$, there are positive constants $C$ and $c$ for which
 \[\notag
 \mu^u_{\omega^-}\Bigl\{\omega\in W^u_{\rm loc}(\omega^-) :\,\, \bigl| \frac1n \ln \|A^E_n(\omega) \underline{v} \| -L(E)\bigr|                 >\ve  \Bigr\}<Ce^{-cn}
 \]
 uniformly in $(\omega^-,v)\in \Omega_-\times {\Bbb R \Bbb P}^1$,   and $E\in I$.
 \end{theorem}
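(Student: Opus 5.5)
The plan is to apply Theorem~\ref{thm3.6} to the Hölder function that measures one-step growth of the cocycle, and then to identify the resulting mean with $L(E)$ through the u-state. Define
\[\notag
\phi^E(\omega,v)=\ln\|A^E(\omega)\underline{v}\|,
\qquad (\omega,v)\in \Omega\times{\Bbb R \Bbb P}^1 .
\]
This function has three useful properties. It is well defined on ${\Bbb R \Bbb P}^1$, since $\underline{v}$ and $-\underline{v}$ give the same value. It is locally constant in $\omega$, because $A^E$ depends only on $\omega_0,\omega_{-1}$, and it is real-analytic in $v$, so $\phi^E\in C^\alpha$. Finally, it depends continuously on $E$, uniformly in $(\omega,v)$. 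By the cocycle property we have the telescoping identity
\[\notag
\ln\|A^E_n(\omega)\underline{v}\|=\sum_{k=0}^{n-1}\phi^E\bigl((F^E)^k(\omega,v)\bigr)=S_n(\phi^E)(\omega,v).
\]
The statement therefore reduces to a large deviation estimate for $\frac1n S_n(\phi^E)$ around $\int\phi^E\,dm^{u,E}$. Theorem~\ref{thm3.6} provides exactly this, except that it is stated for a fixed $\phi$. The dependence of $\phi^E$ on $E$ costs only an $\ve/3$ error: for $E$ near $E_0$ we have $\|\phi^E-\phi^{E_0}\|_\infty<\ve/3$, hence $|\frac1nS_n(\phi^E)-\frac1nS_n(\phi^{E_0})|<\ve/3$. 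The integrals also stay close, because $m^{u,E}$ depends weak-$*$ continuously on $E$ by the uniqueness lemma and $\phi^{E}\to\phi^{E_0}$ uniformly. So on a small neighbourhood of each $E_0$ we apply Theorem~\ref{thm3.6} with $\phi=\phi^{E_0}$ and $\ve/3$. Compactness of $I$ then lets us take finitely many such neighbourhoods and the worst constants $C,c$, which gives uniformity over $E\in I$ and $(\omega^-,v)$.

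The main point is the identification
\[\notag
\int\phi^E\,dm^{u,E}=L(E).
\]
I would argue this as follows. By invariance of $m^{u,E}$ and Birkhoff's theorem, $\int\phi^E\,dm^{u,E}=\lim_n\frac1n\int \ln\|A^E_n(\omega)\underline{v}\|\,dm^{u,E}$, and this limit is at most $L(E)$ since $\|A_n\underline{v}\|\le\|A_n\|$. For the reverse inequality, note that $L(E)>0$ for $E\in I$ (Theorem~\ref{thm2}). Oseledets' theorem then says that for $\mu$-a.e.\ $\omega$, every direction except the single stable direction $e^s(\omega)$ grows at rate $L(E)$. So it suffices that $m^{u,E}_\omega$ gives zero mass to $e^s(\omega)$.

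For this I would use the structure of u-states. The stable direction $e^s(\omega)$ depends only on the future of $\omega$, while by \eqref{constW} $m^{u,E}_\omega$ depends only on the past. Suppose $m^{u,E}_\omega$ charged $e^s(\omega)$ with positive probability. By the local product structure, $m^{u,E}_\omega$ would then charge $e^s(\omega')$ for a positive-measure set of futures $\omega'$ sharing the same past. Since these atoms are distinct points for distinct stable directions, this is impossible unless $e^s$ is essentially constant on the unstable set. But then $\delta_{e^s}$ would be invariant under both holonomies, giving a measure that is simultaneously a u-state and an s-state, which contradicts $E\notin\mathcal E$. Granting this, a.e.\ $v$ with respect to $m^{u,E}_\omega$ has growth rate $L(E)$, and dominated convergence (using $|\phi^E|\le\ln\sup\|A^E\|$) gives $\int\phi^E\,dm^{u,E}=L(E)$.

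I expect this last step, that the u-state avoids the stable direction, to be the real obstacle; the rest is bookkeeping. Alternatively, one can cite the analogous argument of \cite{ADZ2}, which uses exactly that $E\notin\mathcal E$ to exclude s-states among u-states.
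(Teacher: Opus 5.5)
Your reduction is the right one, and it is the route the paper takes by deferring to \cite{ADZ2}. You apply Theorem~\ref{thm3.6} to $\phi^E(\omega,v)=\ln\|A^E(\omega)\underline{v}\|$, use $S_n(\phi^E)(\omega,v)=\ln\|A^E_n(\omega)\underline{v}\|$, absorb the $E$-dependence by uniform closeness and compactness of $I$, and identify $\int\phi^E\,dm^{u,E}$ with $L(E)$. That bookkeeping is fine.

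The gap is in the identification, where you expected it. On a fixed unstable leaf, $m^{u,E}_{\omega^-}$ is a single probability measure on ${\Bbb R\Bbb P}^1$, and it may have several atoms. Knowing that $e^s(\omega')$ is an atom for a positive-measure set of $\omega'$ in the leaf only says that $e^s$ takes countably many values there. Using ergodicity, $p_0=m^{u,E}_\omega(\{e^s(\omega)\})$ is constant, so there are at most $1/p_0$ such values. This does not make $e^s$ essentially constant on the leaf, so it does not follow that $\int\delta_{e^s(\omega)}\,d\mu$ is an su-state.

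The missing input is the standard fact that, when $L(E)>0$, every $F^E$-invariant measure over $\mu$ disintegrates as $m_\omega=a\,\delta_{e^u(\omega)}+(1-a)\,\delta_{e^s(\omega)}$ with a constant $a$. This follows from Poincar\'e recurrence together with the forward attraction of every $v\neq e^s(\omega)$ toward $e^u(T^n\omega)$. With it, constancy of $m^{u,E}_\omega$ and of $e^u(\omega)$ along unstable leaves forces constancy of $e^s$ when $a<1$, and your su-state contradiction goes through.

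There are two cheaper repairs.
\begin{enumerate}
\item Use the uniqueness lemma, which you already invoke. The measure $\int\delta_{e^u(\omega)}\,d\mu(\omega)$ is $F^E$-invariant. Since $e^u(\omega)$ depends only on $\omega_n$ with $n\leq -1$, it is constant on local unstable sets, where $H^u={\rm Id}$. So this measure is a u-state and equals $m^{u,E}$. By invariance, $\int\phi^E\,dm^{u,E}=\frac1n\int\ln\|A^E_n(\omega)\underline{e^u(\omega)}\|\,d\mu$ for every $n$, and the right side tends to $L(E)$ by Oseledets and dominated convergence.
\item Bypass Oseledets entirely. Put $c_E=\int\phi^E\,dm^{u,E}$, apply Theorem~\ref{thm3.6} to two orthonormal vectors $v_1,v_2$, and integrate over $\omega^-$. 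Since $\|A^E_n\underline{v_1}\|\leq\|A^E_n\|\leq\sqrt2\max_i\|A^E_n\underline{v_i}\|$, the quantity $\frac1n\ln\|A^E_n\|$ lies within $\ve+O(1/n)$ of $c_E$ off a set of measure $O(e^{-cn})$. Kingman then gives $c_E=L(E)$.
\end{enumerate}

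Two smaller points. Positivity of $L$ on $I$ comes from the invariance principle ($L(E)=0\Rightarrow E\in\mathcal E$), not from Theorem~\ref{thm2} as stated. Also, since $A^E(\omega)$ depends on $\omega_{-1}$, the direction $e^s$ is equivariant under $H^s$ rather than constant on stable sets. This is harmless along unstable leaves, where $\omega_{-1}$ is fixed.
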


It  is relatively easy to establish (see \cite{ADZ2}  again)  that Theorem~\ref{thm3.5}  implies Theorem~\ref{thm2.10} below.

 \begin{theorem}\label{thm2.10} For any $\ve>0$, there are positive constants $C$ and $c$ depending only on $\ve$ for which
 \[\notag
 \mu\Bigl\{\omega\in\Omega :\,\, \bigl| \frac1n \ln \|A^E_n(\omega)\| -L(E)\bigr|     >\ve  \Bigr\}<Ce^{-cn}
 \]
 uniformly in  $E\in I$.
 \end{theorem}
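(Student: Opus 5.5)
We derive Theorem~\ref{thm2.10} from Theorem~\ref{thm3.5} in two steps. First we pass from a fixed vector $\underline v$ to the matrix norm. Then we integrate the fibrewise estimate over the disintegration $\{\mu^u_{\omega^-}\}$.

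\emph{Step 1 (vectors to norms).} Fix an orthonormal basis $e_1,e_2$ of ${\Bbb R}^2$. For any matrix $B$,
\[\notag
\max(\|Be_1\|,\|Be_2\|)\le \|B\|\le \sqrt2\max(\|Be_1\|,\|Be_2\|).
\]
Hence
\[\notag
\Bigl|\frac1n\ln\|A^E_n(\omega)\| -\frac1n\ln\max_{i=1,2}\|A^E_n(\omega)e_i\|\Bigr|\le \frac{\ln 2}{2n}.
\]
Take $n$ large enough that $\frac{\ln 2}{2n}<\ve/2$. If $\bigl|\frac1n\ln\|A^E_n(\omega)\|-L(E)\bigr|>\ve$, then one of two things happens. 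Either $\frac1n\ln\|A^E_n e_i\|>L+\ve/2$ for some $i$. Or $\frac1n\ln\|A^E_n e_i\|<L-\ve/2$ for both $i$. In either case there is an $i\in\{1,2\}$ with
\[\notag
\Bigl|\frac1n\ln\|A^E_n(\omega)e_i\|-L(E)\Bigr|>\ve/2 .
\]
So the exceptional set for the norm is contained in the union of the two exceptional sets for $v=e_1$ and $v=e_2$, taken at tolerance $\ve/2$.

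\emph{Step 2 (integration over unstable leaves).} Apply Theorem~\ref{thm3.5} with $\ve/2$ and $v=e_1,e_2$. For every $\omega^-$ and every $E\in I$ this gives
\[\notag
\mu^u_{\omega^-}\Bigl\{\omega\in W^u_{\rm loc}(\omega^-):\,\bigl|\tfrac1n\ln\|A^E_n(\omega)\|-L(E)\bigr|>\ve\Bigr\}<2Ce^{-cn}.
\]
Now use the defining property of the disintegration, $\mu(D)=\int_{\Omega_-}\mu^u_{\omega^-}(D\cap W^u_{\rm loc}(\omega^-))\,d\mu^-(\omega^-)$. Integrating the bound above with respect to $\mu^-$, and using that the fibre bound is uniform in $\omega^-$, we get $\mu\{\dots\}<2Ce^{-cn}$. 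This holds uniformly in $E\in I$.

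The finitely many small $n$ excluded in Step 1 are absorbed by enlarging $C$, since every probability is at most $1$. Measurability of the exceptional set is clear because $A^E_n$ is locally constant.

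The main point requiring care is the uniformity of the constants. They must be independent of $E\in I$ and of $\omega^-$, which is exactly what Theorem~\ref{thm3.5} provides. The constants depend on $I$ through $\eta$, which is consistent with the statement, where $I$ is fixed. Beyond that the argument is routine.
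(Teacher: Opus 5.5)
Your argument is correct and is essentially the standard derivation the paper defers to \cite{ADZ2}. You first pass from the matrix norm to the two fixed vectors $e_1,e_2$ via $\max_i\|Be_i\|\le\|B\|\le\sqrt2\max_i\|Be_i\|$, so a deviation of the norm forces a deviation at tolerance $\ve/2$ for one of them, and then you integrate the leafwise bound of Theorem~\ref{thm3.5}, which is uniform in $\omega^-$, $v$ and $E\in I$, against $\mu^-$ using the disintegration $\{\mu^u_{\omega^-}\}$.
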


We shall refer to this result as the Uniform Large Deviations  (ULD).

 %%%%%%%%%%%%%%
 
 \section{Localization as a consequence of ULD}

It is  convenient for  our purposes  to use the  following lemma  from \cite{ADZ2}.
 \begin{lemma}\label{lemma4.3} Let $F$  be an $\alpha$-H\"older  continuous  function on $\Omega_+$
 obeying the condition $\|F\|_\infty<1$ and $|F(\omega)-F(\omega')|<K d(\omega,\omega')^\alpha$  with $K>1$.
Then for all $\varepsilon>0$ and all $r,n\geq 1$
we have
\[\notag
\mu^+\Bigl\{     \omega^+:\, \Bigl\vert\frac1{r}\sum_{s=0}^{r-1} F(T_+^{ns}\omega)-\int F d\mu^+\Big\vert>\varepsilon\Bigr\}
< \exp\Bigl( -\frac{c\varepsilon^2 n^2r}{\log^2(K\varepsilon^{-1})}\Bigr)
\]
where the constant $c>0$ is independent  of $n, r, K$ and $\varepsilon$.
\end{lemma}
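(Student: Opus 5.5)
The plan is to reduce the estimate to the Azuma--Hoeffding inequality (Lemma~\ref{lemma3.9}) after two reductions: truncating $F$ to a locally constant function, and then organizing the sparse sum $\sum_{s} F(T_+^{ns}\omega)$ as a martingale with respect to the filtration generated by the coordinates of $\omega^+$. Since the statement is quoted from \cite{ADZ2}, I would follow their scheme, using only the bounded distortion property of $\mu$ and the finite-type structure of $\Omega$.

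\emph{Step 1 (truncation).} Let $\mathfrak{C}_k$ be the $\sigma$-algebra on $\Omega_+$ generated by $\omega_0,\dots,\omega_{k-1}$, and put $F_k={\Bbb E}(F\mid\mathfrak{C}_k)$ with respect to $\mu^+$. On each $k$-cylinder the diameter is at most $e^{-k}$, so $\|F-F_k\|_\infty\le Ke^{-\alpha k}$. Choose
\[\notag
m=\Bigl\lceil \alpha^{-1}\log(4K\varepsilon^{-1})\Bigr\rceil ,
\]
so that replacing $F$ by $F_m$ changes both the average and the integral by at most $\varepsilon/4$. This choice of $m\asymp\log(K\varepsilon^{-1})$ is the source of the factor $\log^{2}(K\varepsilon^{-1})$ in the exponent. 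Now $G_s:=F_m\circ T_+^{ns}$ depends only on the block of coordinates $\omega_{ns},\dots,\omega_{ns+m-1}$.

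\emph{Step 2 (martingale decomposition).} Write $F_m-\int F\,d\mu^+=\sum_{k=0}^{m}(F_{k}-F_{k-1})$ with $F_{-1}=\int F\,d\mu^+$. Each increment $\Delta_k=F_k-F_{k-1}$ satisfies
\[\notag
\|\Delta_k\|_\infty\le 2Ke^{-\alpha(k-1)}\wedge 2 .
\]
For fixed $k$, consider the terms $\Delta_k\circ T_+^{ns}$, $s=0,\dots,r-1$, ordered by $s$, with the filtration $\mathfrak{G}_s$ generated by the coordinates up to index $ns+k-1$. The term $\Delta_k\circ T_+^{ns}$ is $\mathfrak{G}_s$-measurable. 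Its conditional expectation given $\mathfrak{G}_{s-1}$ is a conditional expectation of $\Delta_k$ at lag $n-k+\dots$, which would vanish exactly under the shift-invariant product structure and is small in general.

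To control it I would use bounded distortion. The conditional law of the block $\omega_{ns},\dots,\omega_{ns+k-1}$ given the coordinates up to $n(s-1)+k-1$ is comparable, within factor $C$, to $\mu^+$ conditioned on the symbol transition across the gap. Combining this with the equidistribution of Lemma~\ref{lemma3.7} (uniform exponential mixing of the SFT along gaps of length $\ge n-m$), the conditional expectation is $O(\|\Delta_k\|_\infty\theta^{\,n-k})$ for some $\theta<1$. For $n\gtrsim m$ this is absorbed into the $\varepsilon$-budget. For $n\lesssim m$ the claimed bound is weaker than the trivial Azuma bound $e^{-c\varepsilon^2 r}$ after grouping terms, so that case is handled by simply regrouping the $r$ terms into $\sim r$ blocks.

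\emph{Step 3 (summing the bounds).} Apply Lemma~\ref{lemma3.9} to each of the $m+1$ martingales, using thresholds $\varepsilon_k$ with $\sum_k\varepsilon_k\le\varepsilon/2$, weighted proportionally to $\|\Delta_k\|_\infty$. Then take a union bound. The gain of $n^2$ should come from this bookkeeping: the increments have size $\lesssim\min(1,Ke^{-\alpha k})$, and the relevant ratio is effectively $a/\varepsilon\sim m/n$ once the sparse sampling separates blocks by $n$. This yields $\exp(-c\varepsilon^2 r n^2/m^2)$, which is the claimed bound.

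The main obstacle is Step~2: making the conditional-expectation error genuinely small (not merely within a factor $C$) using only bounded distortion. This requires a quantitative loss-of-memory estimate for $\mu^+$ across gaps. I would first try to extract it from the uniform convergence in Lemma~\ref{lemma3.7}, with a Perron--Frobenius/Hilbert-metric contraction for the transfer operator of the SFT as the fallback. Verifying that the $n^2$ (rather than $n$) dependence truly emerges from the weighting in Step~3 is the second delicate point.
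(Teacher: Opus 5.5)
\textbf{There is a genuine gap: the factor $n^2$, and the bookkeeping in your Step~3.}

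The paper gives no proof of this lemma; it simply quotes it from Avila--Damanik--Zhang. Your argument therefore has to stand on its own, and it breaks at the point you flagged as delicate. No bookkeeping can produce the factor $n^2$, because the inequality as printed is false once $n\gg\log(K\varepsilon^{-1})$.

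\emph{Counterexample.} Take $r=1$. The left-hand side is $\mu^+\{|F-\int F\,d\mu^+|>\varepsilon\}$, which does not depend on $n$. Let $\chi$ be the indicator of $\{\omega_0=j\}$, where $p=\mu([0;j])\in(0,1)$, and set $F=\frac12\chi$. Then $\|F\|_\infty<1$ and $K=2$ works for every $\alpha$. For $\varepsilon<\frac12\min\{p,1-p\}$ the left-hand side equals $1$, whereas the right-hand side is $<1$ and tends to $0$ as $n\to\infty$. For fixed $r>1$ the left-hand side also stays bounded away from $0$: bounded distortion gives $\mu^+\{\omega_{ns}=j,\ 0\le s<r\}\ge C^{1-r}p^r$ for large $n$, and on that set the average deviates by $\frac{1-p}2$.

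\emph{Why the $n^2$ cannot appear.} When $n>m$, each sample $F_m\circ T_+^{ns}$ is an essentially fresh $O(1)$ variable. The martingale differences are then of order $1$, not of order $m/n$, and the best possible exponent is of order $\varepsilon^2 r$.

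\emph{Correct target.} What you should aim for is $C\exp\bigl(-c\varepsilon^2 r\min\{1,n^2/\log^2(K\varepsilon^{-1})\}\bigr)$. This is all the paper uses: in Lemma~\ref{lemma4.2} one takes $K=3e^{n\alpha}$, so $\log(K\varepsilon^{-1})\asymp n$ for the large $n$ used there. The prefactor $C$, as in the two-sided Azuma--Hoeffding bound, is also forced by the same $r=1$ example.

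\textbf{Your Step~3 does not reach even this corrected bound.} Running one Azuma estimate per $k$ with $\varepsilon_k\propto\|\Delta_k\|_\infty$ gives $\varepsilon_k/\|\Delta_k\|_\infty\asymp\varepsilon/\sum_j\|\Delta_j\|_\infty$. The bound is therefore of order $\exp(-c\varepsilon^2 r\alpha^2/\log^2 K)$, with no dependence on $n$. In Lemma~\ref{lemma4.2} this becomes $e^{-c\varepsilon^2 r/n^2}$. With $r=n^4$, that no longer survives the union over $|s_0|\le e^{n^{7/2}}$ in Proposition~\ref{propo4.4}.

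Your treatment of $n\lesssim m$ is also incorrect. Consecutive samples share $m-n$ coordinates, so regrouping into $\sim r$ blocks removes no dependence. Moreover, a bound $e^{-c\varepsilon^2 r}$ with universal $c$ is false in that regime: for $n=1$ and $F$ depending on $m$ coordinates, the variance of the sum can be of order $rm$.

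\textbf{The missing idea: one Doob martingale over $n$-blocks.} Take a single Doob martingale for $S=\sum_{s<r}F_m\circ T_+^{ns}$ along the block filtration $\sigma(\omega_0,\dots,\omega_{nt-1})$.
\begin{itemize}
\item The coordinates in $[n(t-1),nt)$ enter at most $\lceil m/n\rceil+1$ windows $[ns,ns+m)$.
\item Later samples feel these coordinates only through loss of memory.
\item Hence the differences are $\lesssim 1+m/n$, and Azuma gives an exponent $\gtrsim\varepsilon^2 r/(1+m/n)^2\asymp\varepsilon^2 r\min\{1,n^2/m^2\}$.
\end{itemize}
In the paper's application, $F=g_n^+/(2\ln\Gamma)$ depends on exactly $\omega_0,\dots,\omega_{n-1}$. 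So the windows are disjoint and no truncation is needed.

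This argument requires a uniform, summable loss-of-memory estimate for $\mu^+$, and that is a genuine extra input. Lemma~\ref{lemma3.7} gives only Ces\`aro convergence with no rate, so it cannot yield your $O(\theta^{n-k})$. You would need a transfer-operator or coupling argument for that step.
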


In what follows, we will use  the notations
\[\notag
g_n(\omega, E)=\frac1n\log(\| A_n^E(\omega)\|),
\] and $\Gamma=\sup_{E\in I}\|A^E(\cdot)\|_{\infty}$.
For each $1\leq j\leq  \ell$,  choose $\omega^{(j)}\in [0,j]$
and define $\phi(\omega)=\omega^{(\omega_0)}\wedge \omega$  to be the unique
element that belongs to $W^u_{\rm loc}(\omega^{(\omega_0)})\cap W^s_{\rm loc}(\omega)$.

For each natural number $n$, we define   the function
\[\label{35}
g^+_n(\omega, E)=g_{n-1}(T\omega, E),\qquad \omega\in \Omega.
\]
Then we set
\begin{equation}
    h_n(\omega, E) = g_n(\omega, E) - g^+_n(\omega, E), \qquad \omega \in \Omega.
    \label{34}
\end{equation}
Note that
\[\label{38}
\sup_{E\in I}\|g_n^+(\cdot,E)\|_\infty=\sup_{E\in I}\|g_n(\cdot,E)\|_\infty\leq \ln\Gamma.
\]
By construction,   $g^+_n(\omega, E)$ is constant on $W^s_{\rm loc}(\omega)$ (it depends only on the future).    
Moreover, it depends  only on  $\omega_0,\dots,\omega_{n-1}$. Therefore,  
\[
|g^+_n(\omega, E)-g^+_n(\omega', E)|\leq2e^{n\alpha} \ln \Gamma \cdot  d(\omega,\omega')^\alpha,
\]
for any $\alpha>0$.

\begin{proposition}\label{propohn}
Let $h_n$  be defined  by \eqref{34}, then
\[\label{lng}
|h_n(\omega, E)|\leq \frac{2}{n}\bigl(\ln (\Gamma)+1\bigr).
\]
\end{proposition}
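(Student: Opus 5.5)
We need $|g_n - g_{n-1}\circ T|\le \frac{2}{n}(\ln\Gamma+1)$. Note $g_n(\omega,E)=\frac1n\ln\|A^E_n(\omega)\|$ and $g^+_n(\omega,E)=\frac{1}{n-1}\ln\|A^E_{n-1}(T\omega)\|$. The plan is to compare the two logarithms through the cocycle identity
\[
A^E_n(\omega)=A^E_{n-1}(T\omega)\,A^E(\omega),
\]
and then to correct for the different normalizations $1/n$ and $1/(n-1)$.

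\textbf{Step 1 (comparing the norms).} Since $A^E(\omega)\in{\rm SL}(2,{\Bbb R})$, we have $\|A^E(\omega)^{-1}\|=\|A^E(\omega)\|\le\Gamma$ and $\|A^E(\omega)\|\ge1$. From the identity and submultiplicativity we get
\[
\ln\|A^E_n(\omega)\|\le \ln\|A^E_{n-1}(T\omega)\|+\ln\Gamma.
\]
Writing $A^E_{n-1}(T\omega)=A^E_n(\omega)A^E(\omega)^{-1}$ gives the reverse bound. Hence
\[
\bigl|\ln\|A^E_n(\omega)\|-\ln\|A^E_{n-1}(T\omega)\|\bigr|\le\ln\Gamma.
\]

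\textbf{Step 2 (normalization).} Set $a=\ln\|A^E_n(\omega)\|$ and $b=\ln\|A^E_{n-1}(T\omega)\|$; both are nonnegative, since ${\rm SL}(2)$ matrices have norm at least $1$. Then
\[
h_n=\frac{a}{n}-\frac{b}{n-1}=\frac{a-b}{n}-\frac{b}{n(n-1)}.
\]
The first term is at most $\frac{\ln\Gamma}{n}$ in absolute value by Step 1. For the second term, $b\le(n-1)\ln\Gamma$, so it is also bounded by $\frac{\ln\Gamma}{n}$. Altogether $|h_n|\le\frac{2\ln\Gamma}{n}$, which is at most the claimed $\frac{2}{n}(\ln\Gamma+1)$.

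\textbf{The one point needing care.} The statement uses the cocycle $A^E$ of \eqref{defineA}. This matrix carries the prefactor $\sqrt{\omega_0/\omega_{-1}}$, and its determinant is $\frac{\omega_0}{\omega_{-1}}\cdot\frac{\omega_{-1}}{\omega_0}=1$, so it does lie in ${\rm SL}(2,{\Bbb R})$ and $\|A^{-1}\|=\|A\|$ holds. If one instead worked with a non-unimodular normalization, $\|A^{-1}\|$ would be bounded by a constant depending on $\ell$, not by $\Gamma$. The "$+1$" slack in \eqref{lng} comfortably absorbs such a constant, as well as the degenerate case $n=1$, where $g^+_1$ is interpreted as $0$ and $|h_1|=|g_1|\le\ln\Gamma$.
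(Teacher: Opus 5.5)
Your proof is correct and follows the paper's argument. The one-step identity $A^E_n(\omega)=A^E_{n-1}(T\omega)A^E(\omega)$ gives your term $\frac{a-b}{n}$, which is the paper's bound on $\bigl|g_n-\frac{n-1}{n}g_{n-1}(T\omega)\bigr|$, and the normalization mismatch $\frac{b}{n(n-1)}=\frac1n g_{n-1}(T\omega)$ is at most $\frac1n\ln\Gamma$. Your check that $\det A^E=1$, hence $\|[A^E(\omega)]^{-1}\|=\|A^E(\omega)\|\le\Gamma$, justifies a step the paper uses without comment and yields the slightly sharper bound $\frac{2}{n}\ln\Gamma$.
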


{\it Proof}. Note that
\[\notag
\ln\|A_{n-1}^E(T\omega)\|-\ln\|[A^E(\omega)]^{-1}\|\leq \ln\|A_n^E(\omega)\|\leq \ln\|A_{n-1}^E(T\omega)\|+\ln\|A^E(\omega)\|.
\]
Consequently,
\[\notag
\Bigl|g_n(\omega, E)-\frac{n-1}{n}g_{n-1}(T\omega, E)\Bigr|\leq \frac{1}{n}\bigl(\ln ( \Gamma)+1\bigr).
\]
Finally, since $g_{n-1}(T\omega, E)$  and $g^+_{n}(\omega, E)$ are the same, we obtain \eqref{lng}. $\,\Box$

\bigskip

In the statement below,  we  treat $g_n^+(\cdot,E)$ as a  function defined on $\Omega^+$.

\begin{lemma}\label{lemma4.2}
For every $\varepsilon\in(0,1)$, there is an $n_0=n_0(\varepsilon, \Gamma)$ such that
\[\notag
\mu^+\Bigl\{\omega^+:\,\, \Bigl| \frac1r \sum_{s=0}^{r-1} g^+_n(T_+^{ns}\omega^+, E)-L(E)\Bigr|>\varepsilon  \Bigr\}\leq e^{-c\varepsilon^2 r}
\]
for all $E\in I, r\in {\Bbb N}$, and all $n\geq n_0$. The constant $c>0$ in this inequality  is universal.
\end{lemma}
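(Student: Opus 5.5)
The plan is to apply Lemma~\ref{lemma4.3} to the function $F=g_n^+(\cdot,E)$, viewed as a function on $\Omega_+$, with the normalization in that lemma arranged by rescaling. Since $g^+_n$ depends only on $\omega_0,\dots,\omega_{n-1}$, it is indeed a function on $\Omega_+$. By \eqref{38} it satisfies $\|g_n^+\|_\infty\le \ln\Gamma$, and the Hölder estimate stated just before Proposition~\ref{propohn} gives a Hölder constant of order $e^{n\alpha}\ln\Gamma$. We therefore set $F=g_n^+/(2\ln\Gamma+2)$, so that $\|F\|_\infty<1$ and $F$ is $\alpha$-Hölder with constant $K=e^{n\alpha}$ (after enlarging constants so that $K>1$). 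Lemma~\ref{lemma4.3} with $\varepsilon'=\varepsilon/(4\ln\Gamma+4)$ then bounds
\[\notag
\mu^+\Bigl\{\Bigl|\frac1r\sum_{s=0}^{r-1} g^+_n(T_+^{ns}\omega^+,E)-\int g_n^+\,d\mu^+\Bigr|>\varepsilon/2\Bigr\}
\]
by $\exp\bigl(-c\varepsilon'^2n^2r/\log^2(e^{n\alpha}/\varepsilon')\bigr)$. The key point is that $\log(K/\varepsilon')=n\alpha+\log(1/\varepsilon')\le 2n\alpha$ once $n\ge n_0(\varepsilon,\Gamma)$. Hence the ratio $n^2/\log^2(K/\varepsilon')$ is bounded below by $1/(4\alpha^2)$, and the exponent becomes $-c'\varepsilon^2 r$. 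The factor $(\ln\Gamma+1)^{-2}$ can be absorbed either into the constant or by requiring $n_0$ large enough that the extra factor $n^2$ beats it.

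The second step replaces $\int g_n^+\,d\mu^+$ by $L(E)$, uniformly in $E\in I$. By $T$-invariance of $\mu$, $\int g_n^+\,d\mu=\frac{n-1}{n}\int g_{n-1}\,d\mu$, and $\frac1m\int\ln\|A_m^E\|\,d\mu\to L(E)$. To get uniformity in $E$, I would use the ULD Theorem~\ref{thm2.10}: the set where $|g_{n-1}-L(E)|>\varepsilon/8$ has measure at most $Ce^{-cn}$, and elsewhere $g_{n-1}$ is within $\varepsilon/8$ of $L(E)$. Combined with the bound $|g_{n-1}|\le\ln\Gamma$ and $L(E)\le\ln\Gamma$, this gives
\[\notag
\Bigl|\int g_n^+\,d\mu-L(E)\Bigr|\le \varepsilon/8+2\ln\Gamma\, Ce^{-cn}+\frac{\ln\Gamma}{n}<\varepsilon/2
\]
for $n\ge n_0$, uniformly in $E\in I$. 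The triangle inequality then yields the claimed estimate.

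The main obstacle is the first step: making sure that the Hölder constant $K$, which grows exponentially in $n$, is exactly compensated by the $n^2$ in Lemma~\ref{lemma4.3}. This is why the block spacing $ns$ matches the dependence length $n$. A secondary issue is that the resulting constant $c$ should be universal rather than dependent on $\Gamma$. This can be handled by choosing $n_0$ so large that $n^2\varepsilon'^2/\log^2(K/\varepsilon')\ge \varepsilon^2$ up to a universal factor, which is possible since the dependence on $\Gamma$ only enters through $\varepsilon'$ inside a logarithm and through a factor that the growth in $n$ can dominate. Finally, the constant in front of the exponential can be removed by the same choice of $n_0$.
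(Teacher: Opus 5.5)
Your proposal is correct and is essentially the paper's proof. You apply Lemma~\ref{lemma4.3} to a rescaled $g_n^+$ with $K\sim e^{n\alpha}$, so that the $n^2$ in the exponent cancels $\log^2(K/\varepsilon')\approx\alpha^2n^2$, and you use the ULD (Theorem~\ref{thm2.10}) with $\|g_n^+\|_\infty\le\ln\Gamma$ to replace $\int g_n^+\,d\mu^+$ by $L(E)$ uniformly in $E\in I$. The one flaw is your claim that a large $n_0$ can remove the factor $(\ln\Gamma+1)^{-2}$: since $n^2\varepsilon'^2/\log^2(K/\varepsilon')\to\varepsilon'^2/\alpha^2$, there is no surplus growth in $n$, so $c$ genuinely depends on $\Gamma$ (and $\alpha$), exactly as in the paper's own argument, where ``universal'' can only mean independent of $\varepsilon$, $n$, $r$ and $E$.
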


{\it Proof}.
It is  easy to see that the ULD property of $g_n$ implies a corresponding  property  for $g_n^+$. That is,
for every $\varepsilon>0$, there are  positive constants $C$ and $c$  such that
\[\notag
\mu^+\Bigl\{\omega^+:\,\, \Bigl| g_n^+(\omega^+, E)-L(E)\Bigr|>\varepsilon  \Bigr\}\leq C e^{-cn}
\]
holds  uniformly in $E\in I$. Since the function $g_n^+$ is  bounded by $\ln \Gamma$
and $0<L(E)\leq \ln \Gamma$,  we conclude that  there is an $n_1(\varepsilon)$  such that
\[\notag
\Bigl|\int_{\Omega^+}g_n^+(\omega^+, E)d\mu^+-L(E)\Bigr|\leq \frac{\varepsilon}{10}
\] for all $n>n_1(\varepsilon)$.
We apply Lemma~\ref{lemma4.3} to the  function
$ F=\frac{g_n^+}{2\ln \Gamma}.$ Setting $K=3e^{n\alpha}$, we obtain that
there is an $n_2(\varepsilon, \Gamma)$ such that
\[\notag
\mu^+\Bigl\{\omega^+:\,\, \Bigl| \frac1r \sum_{s=0}^{r-1}g^+_n(T_+^{ns}\omega^+, E)-\int_{\Omega^+} g^+_n(\omega, E)d\mu^+\Bigr|>\varepsilon  \Bigr\}\leq  e^{-c\varepsilon^2 r}
\]
for all $n>n_2(\varepsilon,\Gamma)$.  It remains to combine the  two last  estimates. $\Box$

%%%%%%%%%%%%%%%%%%%%%%%%%%%%%%%

\begin{lemma}\label{lemma4.1}
For every $\varepsilon\in(0,1)$, there is an $n_0=n_0(\varepsilon, \Gamma)$ such that
\[
\mu\Bigl\{ \omega\in \Omega:\,\,  \Bigl| \frac1r\sum_{s=0}^{r-1} g_n(T^{ns+s_0}\omega,E)-L(E)    \Bigr|  >\varepsilon \Bigr\}\leq e^{-c\varepsilon^2 r},
\]
for all $n>n_0$, $s_0\in {\Bbb Z}$ and $r\in {\Bbb N}$.  The constant $c>0$ in this inequality  is universal.
\end{lemma}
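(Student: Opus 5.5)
First, $T$-invariance of $\mu$ removes the shift $s_0$. The set
\[\notag
\Bigl\{\omega:\,\Bigl|\frac1r\sum_{s=0}^{r-1} g_n(T^{ns+s_0}\omega,E)-L(E)\Bigr|>\varepsilon\Bigr\}
\]
is the preimage under $T^{s_0}$ of the same set with $s_0=0$. Since $\mu(T^{-s_0}\Delta)=\mu(\Delta)$, it suffices to treat $s_0=0$.

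Next, I replace $g_n$ by $g_n^+$ using Proposition~\ref{propohn}. Writing $g_n=g_n^++h_n$, we have $|h_n|\le \frac2n(\ln\Gamma+1)$ uniformly in $\omega$ and $E\in I$. Choose $n_0\ge n_0'$ so large that $\frac2n(\ln\Gamma+1)<\varepsilon/2$ for $n>n_0$. Then
\[\notag
\Bigl|\frac1r\sum_{s=0}^{r-1} g_n(T^{ns}\omega,E)-L(E)\Bigr|>\varepsilon
\quad\text{implies}\quad
\Bigl|\frac1r\sum_{s=0}^{r-1} g^+_n(T^{ns}\omega,E)-L(E)\Bigr|>\varepsilon/2 .
\]

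The function $g_n^+(\cdot,E)$ depends only on $\omega_0,\dots,\omega_{n-1}$, so it is the pullback $\tilde g_n^+\circ\pi^+$ of a function on $\Omega^+$. Because $\pi^+\circ T=T_+\circ\pi^+$, we also have $g_n^+(T^{ns}\omega,E)=\tilde g_n^+(T_+^{ns}\pi^+\omega,E)$. The event in the last display is therefore $(\pi^+)^{-1}$ of the corresponding event in $\Omega^+$, and its $\mu$-measure equals the $\mu^+$-measure of that event, since $\mu^+=(\pi^+)_*\mu$. Lemma~\ref{lemma4.2} applied with $\varepsilon/2$ (valid for $n\ge n_0(\varepsilon/2,\Gamma)$) bounds this by $e^{-c\varepsilon^2 r/4}$. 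Renaming the universal constant $c/4$ as $c$ gives the claim, with $n_0=n_0(\varepsilon,\Gamma)$ taken as the maximum of the two thresholds.

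The only delicate point is bookkeeping rather than a genuine obstacle. One needs that the threshold in Lemma~\ref{lemma4.2} and the bound on $h_n$ are both uniform in $E\in I$ and $\omega$, which holds by \eqref{38} and \eqref{lng}. One also needs the factorization through $\pi^+$, which makes the measure-preservation step exact, so that no bounded-distortion argument is required here.
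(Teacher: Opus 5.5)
Your proof is correct and follows essentially the same route as the paper. You use Proposition~\ref{propohn} to pass from $g_n$ to $g_n^+$ at the cost of halving $\varepsilon$ once $n>4(1+\ln\Gamma)/\varepsilon$, identify the resulting event as $(\pi^+)^{-1}$ of an event in $\Omega^+$ so that Lemma~\ref{lemma4.2} applies via $\mu^+=(\pi^+)_*\mu$, and remove $s_0$ by $T$-invariance. The differences are cosmetic: you dispose of $s_0$ at the start where the paper does so at the end, and you state explicitly the renaming of the universal constant $c$ that the paper leaves implicit.
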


{\it Proof}.  Let
\[\notag
B_r^+(\varepsilon)=\Bigl\{ \omega^+\in \Omega^+:\,\,  \Bigl| \frac1r\sum_{s=0}^{r-1} g^+_n(T_+^{ns}\omega^+,E)-L(E)    \Bigr|  >\varepsilon \Bigr\}.
\]
Using Proposition~\ref{propohn}, we  observe  that
\[
|g^+_n(\omega^+,E)-g_n(\omega,E)|\leq \frac{2}{n}\bigl(\ln (\Gamma)+1\bigr).
\]
If $n>4\frac{(1+\ln \Gamma)}{\varepsilon}$, then
\[\notag
  \Bigl| \frac1r\sum_{s=0}^{r-1} g_n(T^{ns}\omega,E)-L(E)    \Bigr|  >\varepsilon \implies \Bigl| \frac1r\sum_{s=0}^{r-1} g^+_n(T_+^{ns}\omega^+,E)-L(E)    \Bigr|  >\varepsilon/2.
\]
Consequently,
\begin{equation*}
\begin{aligned}
\mu\Bigl\{ \omega\in \Omega :\,& \Bigl| \frac{1}{r}\sum_{s=0}^{r-1} g_n(T^{ns}\omega,E) - L(E) \Bigr| > \varepsilon \Bigr\} \\
&\leq \mu\Bigl((\pi^+)^{-1}B_r^+(\varepsilon/2)\Bigr) = \mu^+\Bigl(B_r^+(\varepsilon/2)\Bigr) \leq e^{-c\varepsilon^2 r}.
\end{aligned}
\end{equation*}
Finally,  one can  add $s_0$ to $ns$, because $\mu$  is $T$-invariant. $\qquad \Box$

\bigskip

It is established in the last section of this paper that
there are constants $\beta>0$ and $C>0$ such that
\[\label{32}
|L(E')-L(E)|\leq C|E'-E|^\beta,\qquad \forall E,E'\in I.
\] 

In the proposition below,  instead of  the exponent $2/3$
 obtained  in \cite{ADZ2}, we obtain  a smaller exponent $2/7$, which is justified by our analysis of a special case.
\begin{proposition}\label{propo4.4} 
For any $0<\varepsilon<1$, there is a set $\Omega(\varepsilon)\subset \Omega$ of full $\mu$-measure
such that for any $\omega\in \Omega(\varepsilon)$ there is an index $\tilde n_0=\tilde n_0(
\omega,\varepsilon)$  for which
\[\label{42}
\Bigl|\frac1{n^4}\sum_{s=0}^{n^4-1}g_n(T^{ns+s_0}\omega, E)-L(E)\Bigr|<\varepsilon,
\]
if $n>\max\{\tilde n_0, (\ln(|s_0|+1))^{2/7}\}$  and $E\in I$.
\end{proposition}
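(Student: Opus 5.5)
The plan is a Borel--Cantelli argument. The estimate of Lemma~\ref{lemma4.1} (with $r=n^4$) will be combined with a discretization of the energy interval $I$. The passage from a finite grid of energies to all of $I$ will use the H\"older continuity \eqref{32} of $L$ together with a crude Lipschitz bound for $g_n$ in $E$.

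\textbf{Step 1: a fixed energy.} Fix $E\in I$, $s_0\in{\Bbb Z}$ and $n>n_0(\varepsilon/3,\Gamma)$. Lemma~\ref{lemma4.1} with $r=n^4$ gives
\[\notag
\mu\Bigl\{\omega:\,\Bigl|\frac1{n^4}\sum_{s=0}^{n^4-1}g_n(T^{ns+s_0}\omega,E)-L(E)\Bigr|>\varepsilon/3\Bigr\}\leq e^{-c\varepsilon^2 n^4/9}.
\]

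\textbf{Step 2: energy discretization.} Each entry of $A^E$ is of the form $a\cos\sqrt E+b$ with $\omega$-bounded coefficients. For $E$ in the compact interval $I$, which stays away from $0$, we have $\|\partial_E A^E\|\leq C$, and the inverses are uniformly bounded. Hence
\[\notag
\|A_n^E(\omega)-A_n^{E'}(\omega)\|\leq nC\Gamma^{n}|E-E'|.
\]
Since $\|A_n^E\|\geq 1$, this yields $|g_n(\omega,E)-g_n(\omega,E')|\leq C\Gamma^{n}|E-E'|$. Choose a grid $\{E_i\}\subset I$ of mesh $\Gamma^{-2n}$, so it has at most $C\Gamma^{2n}$ points. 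For $E$ within $\Gamma^{-2n}$ of $E_i$, the left side changes by at most $C\Gamma^{-n}$. By \eqref{32}, $|L(E)-L(E_i)|\leq C\Gamma^{-2n\beta}$. Both quantities are $<\varepsilon/3$ for $n$ large. So \eqref{42} at all $E\in I$ follows from the $\varepsilon/3$-version at the grid points.

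\textbf{Step 3: union bound and Borel--Cantelli.} Let $B_{n,s_0}$ be the union over the grid of the bad sets from Step 1. Then
\[\notag
\mu(B_{n,s_0})\leq C\Gamma^{2n}e^{-c\varepsilon^2n^4/9}.
\]
We need $\sum_n\sum_{s_0}\mu(B_{n,s_0})<\infty$, where for each $n$ the sum runs only over $|s_0|$ with $(\ln(|s_0|+1))^{2/7}<n$, i.e. $|s_0|<e^{n^{7/2}}$. This gives the bound
\[\notag
\sum_n 2e^{n^{7/2}}\,C\Gamma^{2n}e^{-c\varepsilon^2n^4/9}.
\]
The sum converges because $n^{7/2}=o(n^4)$; this is exactly where the exponent $2/7$ (rather than anything up to $1/4$) leaves room. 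Borel--Cantelli then gives a full-measure set $\Omega(\varepsilon)$ such that each $\omega\in\Omega(\varepsilon)$ lies in only finitely many $B_{n,s_0}$ with $|s_0|<e^{n^{7/2}}$. Taking $\tilde n_0(\omega,\varepsilon)$ beyond the largest such $n$ and beyond the thresholds from Steps 1--2 yields \eqref{42}.

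\textbf{Main obstacle.} The main difficulty is Step 2: controlling $g_n$ uniformly in $E$ while keeping the grid cardinality subexponential relative to $e^{cn^4}$. The Lipschitz constant of $g_n$ grows only like $\Gamma^{n}$, so the grid has size $e^{O(n)}$, which is harmless. The uniformity of the constant in Lemma~\ref{lemma4.1} over $E\in I$ and $s_0$ is essential and is taken from that lemma.
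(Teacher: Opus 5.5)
Your proposal is correct and follows essentially the same route as the paper. Both arguments apply Lemma~\ref{lemma4.1} with $r=n^4$ at the points of an energy grid of exponentially small mesh (the paper takes mesh $\varepsilon/(3\Gamma^n)$, you take $\Gamma^{-2n}$), pass to all of $I$ via the Lipschitz bound for $g_n$ in $E$ and the H\"older bound \eqref{32}, and then take the union over $|s_0|\le e^{n^{7/2}}$ and apply Borel--Cantelli using $n^{7/2}=o(n^4)$.
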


{\it Proof}. Let  us  consider the complement  of the set of $\omega$'s  for which \eqref{42}  holds for all $E\in I$:
\[\label{defBns}
B_{n,s_0}=\Bigl\{\omega\in \Omega:\,\,\sup_{E\in I}\Bigl|\frac1{n^4}\sum_{s=0}^{n^4-1}g_n(T^{ns+s_0}\omega, E)-L(E)\Bigr|\geq \varepsilon \Bigr\}.
\]
Let $\vk=|I|$. For any $0<\delta\leq \vk/2$, define the discrete set $I_0=I\cap (\delta {\Bbb Z})$.
Then the cardinality of the set $I_0$ obeys ${\rm card}(I_0)\leq \vk/\delta+1\leq 2\vk/\delta.$
Choose now $\delta=\varepsilon/(3\Gamma^n)$. If necessary,   choose $n$ so large that
\[\label{45}
C\delta^\beta=C\bigl(\frac{\varepsilon}{3\Gamma^n}\bigr)^\beta<\frac{\ve}3
\]
where $C$ and $\beta$ are the same as in \eqref{32}.

Combining \eqref{32} with the inequality
\[
\notag
|g_n(\omega, E)-g_n(\omega, E')|\leq \Gamma^{n-1}|E-E'|,
\]
we obtain 
\[\notag
B_{n,s_0}\subset \bigcup_{E\in I_0}
\Bigl\{\omega\in \Omega:\,\,\Bigl|\frac1{n^4}\sum_{s=0}^{n^4-1}g_n(T^{ns+s_0}\omega, E)-L(E)\Bigr|\geq \frac{\varepsilon }3 \Bigr\}.
\]
Taking $n>n_0(\varepsilon/3,\Gamma)$  and  using the fact that ${\rm card}(I_0)\leq \frac{6\vk \Gamma^n}{\ve}$, we obtain by Lemma~\ref{lemma4.1} that
\[
\mu\bigl(  B_{n,s_0} \bigr)\leq \frac{6\vk \Gamma^n}{\ve}e^{-c\ve^2 n^4}
\]
Using this estimate, we  conclude that  the measures of  the sets
\[\notag
B_n=\bigcup_{|s_0|\leq e^{ n^{7/2}}}B_{n,s_0}
\]
obey the condition
\[\notag
\sum_{n=n_0+1}^\infty \mu(B_n)<\infty.
\]
By the Borel-Cantelli lemma,  almost every point $\omega$ belongs to at most  finitely many of the sets $B_n$.
In other words,  for almost every point $\omega$, there is a number $\tilde n_0(\omega,\ve)$ such that
$\omega\notin B_{n,s_0}$  for all $|s_0|\leq e^{n^{7/2}}$ if $n>\tilde n_0$. It remains to use the definition \eqref{defBns}  of the set $B_{n,s_0}$.
$\,\,\Box$

%%%%%%%%%%%%%%%%%%%%%%%%%%%%%%%%%%%%%%%%%%%%

\bigskip

In what follows below, $E\geq 0$  and $\tilde E\in [-2,2]$  are related   by $\tilde E=2 \cos(\sqrt{E})$.

\bigskip

The obtained  results will be used to estimate the finite  volume Green's function.
Let $P_\Lambda$  be the orthogonal projection from $\ell^2({\Bbb Z})$ onto the set $\ell^2(\Lambda)$  where $\Lambda=\{0,1,\dots,N-1\}$.
Define  the finite rank operator
\[\notag
{\mathcal H}_{\omega, N}=P_\Lambda {\mathcal H}_\omega P_\Lambda
\] on $\ell^2(\Lambda)$.   We are interested in the matrix elements of  its resolvent $G_{\omega, N}^E=({\mathcal H}_{\omega, N}-\tilde E)^{-1}$  which are denoted by
$G_{\omega, N}^E(j,k)$.

Let us denote  the standard basis in $\ell^2({\Bbb Z})$  by $\{e_i\}$ and set $u=G^E_{\omega,N}e_k$.  Assume that $0\leq j\leq k\leq N-1$.
Then
\begin{equation*}
\begin{aligned}
& u(j) \det({\mathcal H}_{\omega,N} - \tilde E) = ({\mathcal H}_{\omega,N} - \tilde E)e_0 \wedge \dots\\ &  \wedge ({\mathcal H}_{\omega,N} - \tilde E)e_{j-1} \wedge ({\mathcal H}_{\omega,N} - \tilde E)u \wedge ({\mathcal H}_{\omega,N} - \tilde E)e_{j+1} \wedge \dots \wedge ({\mathcal H}_{\omega,N} - \tilde E)e_{N-1} \\
&= ({\mathcal H}_{\omega,N} - \tilde E)e_0 \wedge \dots \wedge (-\tilde E){\mathcal H}_{\omega,N}e_{j-1} \wedge e_k \wedge ({\mathcal H}_{\omega,N} - \tilde E)e_{j+1} \wedge \dots \wedge ({\mathcal H}_{\omega,N} - \tilde E)e_{N-1} \\
&= (-1)^{j-k+1} ({\mathcal H}_{\omega,N} - \tilde E)e_0 \wedge \dots \wedge ({\mathcal H}_{\omega,N} - \tilde E)e_{j-1} \wedge ({\mathcal H}_{\omega,N} - \tilde E)e_{j+1} \wedge \dots \\
& \quad \wedge ({\mathcal H}_{\omega,N} - \tilde E)e_{k} \wedge e_k \wedge ({\mathcal H}_{\omega,N} - \tilde E)e_{k+1} \wedge \dots \wedge ({\mathcal H}_{\omega,N} - \tilde E)e_{N-1} \\
&= (-1)^{j-k+1} ({\mathcal H}_{\omega,N} - \tilde E)e_0 \wedge \dots \wedge ({\mathcal H}_{\omega,N} - \tilde E)e_{j-1} \wedge \alpha_{j}e_{j} \wedge \dots \\
& \quad \wedge \alpha_{k-1}e_{k-1} \wedge e_k \wedge ({\mathcal H}_{\omega,N} - \tilde E)e_{k+1} \wedge \dots \wedge ({\mathcal H}_{\omega,N} - \tilde E)e_{N-1},
\end{aligned}
\end{equation*}
 where $\alpha_{k}=\frac{2\omega_{k}}{\sqrt{ (\omega_{k-1}+\omega_{k})(\omega_{k}+\omega_{k+1})}}$.  One should mention that  we  have the agreement that
 $\det({\mathcal H}_{\omega, 0}-\tilde E)=1$.
Consequently,
\[\label{47}
G_{\omega, N}^E(j,k)=\frac{(-1)^{k-j+1}\det({\mathcal H}_{\omega, j}-\tilde E)\det({\mathcal H}_{T^{k+1}\omega, N-k-1}-\tilde E)}{\det({\mathcal H}_{\omega,N}-\tilde E)}\prod_{i=j}^{k-1}\alpha_i,
\] for $0\leq j\leq k\leq N-1$. 
Note that
\[\notag
\prod_{i=j}^{k-1}\alpha_i\leq \frac{2 \sqrt{\omega_j\omega_{k-1}}}{\sqrt{(\omega_{j-1}+\omega_{j})(\omega_{k}+\omega_{k-1})}}\leq 2.
\] 
Another important relation is the  one that expresses $A^E_N(\omega)$ in terms of  determinants  considered  above.
Let $u$  be the formal solution  of the equation $H_\omega u=Eu$ satisfying  the initial conditions $u(-1)=0$ and $u(0)=1$.   
Define $w(n)=\sqrt{\o_n+\o_{n-1}}\, u(n)$.
Then 
\[\notag
({\mathcal H}_{\omega,  N} -\tilde E)w= -\alpha_{N-1}w(N)e_{N-1}\implies \sqrt{\frac{\o_0+\o_{-1}}{\o_N+\o_{N-1}}}=-\alpha_{N-1}u(N) G^E_{\omega, N}(0,N-1).
\]
Therefore,
\[\notag
u(N)=\sqrt{\frac{\o_0+\o_{-1}}{\o_N+\o_{N-1}}} \det(\tilde E-{\mathcal H}_{\omega,N}) \prod_{i=0}^{N-1}\alpha_i^{-1}.
\]
Let now $v$  be the solution  satisfying  the intitial conditions $v(-1)=1$ and $v(0)=0$. Then $-\o_0 v(1)/\o_{-1}=1 $.  
Therefore,
\[\notag
-\frac{\o_0}{\o_{-1}}v(N)=\sqrt{\frac{\o_1+\o_{0}}{\o_N+\o_{N-1}}} \det(\tilde E-{\mathcal H}_{T\omega,N-1}) \prod_{i=1}^{N-1}\alpha_i^{-1}.
\]
The  representations of $u$  and $v$ lead  to
\[\label{48}
A^E_N(\omega)=\sqrt{\frac{\omega_{N-1}}{\omega_{-1}}}\begin{pmatrix}
\frac1{\sqrt{\o_N+\o_{N-1}}} & 0  \\ &\\ 0& \frac1{\sqrt{\o_{N-1}+\o_{N-2}}}
\end{pmatrix} \, \tilde A^E_N(\omega)  \begin{pmatrix}
\sqrt{\o_0+\o_{-1}} & 0  \\ &\\ 0& \sqrt{\o_{-1}+\o_{-2}}
\end{pmatrix}, 
\]
where 
\[\notag
\tilde A^E_N(\omega)=\begin{pmatrix}
\det(\tilde E-{\mathcal H}_{\omega,N}) \prod_{i=0}^{N-1}\alpha_i^{-1}&-\alpha_{-1} \det(\tilde E-{\mathcal H}_{T\omega, N-1})\prod_{i=0}^{N-1}\alpha_i^{-1}\\ \,&\, \\ \det(\tilde E-{\mathcal H}_{\omega,N-1}) \prod_{i=0}^{N-2}\alpha_i^{-1}&-\alpha_{-1} \det(\tilde E-{\mathcal H}_{T\omega, N-2}) \prod_{i=0}^{N-2}\alpha_i^{-1}
\end{pmatrix}.
\]
Combining  \eqref{47} and \eqref{48} , we obtain
\[\notag
|G_{\omega, N}^E(j,k)|\leq C_\ell \frac{\|A^E_j(\omega)\| \|A^E_{N-k}(T^k\omega)\|}{|\det({\mathcal H}_{\omega,N}-\tilde E)|} \prod_{i=0}^{N-1}\alpha_i,
\]  for $0\leq j\leq k\leq N-1$ with  some constant $C_\ell$  that depends only on $\ell$.

\begin{corollary}\label{coln1}
For any $\varepsilon\in (0,1)$ and $\omega\in \Omega(\varepsilon)$  there is an $n_1=n_1(\omega,\varepsilon)$
such that the following is true:
\begin{enumerate}

\item For all $E\in I$,
\[\label{50}
\frac1n
\ln\|A^E_n(T^{s_0}\omega)\|\leq L(E)+2\ve
\]
for all $n\in {\Bbb N}$ and $s_0\in {\Bbb Z}$  satisfying $n\geq \max\{n_1,\ln^{10/7}(|s_0|+1)\}$.

\item  For all $n\in {\Bbb N}$ and $s_0\in {\Bbb Z}$  satisfying $n\geq \ve^{-1} \max\{n_1,8\ln^{10/7}(|s_0|+1)\}$,
\[\label{51}
|G^E_{T^{s_0}\omega, n}(j,k)|\leq  C_\ell \frac{\exp((n-|j-k|)L(E)+C_0\ve n)}{|\det({\mathcal H}_{T^{s_0}\omega,n}-\tilde E)|}
\] provided $\tilde E\notin \sigma({\mathcal H}_{T^{s_0}\omega,n})$ and $j,k\in [0,n-1]\cap{\Bbb Z}$.  The constant $C_0$ in this inequality depends only on $\Gamma$, while the constant $C_\ell$ depends only on $\ell$.

\end{enumerate}
\end{corollary}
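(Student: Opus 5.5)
The proof follows the standard Avila--Damanik--Zhang scheme: the uniform upper bound \eqref{50} comes from Proposition~\ref{propo4.4} by subadditivity, and \eqref{51} follows by inserting \eqref{50} into the determinant bound for $G^E_{T^{s_0}\omega,n}(j,k)$ derived just before the corollary.

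\textbf{Part (1).} Fix $\omega\in\Omega(\varepsilon)$ and let $\tilde n_0=\tilde n_0(\omega,\varepsilon)$ be as in Proposition~\ref{propo4.4}. Given $n$ and $s_0$, choose an auxiliary block length $m$ with $m^5\approx \varepsilon n$, i.e.\ $m\sim(\varepsilon n)^{1/5}$. Write $n=qm+p$ with $0\le p<m$, and split the product $A^E_n(T^{s_0}\omega)$ into $q$ blocks of length $m$ and a remainder of length $p$. Submultiplicativity gives
\[\notag
\ln\|A^E_n(T^{s_0}\omega)\|\le \sum_{t=0}^{q-1} m\, g_m(T^{mt+s_0}\omega,E)+p\ln\Gamma .
\]
Group the $q$ blocks into consecutive runs of $m^4$ blocks each. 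A run starting at block $t_0$ is exactly the average in \eqref{42} with shift $s_0'=s_0+mt_0$, so its contribution is at most $m^5(L(E)+\varepsilon)$. This requires $m>\max\{\tilde n_0,(\ln(|s_0'|+1))^{2/7}\}$. Since $|s_0'|\le |s_0|+n$, it suffices that $m^{7/2}\gtrsim \ln(|s_0|+n+1)$.

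The leftover pieces are fewer than $m^4$ blocks of length $m$ together with the remainder $p$, so their total length is at most $m^5+m\lesssim\varepsilon n$. They contribute at most $(m^5+m)\ln\Gamma\lesssim\varepsilon n\ln\Gamma$. After rescaling $\varepsilon$ by a factor depending only on $\Gamma$, this yields $\frac1n\ln\|A^E_n\|\le L(E)+2\varepsilon$.

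The constraint $m^{7/2}\gtrsim\ln(|s_0|+1)$ with $m\sim(\varepsilon n)^{1/5}$ becomes $n\gtrsim \varepsilon^{-1}\ln^{10/7}(|s_0|+1)$. This is the source of the exponent $10/7$ in the statement. The $\ln n$ term is absorbed by taking $n_1$ large, and the $\varepsilon$-dependence is absorbed into $n_1$ and the implied constants.

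\textbf{Part (2).} Start from the determinant bound proved before the corollary:
\[\notag
|G^E_{T^{s_0}\omega,n}(j,k)|\le C_\ell\,\frac{\|A^E_j(T^{s_0}\omega)\|\,\|A^E_{n-k}(T^{s_0+k}\omega)\|}{|\det(\mathcal H_{T^{s_0}\omega,n}-\tilde E)|}
\]
for $j\le k$, where the product of the $\alpha_i$ has been bounded by a constant. The case $j>k$ follows by symmetry of the resolvent. There are two cases for each factor.

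\begin{itemize}
\item If $j\ge\varepsilon n$, then $j\ge\max\{n_1,8\ln^{10/7}(|s_0|+1)\}$ by the assumed size of $n$, so part (1) applies and gives $\|A^E_j\|\le e^{j(L+2\varepsilon)}$.
\item If $j<\varepsilon n$, the trivial bound $\|A^E_j\|\le\Gamma^j\le e^{\varepsilon n\ln\Gamma}$ suffices.
\end{itemize}

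For the second factor the shift is $s_0+k$. Since $|s_0+k|\le |s_0|+n$, we have $\ln^{10/7}(|s_0+k|+1)\le 2\ln^{10/7}(|s_0|+1)+C\ln^{10/7}n$. The factor $8$ in the hypothesis handles the first term, and the second is at most $\varepsilon n$ once $n_1$ is large. The same two-case dichotomy then applies to $n-k$.

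Adding the two estimates, and using $j+(n-k)=n-|j-k|$, gives the exponent $(n-|j-k|)L(E)+C_0\varepsilon n$ with $C_0$ depending only on $\Gamma$.

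\textbf{Main obstacle.} The delicate step is the bookkeeping in part (1). The block scale $m$ must be large enough that the shifts $s_0+mt_0$, which can be as large as $|s_0|+n$, remain in the range covered by Proposition~\ref{propo4.4}. At the same time, the total length of the leftover pieces must stay at most of order $\varepsilon n$. This balance is what produces the specific threshold $\ln^{10/7}(|s_0|+1)$.
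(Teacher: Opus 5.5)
\textbf{Gap in Part (1).} Your block bookkeeping does not give the stated threshold. With $m\sim(\varepsilon n)^{1/5}$, the requirement $m>\ln^{2/7}(|s_0'|+1)$ of Proposition~\ref{propo4.4} becomes, as you compute, $n\gtrsim\varepsilon^{-1}\ln^{10/7}(|s_0|+1)$. Part (1), however, asserts the bound for all $n\ge\max\{n_1,\ln^{10/7}(|s_0|+1)\}$.

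The factor $\varepsilon^{-1}$ multiplies a quantity that is unbounded in $s_0$, so it cannot be absorbed into $n_1$ or into a constant. For large $|s_0|$ and $\ln^{10/7}(|s_0|+1)\le n<\varepsilon^{-1}\ln^{10/7}(|s_0|+1)$, your argument says nothing. Your rescaling of $\varepsilon$ makes this worse. Since $\omega\in\Omega(\varepsilon)$ gives \eqref{42} only for this $\varepsilon$, the rescaling has to go into the block length ($m^5\approx\varepsilon n/\ln\Gamma$), which raises the threshold further.

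The defect propagates to Part (2). There you invoke Part (1) for $j\ge\varepsilon n\ge 8\ln^{10/7}(|s_0|+1)$. This meets the stated threshold, but not the one you actually proved, which would require $n\gtrsim\varepsilon^{-2}\ln^{10/7}(|s_0|+1)$.

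\textbf{The missing idea.} What forces $m^5\lesssim\varepsilon n$ is your leftover estimate (up to $m^4$ unused blocks). Instead, use a single run of length $m^5\ge n$ and remove the overshoot with inverses. Take $m=\lceil n^{1/5}\rceil$, so that $0\le m^5-n<5m^4$, and write $A^E_n(T^{s_0}\omega)=[A^E(T^{s_0+n}\omega)]^{-1}\cdots[A^E(T^{s_0+m^5-1}\omega)]^{-1}A^E_{m^5}(T^{s_0}\omega)$.
\begin{itemize}
\item Each inverse has norm at most $\Gamma$, because the matrices lie in ${\rm SL}(2,{\Bbb R})$.
\item $A^E_{m^5}(T^{s_0}\omega)$ is the ordered product of the $A^E_m(T^{s_0+sm}\omega)$, $0\le s<m^4$. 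This is exactly what \eqref{42} controls, with $n$ replaced by $m$ and the original shift $s_0$.
\item Using $L(E)\le\ln\Gamma$, this gives $\ln\|A^E_n(T^{s_0}\omega)\|\le n(L(E)+\varepsilon)+5m^4(2\ln\Gamma+1)$. The error is $O(n^{4/5})\le\varepsilon n$ once $n_1$ is large, depending on $\varepsilon$ and $\Gamma$.
\end{itemize}
The only conditions needed are $m\ge\tilde n_0$ (ensured by $n_1\ge\tilde n_0^5$) and $m\ge n^{1/5}\ge\ln^{2/7}(|s_0|+1)$. The latter is precisely $n\ge\ln^{10/7}(|s_0|+1)$, with no $\varepsilon$. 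Since no shifted runs appear, there is no $\ln n$ term either.

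\textbf{Part (2).} With Part (1) in this form, your two-case treatment of Part (2) is correct: use the trivial bound $\Gamma^j$ when $j<\varepsilon n$ and Part (1) otherwise, and do the same for $n-k$ at shift $s_0+k$. It is slightly simpler than the paper's route, which pads $A^E_j(T^{s_0}\omega)$ to $A^E_{j+h}(T^{s_0-h}\omega)$ with $h=\lceil\varepsilon n\rceil$ and divides out $A^E_h$.
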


%%%%%%%%%%%%%%%%%%%%%%%%%%%%%%%%%%%%%%%%

{\it Proof}.   Set $m=\lceil n^{1/5}\rceil$. Then
\[\notag
A^E_n(T^{s_0}\omega)=[A^E(T^{s_0+n}\omega)]^{-1} \cdots [A^E(T^{s_0+m^5-1}\omega)]^{-1} \prod_{s=0}^{m^4-1}A^E_m(T^{s_0+sm}\omega).
\]
Therefore , since the number of the  factors   containing inverse operators  is $m^5-n\leq 31m^4$, we obtain
\[\label{53}
\|A^E_n(T^{s_0}\omega)\|\leq \Gamma^{31m^4}\prod_{s=0}^{m^4-1}\|A^E_m(T^{s_0+sm}\omega)\|.
\]
Choose $n_1$  so that
\[\label{52}
n_1\geq \max\{ (\tilde n_0(\omega,\varepsilon))^5,18\ve^{-1}, 93^5\}\quad \text{and}\quad \frac{124\ln \Gamma}{n_1^{1/5}-31}<\ve,
\]
where  $\tilde n_0$ is the same as in  Proposition~\ref{propo4.4}.
Now we apply Proposition~\ref{propo4.4} with $m$ replacing $n$ in it.  Observe  that for any $n$  satisfying the conditions of the first part of  the corollary, 
 $m\geq \max\{\tilde n_0(\omega,\ve),\,   \ln^{2/7}(|s_0|+1)\}$.
Therefore,
\begin{equation*}\begin{aligned}
    \frac{1}{n} \ln \|A^E_n(T^{s_0}\omega)\| 
    &\leq \frac{31m^4 \ln\Gamma}{n} + \frac{m^5}{n}(L(E) + \varepsilon) \\
    &\leq \frac{62 \ln\Gamma}{m-31} + \frac{m \varepsilon}{m-31} + L(E) \\
    &\leq L(E) + 2\varepsilon.
\end{aligned}
\end{equation*}
Thus, relation \eqref{50} is established.

To prove \eqref{51}, we   need to estimate the  product
\[\notag
\|A^E_j(T^{s_0}\omega)\|\cdot \|A^E_{n-k}(T^{s_0+k}\omega)\|.
\]
Suppose $n>\ve^{-1}\max\{n_1,8\ln^{10/7}(|s_0|+1)\}$.
Set $h=\lceil \ve n \rceil$. Then for $j\geq0$,  we have
\[\notag
\|A^E_j(T^{s_0}\omega)\|\leq \|A^E_{j+h}(T^{s_0-h}\omega)\|\cdot  \|[A^E_{h}(T^{s_0-h}\omega)]^{-1}\|.
\]
Observe that $j+h\geq h\geq n_1$. Therefore, to apply \eqref{50}, we only need to check  that $h>\ln^{10/7}(|s_0-h|+1)$,  which can be  easily derived  from  
the inequalities
\[\notag\begin{split}
\ln^{10/7}(|a|+|b|+1)\leq 2^{10/7}(\ln^{10/7}(|a|+1)+\ln^{10/7}(|b|+1)),\\
\text{and}\qquad
2^{10/7}\ln^{10/7}(b+1)\leq b,\qquad  \text{for }\quad b>10.
\end{split}
\]
Thus,
\[\label{55}
\|A^E_j(T^{s_0}\omega)\|\leq \exp((j+2h)(L(E)+2\ve))\leq \exp(L(E)j+C_0\ve n),
\]
with the constant $C_0$ depending only on $\Gamma$.

Similarly,
\[\notag
 \|A^E_{n-k}(T^{s_0+k}\omega)\|\leq  \|A^E_{n-k+h}(T^{s_0+k-h}\omega)\|\cdot  \|[A^E_{h}(T^{s_0+k-h}\omega)]^{-1}\|
\]
This time we need to check that $h\geq \ln^{10/7}(|s_0+k-h|+1)$. For that purpose, it is  sufficient to note that
\[\notag
2^{10/7}\ln^{10/7}(|-k+h|+1)\leq 2^{10/7}\ln^{10/7}(n+1)\leq \frac{\ve n}2\leq \frac h 2
\] is implied  by the  condition $n\geq 18 \ve^{-1}$.
Consequently,
\[\label{56}
 \|A^E_{n-k}(T^{s_0+k}\omega)\|\leq \exp((n-k+2h)(L+2\ve))\leq \exp((n-k)L(E) +C_0\ve n)
\]
Multiplying \eqref{55} by \eqref{56}, we obtain \eqref{51} for $j\leq  k$.  Taking the adjoint operator,  we  obtain it for $j\geq k$ as well.  $\,\,\,\,\,\,\Box$

\bigskip

%%%%%%%%%%%%%%%%%%%%%%%%%%%%%%%%%%%%%%%%%%%%%%%%%%%%%%

In \cite{ADZ2},  Avila, Damanik, and Zhang introduce a process termed the "elimination of double resonances." We adopt a specific, context-dependent usage of "resonances," referring to points
$\omega$  rather than energy levels. It can be shown that this set of resonant points in $\Omega$  has small measure. 

For $N\in {\Bbb Z}_+$, we define
\[\notag
\bar N=[N^{\ln N}].
\]
For a  given $\ve$ and $N\in {\Bbb Z}_+$,
the set  $D_N=D_N(\ve)$  is defined  as the set of $\omega\in \Omega$
such that
\[\label{57}
\|G^E_{T^s\omega,[-N_1,N_2]}\|\geq e^{K^2}
\] and
\[\label{58}
\frac1m\ln\| A_m^E(T^{s+r}\omega)\|\leq L(E)-\ve
\]
for some choice of  $s\in {\Bbb Z}$,   $K\geq \max\{ N, \ln^2(|s|+1)\}$,  $E\in I$,  $0\leq N_1,N_2\leq K^9$,
$K^{10}\leq r\leq \bar K$, and $m\in \{K, 2K\}$ (it is either $K$ or $2K$).

\begin{proposition}\label{propo4.6}
For any $\ve>0$ there are positive constants $C,  \eta>0$
for which
\[\notag
\mu(D_N(\ve))\leq C e^{-\eta N},\qquad \forall N\in {\Bbb Z}_+.
\]
\end{proposition}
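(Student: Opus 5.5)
The plan follows the elimination of double resonances in \cite{ADZ2}. Up to a union bound this reduces to the case $s=0$: by $T$-invariance of $\mu$, the set of $\omega$ for which \eqref{57}--\eqref{58} hold with a given $s$ is the preimage under $T^{s}$ of the corresponding set with $s=0$. Since $K\ge \ln^2(|s|+1)$, for a fixed $K$ only $|s|\le e^{\sqrt K}$ occur, so there are at most $2e^{\sqrt K}+1$ values of $s$. It therefore suffices to show that for fixed $K\ge N$ the set $D^0_K$ (conditions \eqref{57}, \eqref{58} with $s=0$) satisfies $\mu(D^0_K)\le Ce^{-2\eta K}$. Then $\mu(D_N)\le\sum_{K\ge N}(2e^{\sqrt K}+1)Ce^{-2\eta K}\le C'e^{-\eta N}$.

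Fix $K$, $N_1,N_2\le K^9$, $r\in[K^{10},\bar K]$ and $m\in\{K,2K\}$. The key point is that condition \eqref{57} depends only on the coordinates $\omega_n$ with $|n|\le K^9+2$, because $\mathcal H_\omega$ has range $2$ in its coefficients. Condition \eqref{58} depends only on $\omega_n$ with $n\in[r-1,r+2K]$, since $A^E$ depends on $\omega_0,\omega_{-1}$.

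First I discretize the energy. The set of $\tilde E$ for which \eqref{57} holds is contained in the $e^{-K^2}$-neighbourhood of the spectrum of $\mathcal H_{\omega,[-N_1,N_2]}$, i.e. in at most $2K^9+1$ intervals of length $2e^{-K^2}$. Each such interval pulls back through $\tilde E=2\cos\sqrt E$ to an $E$-interval of comparable length; here $I$ avoids the points $(\pi n)^2$, where the derivative of $2\cos\sqrt E$ vanishes. On an interval of length $\lesssim e^{-K^2}$, both $\ln\|A^E_m\|$ (Lipschitz constant $\le m\Gamma^{m}$) and $L(E)$ (by the H\"older bound \eqref{32}) vary by far less than $\varepsilon/2$. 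So if \eqref{58} holds at some $E$ in such an interval, then at the eigenvalue $E_j(\omega)$ of the finite box closest to $E$ we have $\frac1m\ln\|A^{E_j}_m(T^{r}\omega)\|\le L(E_j)-\varepsilon/2$.

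Next I use the bounded distortion property to decouple. I condition on the cylinder of $\omega_{-K^9-2},\dots,\omega_{K^9+2}$. This cylinder determines the finitely many energies $E_j$, which are then fixed. The event for the window starting at $r\ge K^{10}>K^9+2$ lies in a cylinder to the right, disjoint from the first. Bounded distortion then gives
\[
\mu\bigl(C\cap T^{-r}\{\tfrac1m\ln\|A^{E_j}_m\|\le L(E_j)-\varepsilon/2\}\bigr)\le C\mu(C)\,\mu\{\tfrac1m\ln\|A^{E_j}_m\|\le L(E_j)-\varepsilon/2\},
\]
and the last factor is at most $Ce^{-cm}\le Ce^{-cK}$ by the uniform large deviations estimate of Theorem~\ref{thm2.10}. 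Summing over the $\le 2K^9+1$ eigenvalues and over the cylinders $C$ gives a bound of order $K^9e^{-cK}$ for each fixed $(N_1,N_2,r,m)$.

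Finally I take the union over the parameters. There are $O(K^{18})$ choices of $(N_1,N_2)$, two choices of $m$, and at most $\bar K=K^{\ln K}=e^{\ln^2K}$ choices of $r$. Altogether
\[
\mu(D^0_K)\le C K^{27}e^{\ln^2 K}e^{-cK}\le Ce^{-cK/2},
\]
which is what the reduction requires.

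I expect the main obstacle to be the decoupling step. The energies $E_j(\omega)$ must be measurable with respect to the left cylinder, and the bounded distortion inequality must be applied to a union of right cylinders with constants uniform in $j$. A further difficulty is justifying the transfer between $E$ and $\tilde E$ near the excluded points, which relies on $I\cap B_\eta(\tilde{\mathcal E})=\emptyset$.
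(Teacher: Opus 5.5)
Your proposal is correct and follows the same strategy as the paper. Both arguments reduce to a fixed $s$ by $T$-invariance with $|s|\le e^{\sqrt K}$, replace $E$ by one of the at most $2K^9+1$ eigenvalues of the box using the Lipschitz bound in energy and \eqref{32}, decouple the box from the window at distance $r\ge K^{10}$ by bounded distortion, apply ULD, and sum over $N_1,N_2,r,m$. The only real difference is that your decoupling is more direct: conditioning on the full window $[-K^9-2,K^9+2]$ makes the box eigenvalues exactly constant on each cylinder, so bounded distortion applies to cylinders without the paper's detour through $s$-saturated sets and \eqref{61}, and without its approximation via $\omega^{(\underline l)}$ on the much shorter cylinder $[-K^2;\underline l]$ (which, for this locally constant cocycle, does not actually control the box operator when $N_1,N_2\gg K^2$). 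As the paper notes, ULD and \eqref{32} must be used on a slightly enlarged interval, since the $E_j$ may fall just outside $I$.
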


To prove Proposition~\ref{propo4.6}, we  need the following lemma from \cite{ADZ2}. 
For an admissible $\underline{l}=(l_1,\dots,l_n)$, define $\Omega^+_{\underline{l}}=[0;l]\cap \Omega^+$. Then
we define
\[\label{59}
\mu^+_{\underline{l}}=\frac1{\mu^+(\Omega^+_{\underline{l}})}(T^{|\underline{l}|}_+)_*\bigl(\mu^+\vert_{\Omega^+_{\underline{l}}}\bigr).
\]

\begin{lemma}
Assume that $\mu$ has a bounded distortion property. Then  there is a constant $C\geq 1$
such that
\[\label{61}
\frac{d \mu^+_{\overline{l}}}{d \mu^+}\leq C\qquad \text{a.e. on}\quad \Omega^+,
\]
uniformly in all admissible  $\overline{l}$. In particular,
\[\label{62}
\int_{\Omega_+} f d\mu^+_{\overline{l}}\leq C \int_{\Omega_+} f d\mu^+.
\]
\end{lemma}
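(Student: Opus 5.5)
The plan is to compute the Radon–Nikodym derivative of $\mu^+_{\underline{l}}$ with respect to $\mu^+$ on cylinder sets, and then pass to general Borel sets. Since cylinders $[0;i_0,\dots,i_s]\cap\Omega^+$ generate the Borel $\sigma$-algebra of $\Omega^+$ and form a semi-ring, it suffices to prove
\[\notag
\mu^+_{\underline{l}}\bigl([0;i_0,\dots,i_s]\bigr)\leq C\,\mu^+\bigl([0;i_0,\dots,i_s]\bigr)
\]
for every admissible cylinder, with $C$ independent of $\underline{l}$ and of the cylinder. The density bound \eqref{61} then follows: the inequality $\nu(A)\le C\mu^+(A)$ passes from cylinders to finite disjoint unions, and by regularity and monotone class arguments to all Borel sets. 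In particular $\mu^+_{\underline{l}}\ll\mu^+$, and the density is $\le C$ a.e. (it is the limit of ratios over shrinking cylinders by the martingale convergence theorem). Inequality \eqref{62} is immediate from \eqref{61} for nonnegative $f$, which is the intended use.

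The key computation goes as follows. Write $n=|\underline{l}|$. By definition \eqref{59}, $\mu^+_{\underline{l}}([0;i_0,\dots,i_s])$ equals $\mu^+\bigl(\Omega^+_{\underline{l}}\cap T_+^{-n}[0;i_0,\dots,i_s]\bigr)/\mu^+(\Omega^+_{\underline{l}})$. Lifting to $\Omega$ via $\pi^+$, the numerator is $\mu\bigl([0;l_1,\dots,l_n]\cap[n;i_0,\dots,i_s]\bigr)$, and the denominator is $\mu([0;l_1,\dots,l_n])$. If the intersection is empty there is nothing to prove. Otherwise the cylinders occupy index ranges $[0,n-1]$ and $[n,n+s]$, so the second starts strictly after the first ends (here $l=n>0+(n-1)$), and the bounded distortion property applies. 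It gives
\[\notag
\mu\bigl([0;\underline{l}]\cap[n;i_0,\dots,i_s]\bigr)\leq C\,\mu([0;\underline{l}])\,\mu([n;i_0,\dots,i_s]).
\]
By $T$-invariance, $\mu([n;i_0,\dots,i_s])=\mu([0;i_0,\dots,i_s])=\mu^+([0;i_0,\dots,i_s])$. Dividing by $\mu([0;\underline{l}])$ yields the claimed bound with the same constant $C$ as in the bounded distortion definition.

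The only point needing care is index bookkeeping. One must check that the convention for $|\underline{l}|$ and $\Omega^+_{\underline{l}}$ makes the shifted cylinder begin at an index strictly greater than the last index of $\underline{l}$. This is exactly the hypothesis $l>n+k$ in the bounded distortion definition. If the indexing were off by one so that the two cylinders overlapped at a single site, I would instead absorb the shared symbol and apply distortion to $[0;l_1,\dots,l_{n-1}]$ and $[n-1;l_n,i_1,\dots]$, paying at most a constant depending on $\ell$ via full support. Beyond that, the passage from cylinders to a.e. density bounds is routine measure theory, so I expect no real obstacle.
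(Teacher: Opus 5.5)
Your argument is correct. The paper gives no proof of its own here, since the lemma is quoted from \cite{ADZ2}, and your route is the standard one: bounded distortion applied to the disjoint, adjacent cylinders $[0;l_1,\dots,l_n]$ and $[n;i_0,\dots,i_s]$, then $T$-invariance, then extension from cylinders to all Borel sets.

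Three small remarks. The normalization $\mu^+(\Omega^+_{\underline{l}})>0$ uses ${\rm supp}\,\mu=\Omega$. Inequality \eqref{62} needs $f\ge 0$, which is how the paper uses it (with an indicator function). With the paper's indexing the index ranges $\{0,\dots,n-1\}$ and $\{n,\dots,n+s\}$ are disjoint, so your off-by-one fallback is not needed.
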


\bigskip

{\it Proof of  Proposition~\ref{propo4.6}}.   The proof follows a similar logic to that in \cite{ADZ2}, with the modification that we occasionally replace  the operator 
$H_\o$ by $\cH_\o$ and the point $E$ by  $\tilde E=2\cos \sqrt E$.

Let $N$ and $\ve$ be given.  
Fix $s\in {\Bbb Z}$ and $K\geq \max\{ N, \ln^2(|s|+1)\}$.  
Define the set $D_{K,s}$ as the set of $\omega$  for which \eqref{57},\eqref{58} are satisfied   for some  $N_1$, $N_2$, $r$, and $m$ obeying the conditions
$0\leq N_1,N_2\leq K^9$,
$K^{10}\leq r\leq \bar K$, and $m\in \{K, 2K\}$.
Note that
\[
D_{K,s}=\bigcup_{0\leq N_1,N_2\leq K^9}\bigcup_{K^{10}\leq r\leq \bar K} D_1(N_1,N_2,r,s)\cup  D_2(N_1,N_2,r,s)
\]
where $D_j(N_1,N_2,r,s)$  is the set of $\omega$  for  which there is an $E\in I$  such that  \eqref{57},\eqref{58}  hold  for $m=jK$.

%%%%%%%%%%%%%%%%%%%%%%%%

To estimate  the measure $\mu(D_1)$, let us  assume that $\omega\in D_1(N_1,N_2, r,s)$.  That means \eqref{57} and \eqref{58} hold for some $E\in I$.
Consequently,  there is an $\tilde E_0\in \sigma(\cH_{T^s\omega,[-N_1,N_2]})$  for which
\[
|\tilde E-\tilde E_0|\leq \|G^E_{T^s\omega,[-N_1,N_2]}\|^{-1}\leq e^{-K^2}, \quad \text{with}\quad \tilde E= 2\cos\sqrt E.
\]
Denote also   by $ E_0$ the unique  solution  to  the equation  $\tilde E_0= 2\cos\sqrt {E_0}$ on $I_0$ where 
$I_0$ is  the unique interval of the  form $[(\pi n)^2,(\pi (n+1))^2)$  containing $I$.
Using   the  inequality
\[\notag
|\notag g_n(\omega, E)-g_n(\omega, E_0)|\leq C \Gamma^{n-1}|\tilde E-\tilde E_0|
\]
and selecting $K$ (i.e.  $N$) so large that $C\Gamma^{K}e^{-K^2}<\ve/6$, we  conclude that
\[\notag
g_K(T^{s+r}\omega, E_0)\leq g_K(T^{s+r}\omega, E)+\frac{\ve}6\leq L(E)-\frac{5\ve}6
\]
Applying Hölder  continuity of $L(E)$  stated in \eqref{32} and selecting $K$ so that $Ce^{-\beta K^2}<\ve/6$,
we  derive  the estimate
\[\label{65}
g_K(T^{s+r}\omega, E_0)\leq L(E_0)-\frac{2\ve}3.
\]

Thus,   we see that
\[
D_1(N_1,N_2, r,s)\subset \hat D(N_1,N_2,r,s)
\]
where $\hat D(N_1,N_2,r,s)$ is the set of $\omega$  for which
\[\notag
g_K(T^{s+r}\omega, E_0)\leq L(E_0)-\frac{2\ve}3
\] for  some $E_0\in I_0$  such that  $2 \cos \sqrt{E_0}\in  \sigma(\cH_{T^s\omega,[-N_1,N_2]})$.  
The measure of the set $\hat D$ is the same as  the measure of the set
\[
T^s(\hat D)=\bigcup_{\tilde E_0\in  \sigma(\cH_{\omega,[-N_1,N_2]})}\{\omega:\,\, g_K(T^{r}\omega, E_0)\leq L(E_0)-\frac{2\ve}3\}
\]  where $\tilde E_0=2\cos E_0$.
Let $\underline{l}$  be an admissible  multiindex  such that $|\underline{l}|=K^2+1$. Define $\Omega_{\underline{l}}$ to be the set
 $\Omega_{\underline{l}}= [-K^2; \underline{l}]$.    For each such $\underline l$, we fix $\omega^{(\underline l)}\in \Omega_{\underline{l}}$. Then  all
 points $\omega\in\Omega_{\underline{l}}$  satisfy the condition $d(\omega,\omega^{(\underline l)})\leq e^{-K^2}$. Therefore,   for each $\tilde E_0\in  \sigma(\cH_{\omega,[-N_1,N_2]})$
 there is an  $\tilde E'\in  \sigma(\cH_{\omega^{(\underline l)},[-N_1,N_2]})$  such that
 \[
 |\tilde E_0-\tilde E'_0|\leq \| \cH_{\omega^{(\underline l)},[-N_1,N_2]}-\cH_{\omega,[-N_1,N_2]} \|\leq C e^{-\alpha K^2}
 \]
 Repeating the  arguments  that lead us to \eqref{65}, we obtain  that for $\omega\in\Omega_{\underline{l}}$,
 \[
 g_K(T^r\omega, E_0)<L(E_0)-\frac{2\ve}3\implies  g_K(T^r\omega, E')<L(E')-\frac{\ve}3
 \] for some $\tilde E'\in  \sigma(\cH_{\omega^{(\underline l)},[-N_1,N_2]})$
provided  $K$  satisfies the inequalities $Ce^{-\alpha\beta K^2}<\ve/6$ and  $C\Gamma^{ K}e^{-\alpha K^2}<\ve/6$.
Here, as before, $\tilde E'=2\cos\sqrt{E'}$,  and $E'\in I_0$.

Put  differently, 
\[\label{67}
\hat D \subset  \bigcup_{\underline l}\,\,\bigcup_{\tilde E'\in  \sigma(\cH_{\omega^{ (\underline l) },[-N_1,N_2]} )}\{\omega \in \Omega_{\underline l}:\,\,g_K(T^r\omega, E')<L(E')-\frac{\ve}3\}.
\]
Define  now $S_K(E,\ve)$ to be the set $\{\omega:\,\, g_K(\omega, E)<L(E)-\ve\}$.  Then the set  in the braces on the right hand side  of \eqref{67}
is  the intersection
\[\notag
\Omega_{\underline l}\,\, \bigcap \, T^{-r}[S_K(E',\ve/3)].
\]
To estimate the measure of this set, we  introduce
\[\notag
\tilde S(E)=\bigcup_{\omega\in T^{-K^2}S_K(E,\ve/3)} W_{\rm loc}^s(\omega)
\]
Note that if $\omega'\in W^s_{\rm loc}(\omega)$, then $d(T^{K^2}\omega,T^{K^2}\omega')\leq e^{-K^2}$.
Consequently, $g_K(T^{K^2}\omega, E)=g_K(T^{K^2}\omega', E)$  for $K>1$,  and therefore,
\[\notag
g_K(T^{K^2}\omega', E)\leq L(E)-\frac{\ve}3.
\]
This implies that 
\[\notag
T^{-K^2}S_K(E,\ve/3)=\tilde S(E).
\]
Clearly,
\[
T^{-K^2}\bigl(\Omega_{\underline l}\,\, \bigcap \, T^{-r}[S_K(E',\ve/3)]\bigr)=T^{-K^2}\Omega_{\underline l}\,\, \bigcap \, T^{-r}[\tilde S(E')]
\]
which is  locally $s$-saturated.

%%%%%%%%%%%%%%%%%%%%%%%%%%%%%%%%%%%%%%%

Denote now $S^+(E)=\pi^+ (\tilde S(E))$ and 
observe that $\pi^+\bigl(T^{-K^2}\Omega_{\underline l}\bigr)=[0;\underline l]^+=\Omega_{\underline l}^+.$
All$s$-locally saturated sets $X$ have the following two properties: $\mu(X)=\mu^+(\pi^+(X))$
and $\pi^+T^{-n}X=T_+^{-n} \pi^+ X$. Therefore, since
\[\notag
\pi^+\Bigl[T^{-K^2}\Omega_{\underline l}\,\, \bigcap \, T^{-r}[\tilde S(E')]\Bigr]\subset  \pi^+(T^{-K^2}\Omega_{\underline l})\,\, \bigcap \, \pi^+(T^{-r}[\tilde S(E')]),
\]
we conclude that
\begin{equation*}
\begin{split} 
\mu \Bigl[ T^{-K^2}\Omega_{\underline{l}} \cap T^{-r}\bigl[\tilde{S}(E')\bigr] \Bigr] 
&\leq \mu^+ \Bigl( \pi^+(T^{-K^2}\Omega_{\underline{l}}) \cap \pi^+\bigl(T^{-r}\bigl[\tilde{S}(E')\bigr]\bigr) \Bigr) \\ 
&= \mu^+ \Bigl( \Omega_{\underline{l}}^+ \cap T_+^{-r}\bigl[ S^+(E')\bigr] \Bigr).
\end{split} 
\end{equation*}
Thus, by $T$-invariance of the  measure $\mu$, and  the fact  that $r>|\underline{l}|=K^2+1$,
\[\notag\begin{split}
\mu\bigl(\Omega_{\underline l}\,\, \bigcap \, T^{-r}[S_K(E',\ve/3)]\bigr)\leq \mu^+\Bigl( \Omega_{\underline l}^+\,\, \bigcap \, T_+^{-r}[ S^+(E')]\Bigr)=\\
\mu^+( \Omega_{\underline l}^+)\mu^+_{\underline l}(T_+^{-r+|\underline{l}|}[ S^+(E')])\leq C\mu( \Omega_{\underline l})\mu^+(T_+^{-r+|\underline{l}|}[ S^+(E')])=\\
C\mu( \Omega_{\underline l})\mu^+(S^+(E'))=C\mu( \Omega_{\underline l})\mu(\tilde S(E'))=C\mu( \Omega_{\underline l})\mu(S_K(E',\ve/3)).
\end{split}
\]
Applying the ULD property (on a  slightly bigger interval than $I$), we  obtain
\[\notag
\mu\bigl(\Omega_{\underline l}\,\, \bigcap \, T^{-r}[S_K(E',\ve/3)]\bigr)\leq C_0\, \mu( \Omega_{\underline l})e^{-c\ve K}.
\]
Since the number of the  eigenvalues $\tilde E'\in  \sigma(\cH_{\omega^{ (\underline l) },[-N_1,N_2]} )$  does not exceed $2K^9+1$,
we conclude that
\[\notag
\mu( D_1(N_1,N_2,r,s))\leq 
\mu(\hat D(N_1,N_2,r,s))\leq C(2K^9+1)e^{-c\ve K}\leq C_\ve e^{-c_\ve K}.
\]
Similarly, one  can show that
\[\notag
\mu( D_2(N_1,N_2,r,s))\leq 
 C_\ve e^{-c_\ve K}.
\]
Summing up over all $N_1$, $N_2$ and $r$  obeying the conditions  in  the definition of $D_N(\ve)$,
we  obtain that
\[\notag
\mu(D_{K,s})\leq K^{18}\bar K  C^2_\ve e^{-2c_\ve K}\leq \tilde C e^{-\tilde \eta K}.
\]
Finally, we use the relation
\[\notag
D_N(\ve)= \bigcup_{s\in {\Bbb Z}}  \bigcup_{K\ge \max\{N,  \ln^2(|s|+1)\}} D_{K,s}\subset    \bigcup_{K\ge N} \bigcup_{|s|\leq e^{\sqrt K}}  D_{K,s}
\]
to conclude that
\[\notag
\mu(D_N(\ve))\leq Ce^{-\eta N}.
\]
$\,\,\Box$

\bigskip

%%%%%%%%%%%%%%%%%%%%%%%%%%%%%%%%%%%%%%%%%%%%%%%%%%%%

By Proposition~\ref{propo4.6}, the set
\[
\Omega_+(\ve)=\Omega\setminus \bigcap_{k=1}^\infty\bigcup_{N>k} (D_N(\ve))
\]
is a set of  full measure. The same is true about  the set $\Omega(\ve)$  from Proposition~\ref{propo4.4}.
Since  the set $\Omega_+(\ve)$ is decreasing  in $\ve$  and the set $\Omega(\ve)$ is  increasing in $\ve$,
the set
\[
\notag
\bigl(\bigcap_{\ve\in (0,1)} \Omega_+(\ve)\bigr) \bigcap \bigl( \bigcap_{\ve\in (0,1)}  \Omega(\ve)\bigr)
\] has full  measure.   Let us  consider the intersection
\[
\Omega_*=\Omega_I\bigcap\bigl(\bigcap_{\ve\in (0,1)} \Omega_+(\ve)\bigr) \bigcap \bigl( \bigcap_{\ve\in (0,1)}  \Omega(\ve)\bigr),
\]
where $\Omega_I$ is the set of all $\omega$ for which the spectrum of $H_\omega$   covers the interval $I$.

Let $\ve\in (0,1)$, $\omega\in \Omega_*$  be given and let $u$  be a  generalized eigenfunction of $H_\omega$  corresponding to  $E\in \sigma(H_\omega)$.
We will assume  that $u(s_0)=1$ for some choice of $s_0\in \{0,1\}$.
Define 
\[\label{6.11}
K=\lceil\frac1\ve\max\{N, 2\ln^2(|s_0|+1)\}\rceil
\]
where $N\geq N_0=\max\{\tilde n_0,n_1, n_2\}$  where $\tilde n_0$ and $n_1$ are  taken from  Proposition~\ref{propo4.4} and  Corollary~\ref{coln1}, while $n_2=n_2(\omega,\ve)$
is the number  such that $\omega\notin D_N(\ve)$  for all $N\geq n_2$.  Since $s_0\in \{0,1\}$,  we can replace $\ln(|s_0|+1)$  by $\ln2$ and assume that $K\sim N/\ve$.

\medskip

We consider two claims that hold  for  $\omega\in \Omega_*$ and are similar to the  those in \cite{ADZ2}.

\bigskip

{\bf Claim 1}.  There  are  integer  numbers $a_1$, $a_2$, $b_1$ and $b_2$  having the    properties
\[
-K^9\leq a_1\leq - K^3+1,\qquad 0\leq a_2\leq K^9
\]
and $b_i\in \{a_i+K^3-2, a_i+ K^3-1,  a_i+ K^3\}$  such that
\[
\vert G^E_{T^{s_0}\omega, \Lambda_i}(j,k)  \vert \leq \exp(-|j-k|L(E)+C_0\ve K^3)
\]
for all $j,k\in \Lambda_i=[a_i,b_i)\cap {\Bbb Z}$  where $C_0$ depends only on $\Gamma$.

\bigskip

{\it Proof.}  We apply Proposition~\ref{propo4.4}  twice with:  $s_0$  and $s_0-K^9$.  The  parameter $n$ is  set  equal to $K^3$. As a result, we  obtain
\[
\notag
L(E)-\frac1{K^6}\sum_{s=0}^{K^6-1}g_{K^3}(T^{s_0+sK^3}\omega, E)<\ve
\]
and \[
\notag
L(E)-\frac1{K^6}\sum_{s=0}^{K^6-1}g_{K^3}(T^{s_0+sK^3-K^9}\omega, E)<\ve.
\]
Consequently,  there are  integers $s_1$ and $s_2$ in   the intervals $-K^9\leq s_1\leq -K^3$ and  $0\leq s_2\leq K^9-K^3$
for which
\[\notag
\frac1{K^3}\ln\|A^E_{K^3}(T^{s_0+s_i}\omega)\|>L(E)-\ve.
\]
This leads to   the estimate
\[\notag
\frac1{K^3}(\ln 4+\ln|\det(\cH_{T^{s_0+a_i}\omega,k_i^3}-\tilde E)|)>L(E)-\ve
\]
for $\tilde E=2\cos\sqrt{E}$ and some  choice of 
$k_i\in\{K^3,K^3-1,K^3-2\}$ and $a_i\in\{s_i,s_i+1\}$.  Define  $b_i=a_i+k_i$
and choose $N$  so large that $ \ln4/N_0^3<\ve$.
Then  the preceding inequality will turn into  the estimate
\[\notag
\frac1{K^3}\ln|\det(\cH_{T^{s_0}\omega,\Lambda_i}-\tilde E)|>L(E)-2\ve,
\] where $\Lambda_i=[a_i,b_i)$.
Thus, we  finially obtain that
\[\notag
\begin{aligned}
|G^E_{T^{s_0}\omega,\Lambda_i}(j,k)| 
&\leq \frac{\exp(K^3 - |j-k|L(E) + C_0\ve K^3)}{|\det(\cH_{T^{s_0}\omega,\Lambda_i} - \tilde E)|} \\
&\leq \exp(-|j-k|L(E) + (C_0+2)\ve K^3),
\end{aligned}
\] for $j,k\in \Lambda_i$.  
To apply Corollary~\ref{coln1}, we might  need to enlarge $N_0$   so that $K^3>\frac{4}{\ve }\ln^{10/7}(K^9+2)$
for all $K>N_0$
$\,\,\,\,\Box$

\bigskip

{\bf  Claim 2.}  Let $2\ve C_0< L(E)$.
 Let $u$  be the generalized solution of the equation $H_\omega u=Eu$
satisfying  the conditions $|u(n)|\leq C_u (1+|n|)$  for all $n\in {\Bbb Z}$  and $u(s_0)=1$.
Let $\Lambda_i$ be the same as in Claim 1.
Define 
\[\notag
\ell_i=\Bigl[  \frac{a_i+b_i}2  \Bigr].
\]
Then 
\[\label{claim2}
|u(s_0+\ell_i)|\leq e^{-2K^2},\qquad i=1,2.
\]
whenever $N>N_0$  where $N_0$  is sufficiently large.  Here,  $N_0$ depends on $\omega$, $\ve$ , and $C_u$.

%%%%%%%%%%%%%%%%%%%%%%%%%%%%%%%%%%%%%%

\bigskip

{\it Proof.} Again, we  define $w(n)=\sqrt{\o_n+\o_{n-1}} u(n)$. Then
for any interval $[a,b]$,
\[\label{un=}
w(n)= -G_{\omega,[a,b]}^E(n, a) \alpha_{a-1} w(a-1)-G^E_{\omega,[a,b]}(n,b) \alpha_{b}w(b+1)
\]
for  all $a\leq n\leq b$.  In particular, we may apply this relation   for $n=\ell_i+s_0$ with $\ell_i=[(a_i+b_i)/2]$ and the interval $\Lambda_i+s_0$.
As a result, we obtain
\begin{align*}
|w(\ell_i+s_0)| 
&\leq 2 \bigl( |w(a_i+s_0)|+|w(b_i+s_0-1)| \bigr) \exp(-(K^3/2-1)L(E)+C_0\ve K^3) \\
&\leq C_u (5+|a_i|+|b_i|) \exp(-(K^3/2-1)L(E)+C_0\ve K^3) \\
&\leq C_u (5+2K^9) \exp(-(K^3/2-1)L(E)+C_0\ve K^3).
\end{align*}
Thus, \eqref{claim2}
 holds  for large $K$ and small $\ve$ $\,\,\,\,\Box$
 
 \bigskip

 Now we use the  condition $|u(s_0)|=1$ and apply \eqref{un=} with $a=s_0+\ell_1+1$  and  $b=s_0+\ell_2-1$ to get
 the  estimate
 \[\notag\begin{split}
 1\leq |w(s_0)|\leq 2|G^E_{\omega, [a,b]}(s_0, a)|\,|w(s_0+\ell_1)|+2|G^E_{\omega, [a,b]}(s_0, b)|\,|w(s_0+\ell_2)|\\
 \leq 2\bigl(|G^E_{\omega, [a,b]}(s_0, a)|+|G^E_{\omega, [a,b]}(s_0, b)|\bigr)e^{-2K^2}\leq 4 \|G^E_{\omega, [a,b]}\| e^{-2K^2}.
 \end{split}
 \]
 Consequently,
 \[\notag
  \|G^E_{\omega, [a,b]}\| \geq e^{K^2}.
 \]
 This  condition \eqref{57}  holds, because $0\leq -a,b\leq K^9$.  Since $\omega\notin D_N(\ve)$,  we   conclude that \eqref{58} fails.
 That is,
 \[\label{6.17}
 \frac1m \ln\| A_m^E(T^{s_0+r}\omega) \|\geq L(E)-\ve,
 \]
 for all $N>N_0$,   $K^{10}\leq r\leq \bar K$, and $m\in \{K,2K\}$.

 To proceed, we introduce the following lemma, known as the Avalanche Principle, 
 developed by Goldstein and Schlag \cite{GS} to study the Lyapunov exponents of Schrödinger cocycles.
 \begin{lemma}\label{AvPr}(Avalanche Principle). 
 Let $A^{(1)},\dots,A^{(n)}$ be ${\rm  SL}(2, {\Bbb R})$ matrices satisfying
the following conditions:
\[\notag
\|A^{(j)}\|\geq \lambda>n,\qquad  \forall 1\leq j\leq  n
\]
and
\[\notag
\bigl| \ln\|A^{(j+1)}\|+\ln\|A^{(j)}\|-\ln\|A^{(j+1)}A^{(j)}\|\bigr|< \frac12\ln\lambda, \qquad\forall 1\leq j\leq  n-1.
\]
Then
\[
\Bigl|\ln\|A^{(n)}\dots A^{(1)}\|+\sum_{j=2}^{n-1}\ln\|A^{(j)}\|-\sum_{j=1}^{n-1}\ln\|A^{(j+1)}A^{(j)}\|\leq C\frac n\lambda.
\]
\end{lemma}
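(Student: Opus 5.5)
The plan is the standard geometric proof of the Avalanche Principle via singular value decompositions. We show that each two-step quantity $\ln\|A^{(j+1)}A^{(j)}\|$ differs from $\ln\|A^{(j+1)}\|+\ln\|A^{(j)}\|$ by an \emph{angle term}. We then show that the long product $A^{(n)}\cdots A^{(1)}$ picks up exactly the same angle terms, each up to a relative error $O(1/\lambda)$. Summing these $n$ errors gives the bound $Cn/\lambda$.

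\textbf{Step 1 (one-matrix geometry).} For each $j$ write $A^{(j)}=U_j\,{\rm diag}(\lambda_j,\lambda_j^{-1})V_j^{T}$ with $\lambda_j=\|A^{(j)}\|\geq\lambda$. Let $s_j$ denote the most expanded unit vector and $s_j^\perp$ the most contracted one, so that $A^{(j)}s_j=\lambda_j u_j$ and $A^{(j)}s_j^\perp=\pm\lambda_j^{-1}u_j^\perp$. Let $\theta_j$ be the angle between $u_j$ and $s_{j+1}$. A direct computation in these bases gives
$$\|A^{(j+1)}A^{(j)}\|=\lambda_{j+1}\lambda_j|\cos\theta_j|\bigl(1+O(\lambda^{-2}|\cos\theta_j|^{-2})\bigr)+O(\lambda^{-1}).$$
The hypothesis $|\ln\|A^{(j+1)}\|+\ln\|A^{(j)}\|-\ln\|A^{(j+1)}A^{(j)}\||<\tfrac12\ln\lambda$ therefore forces $|\cos\theta_j|\gtrsim\lambda^{-1/2}$. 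It also yields
$$\ln\|A^{(j+1)}A^{(j)}\|=\ln\lambda_{j+1}+\ln\lambda_j+\ln|\cos\theta_j|+O(\lambda^{-1}).$$

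\textbf{Step 2 (direction tracking by induction).} Let $B_k=A^{(k)}\cdots A^{(1)}$, and let $v_k$ be the unit direction of $B_ks_1$. The key elementary fact is the following. If a unit vector $v$ makes angle $\varphi$ with the contracting direction $s^\perp$ of an $\mathrm{SL}(2,\mathbb R)$ matrix $A$ with $\|A\|=\mu$, then the direction of $Av$ lies within $O(\mu^{-2}/|\tan\varphi|)$ of $u$. Moreover,
$$\|Av\|=\mu|\langle v,s\rangle|\bigl(1+O(\mu^{-2}|\langle v,s\rangle|^{-2})\bigr).$$
We prove inductively that $v_k$ lies within $\delta\lesssim\lambda^{-3/2}$ of $u_k$. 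Given this, the angle between $v_k$ and $s_{k+1}$ has cosine
$$|\cos\theta_k|\bigl(1+O(\delta/|\cos\theta_k|)\bigr)=|\cos\theta_k|\bigl(1+O(\lambda^{-1})\bigr).$$
This is in particular still $\gtrsim\lambda^{-1/2}$, so the fact applied to $A^{(k+1)}$ closes the induction. Because the direction is re-localized near $u_{k+1}$ at every step, errors do not compound.

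\textbf{Step 3 (telescoping).} By Step 2 we have
$$\ln\|B_ks_1\|=\ln\|B_{k-1}s_1\|+\ln\lambda_k+\ln|\cos\theta_{k-1}|+O(\lambda^{-1}),$$
which gives
$$\ln\|B_ns_1\|=\sum_{j=1}^n\ln\lambda_j+\sum_{j=1}^{n-1}\ln|\cos\theta_j|+O(n/\lambda).$$
Since $s_1$ is nearly the most expanded direction of $B_n$, running the same argument for an arbitrary unit vector $w$ (a vector not near $s_1^\perp$ behaves identically; one near $s_1^\perp$ gives less) shows $\ln\|B_n\|=\ln\|B_ns_1\|+O(\lambda^{-1})$. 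Finally, summing the identity of Step 1 over $j=1,\dots,n-1$ gives
$$\sum_{j=1}^{n-1}\ln\|A^{(j+1)}A^{(j)}\|-\sum_{j=2}^{n-1}\ln\lambda_j=\sum_{j=1}^n\ln\lambda_j+\sum_{j=1}^{n-1}\ln|\cos\theta_j|+O(n/\lambda).$$
Subtracting this from the expression for $\ln\|B_n\|$ yields the claim. The hypothesis $\lambda>n$ is only needed to keep the total error $O(n/\lambda)$ bounded, which keeps the approximations uniform.

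\textbf{Main obstacle.} The delicate point is Step 2: we must obtain the \emph{relative} error $O(\lambda^{-1})$ in $\cos\theta_k$ rather than the naive $O(\lambda^{-1/2})$. This requires the sharper direction estimate $\delta\lesssim\lambda^{-2}\cdot\lambda^{1/2}$, combined with the lower bound $|\cos\theta_k|\gtrsim\lambda^{-1/2}$ from Step 1. I would first verify this by an explicit $2\times2$ computation in the singular bases.
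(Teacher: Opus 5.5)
The paper gives no proof of this lemma and simply quotes it from Goldstein and Schlag; your singular-direction/angle-tracking argument is essentially their standard proof, and it is correct in outline. Two points need tightening when you write it out. First, $|\cos\theta_j|\gtrsim\lambda^{-1/2}$ should come from the exact bound $\|A^{(j+1)}A^{(j)}\|^2\le\lambda_{j+1}^2\lambda_j^2(\cos^2\theta_j+3\lambda^{-4})$, since your stated error term alone does not rule out $|\cos\theta_j|\lesssim\lambda^{-3/2}$. Second, $\ln\|B_n\|=\ln\|B_ns_1\|+O(\lambda^{-1})$ is not literally ``the same argument'' for $w$ near $s_1^\perp$, because the image direction $u_1^\perp$ may be better aligned with $s_2$ than $u_1$ is; it follows instead from $\|B_n\|\le\|B_ns_1\|+\lambda_1^{-1}\|A^{(n)}\cdots A^{(2)}\|$ together with an induction on the number of factors giving $\|A^{(n)}\cdots A^{(2)}\|\lesssim\lambda^{1/2}\|A^{(n)}\cdots A^{(2)}u_1\|$.
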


%%%%%%%%%%%%%%%%%%%%%%%%%%%%%%%%%%%%%%%
Next, we define
\[
A^{(j)}=A^E_K(T^{s_0+K^{10}+(j-1)K}\omega)
\]
for $1\leq j\leq n$  where $K^{10}\leq n\leq K^{-1}\bar K-K^9$  and $s_0\in\{0,1\}$. Then according to \eqref{6.17},
\[
\|A^{(j)}\|\geq \lambda=\exp(K( L(E)-\ve)).
\]
Moreover, $\lambda>n$ if $N_0$ is  sufficiently large.
Since $\omega\in \Omega(\ve)$ from Corollary~\ref{coln1}, we have
\[\notag
\frac1n\ln\| A_n^E(T^{s}\omega) \|\leq L(E)+2\ve
\]
for all $n>\max\{n_1, \ln^{10/7}(|s|+1)\}$  which implies that 
\[\notag
\|A^{(j)}\|\leq \exp(K(L(E)+2\ve))
\]  as  long as $K>\ln^{10/7}(\bar K+1)$. Thus, \eqref{6.17} with $m=2K$ implies
\[\notag
\begin{aligned}
0 \leq & \ln \|A^{(j+1)}\| + \ln \|A^{(j)}\| - \ln \|A^{(j+1)}A^{(j)}\| \\
\leq & \, 2K(L(E)+2\varepsilon) - 2K(L(E)-2\varepsilon) 
=  \, 8K\varepsilon \\
\leq & \, \frac{1}{2} \ln \lambda = \frac{K(L(E)-\varepsilon)}{2},
\end{aligned}
\]  if $17\ve<L(E)$.

So all conditions of Lemma~\ref{AvPr} are  satisfied.
Denoting $\hat N=n K$  and $r_0=K^{10}$, we   write  the resulting inequality  in the form
\begin{align*}
\ln \| A^E_{\hat N}(T^{s_0+r_0}\omega) \| 
& = \ln \| A^{(n)}\cdots A^{(1)} \| \\
& \geq \sum_{j=1}^{n-1}\ln \| A^{(j+1)} A^{(j)} \| - \sum_{j=2}^{n-1}\ln\| A^{(j)} \| - C \\
& \geq 2(n-1)(L(E)-2\varepsilon)K - (n-2)(L(E)+2\varepsilon)K - C \\
& \geq \hat N (L(E)-5\varepsilon),
\end{align*}
 if $N_0$  is sufficiently large.  Here, $\hat N$  is any number of the form $n K$   between $K^{11}$  and $\bar K-K^{10}$.

Let  us  try   to replace $\hat N$  by an arbitrary sufficiently large number $\ell$.
Observe that  the union  of intervals  $[K^{11}+K^{10},\bar K ]$  contains  all   sufficiently large    integer numbers. Therefore,
we may assume that $\ell\in [K^{11}+K^{10},\bar K ]$ is of the form $\ell=n K+p$  where $0\leq p< K$ and $K^{10}+K^9\leq n\leq K^{-1}\bar K-1$.
In this case,
\begin{align*}
\lVert A^E_{\ell}(T^{s_0}\omega) \rVert 
&\geq \frac{\lVert A^E_{\ell-r_0}(T^{s_0+r_0}\omega) \rVert}{\lVert [A^E_{r_0}(T^{s_0}\omega)]^{-1} \rVert} \geq \frac{\exp\left((\ell-r_0) (L(E)-5\varepsilon)\right)}{\Gamma^{r_0}} \\
&\geq \exp\left(\ell (L(E)-6\varepsilon)\right),
\end{align*}
 for all sufficiently  large $N_0$.  This implies   the   relation
 \[\notag
 \liminf_{n\to\infty}\frac1n
\ln \| A^E_{n}(T^{s_0}\omega) \| \geq	L(E)-6\ve ,
\]
for $s_0\in \{0,1\}$.

\medskip

Let ${\frak G}(H_\omega)$  be the set  of energies $E$  for which the operator $H_\o$
has  a generalized eigenfunction. We have established the following  fact:
\begin{theorem}  There is a subset of  full measure $\Omega_0\subset \Omega$,   such that
for   every $\o\in \Omega_0$, and every $E\in {\frak G}(H_\o)\cap I$,
\[
\lim_{n\to\infty} \frac1n \ln\|  A^E_n(\omega)\|=\lim_{n\to-\infty} \frac1{|n|} \ln\|  A^E_n(\omega)\|=L(E).
\]
\end{theorem}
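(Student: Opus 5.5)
The argument just before the statement has already done most of the work. What remains is to assemble it into a limit statement on a single full-measure set, and to supply the negative direction by symmetry. I would take $\Omega_0=\Omega_*$ intersected with its image under the reflection argument below. The set $\Omega_*$ is the intersection of $\Omega_I$ with $\bigcap_{\ve}\Omega_+(\ve)$ and $\bigcap_{\ve}\Omega(\ve)$. Since $\Omega_+(\ve)$ and $\Omega(\ve)$ are monotone in $\ve$, these intersections can be taken over a countable sequence $\ve_k\to0$, so $\Omega_*$ has full measure. Fix $\omega\in\Omega_0$ and $E\in{\frak G}(H_\omega)\cap I$, and let $u$ be a generalized eigenfunction. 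We normalize $u$ so that $u(s_0)=1$ for $s_0=0$ or $1$; this is possible because $u(0)$ and $u(1)$ cannot both vanish.

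\emph{Upper bound.} Part (1) of Corollary~\ref{coln1} with $s_0=0$ gives $\frac1n\ln\|A^E_n(\omega)\|\le L(E)+2\ve$ for all $n\ge n_1(\omega,\ve)$. Hence $\limsup\le L(E)+2\ve$ for every $\ve$.

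\emph{Lower bound.} Assume first $L(E)>0$; by Theorem~\ref{thm2}, together with $I\cap B_\eta(\tilde{\mathcal E})=\emptyset$, this holds uniformly on $I$. Then for $\ve<L(E)/(17+2C_0)$ the hypotheses of Claims 1 and 2 are met. The chain (Claim 1, Claim 2, failure of \eqref{58} because $\omega\notin D_N(\ve)$, the Avalanche Principle) then gives $\liminf\frac1n\ln\|A^E_n(T^{s_0}\omega)\|\ge L(E)-6\ve$. To pass from $T^{s_0}\omega$ to $\omega$ when $s_0=1$, I would use $A^E_n(T\omega)=A^E_{n+1}(\omega)[A^E(\omega)]^{-1}$, which changes the norm by at most a factor $\Gamma$. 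Letting $\ve\to0$ along $\ve_k$ yields the limit as $n\to+\infty$.

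\emph{The limit as $n\to-\infty$.} Here I would repeat the argument for the reflected system $\omega\mapsto\check\omega$, $\check\omega_n=\omega_{-n-1}$. Reflection maps the recurrence \eqref{SEquation} to the same type of recurrence, and $A^E_{-n}(\omega)$ becomes, up to conjugation by bounded diagonal matrices as in \eqref{48}, the forward cocycle of $\check\omega$. The reflected measure $\check\mu$ is again ergodic, has full support and bounded distortion, is carried by a subshift of finite type (with the reversed forbidden pairs), and has the same Lyapunov exponent. This is because $\|A^{-1}\|=\|A\|$ for ${\rm SL}(2,{\Bbb R})$ matrices and $\mu$ is $T$-invariant. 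So all the earlier results apply to $\check\mu$, and $\Omega_0$ is defined as $\Omega_*\cap\check\Omega_*$. Alternatively, one runs the Avalanche argument with $r$ replaced by $-r$. Claims 1 and 2 are already two-sided: the boxes $\Lambda_1$ and $\Lambda_2$ sit on both sides of $s_0$. The definition of $D_N$ would then need the symmetric condition with $T^{s-r}$, whose measure is estimated identically.

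The main obstacle is this symmetric step. One must check that the set $D_N(\ve)$, which is currently built with $T^{s+r}$ only, has a mirror version controlled by Proposition~\ref{propo4.6}. That proof uses $s$-saturation and the future coordinates, so the mirror version needs the corresponding $u$-saturated estimate, or equivalently the reflection above. I would first try the reflection, since it reduces everything to results already proved.
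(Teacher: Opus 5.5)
Your proposal is correct. For $n\to+\infty$ it is exactly the paper's argument: the upper bound comes from Corollary~\ref{coln1}(1), and the lower bound from Claims~1--2, the failure of \eqref{58} because $\omega\notin D_N(\ve)$, and the Avalanche Principle. You also supply the one-step passage from $T^{s_0}\omega$ to $\omega$, which the paper leaves implicit.

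The paper gives no separate argument for $n\to-\infty$; it states the two-sided limit right after proving only the forward bound. Your reflection is therefore an addition rather than a different route, and it works. With $(R\omega)_n=\omega_{-n-1}$ one has $RT=T^{-1}R$. Hence $R_*\mu$ is a $T$-ergodic, fully supported measure with bounded distortion on the reversed subshift. The cocycle \eqref{defineA} over it is again past-dependent, and $L$ is unchanged.

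Two small corrections.
\begin{enumerate}
\item $\|A^E_{-n}(\omega)\|$ and $\|A^E_n(R\omega)\|$ are comparable for a different reason than you give. The two matrices are related by the bounded, boundedly invertible one-step matrices that convert $(u(m),u(m-1))$ into $(u(m),u(m+1))$, together with a coordinate swap. They are not related by diagonal conjugation as in \eqref{48}.
\item To reuse the earlier results on the same $I$, you should check that $I$ also avoids the exceptional set of $R_*\mu$. This holds: reflection exchanges stable and unstable holonomies, and the change of basis above is locally constant. So su-states correspond and $\mathcal{E}$ is unchanged.
\end{enumerate}
Similarly, $L>0$ on $I$ comes from the invariance principle (a zero exponent forces an su-state), not from Theorem~\ref{thm2} alone. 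The paper uses this implicitly as well.

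Your fallback of mirroring $D_N(\ve)$ with $T^{s-r}$ would also work, and it is easy in this setting. Both events in the definition of $D_N(\ve)$ depend on finitely many coordinates, and bounded distortion is symmetric, so no saturation argument is needed.
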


It was shown in \cite{BDF} that this statement implies Theorem~\ref{mainthm}.

%%%%%%%%%%%%%%%%%%%%%%%%%%%%%%%%

\section{Appendix.  H\"older  continuity of the Lyapunov exponent}

The H\"older  continuity of the Lyapunov exponent \eqref{32} is a key component of our analysis and requires formal justification. 
The  main credit for the results  of this section  belongs  to the  authors of \cite{BDF} as the core methodology was established in their original paper.
We only need to adjust it to  the  specific setting.

Let us  define $L_n(E)$ as the integral
\[\label{a4.5}
L_n(E)=\frac1n \int_\Omega \ln\| A_n^E(\omega)\| \,d\mu
\]
The following lemma, originally proved in \cite{BDF} for a different cocycle, applies directly to our current setting. 

\begin{lemma}
There are constants $c_0,C_0>0$ that depend on $\Gamma$ and the set $I$,
such that
\[
|L(E)+L_n(E)-2L_{2n}(E)|\leq C_0 e^{-c_0n},\qquad \text{for all}\quad E\in I.
\]
\end{lemma}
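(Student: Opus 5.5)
We prove the statement with the Avalanche Principle (Lemma~\ref{AvPr}), applied to blocks of length $n$ and combined with the uniform large deviation estimate of Theorem~\ref{thm2.10}. We also use that $L(E)$ is bounded below on $I$: $L$ is positive on $I$ by Theorem~\ref{thm2}, because $I$ avoids $B_\eta(\tilde{\mathcal E})$. It is continuous on the compact set $I$, being a uniform limit via ULD. So $\gamma:=\min_I L>0$. Fix $\ve=\gamma/100$.

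Fix $n$ and a large integer $M$. Write $A^{(j)}=A^E_n(T^{(j-1)n}\omega)$ for $1\le j\le M$. Let $B_{n,M}$ be the set of $\omega$ for which one of the following fails for some $j$:
\[
\Bigl|\tfrac1n\ln\|A^{(j)}\|-L(E)\Bigr|\le\ve, \qquad \Bigl|\tfrac1{2n}\ln\|A^{(j+1)}A^{(j)}\|-L(E)\Bigr|\le\ve .
\]
By Theorem~\ref{thm2.10} and $T$-invariance of $\mu$, $\mu(B_{n,M})\le 2MCe^{-cn}$, uniformly in $E\in I$.

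Off $B_{n,M}$ we take $\lambda=e^{n(\gamma-\ve)}$. Each block then satisfies $\|A^{(j)}\|\ge\lambda$. The pair defect is $\ln\|A^{(j+1)}\|+\ln\|A^{(j)}\|-\ln\|A^{(j+1)}A^{(j)}\|\le 4n\ve<\tfrac12\ln\lambda$. The hypothesis $\lambda>M$ of Lemma~\ref{AvPr} holds if we choose $M=\lfloor e^{n\gamma/2}\rfloor$. The Avalanche Principle then gives
\[
\Bigl|\ln\|A^E_{Mn}(\omega)\|+\sum_{j=2}^{M-1}\ln\|A^{(j)}\|-\sum_{j=1}^{M-1}\ln\|A^{(j+1)}A^{(j)}\|\Bigr|\le C\frac{M}{\lambda}.
\]
Integrate over the good set. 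On $B_{n,M}$, bound everything crudely by $Mn\ln\Gamma$, using $\|A\|=\|A^{-1}\|$ in ${\rm SL}(2)$, which gives $|\ln\|\cdot\||\le k\ln\Gamma$ for a product of $k$ factors. By invariance, $\int\ln\|A^{(j)}\|\,d\mu=nL_n$ and $\int\ln\|A^{(j+1)}A^{(j)}\|\,d\mu=2nL_{2n}$. Dividing by $Mn$ yields
\[
\Bigl|L_{Mn}(E)+\tfrac{M-2}{M}L_n(E)-\tfrac{2(M-1)}{M}L_{2n}(E)\Bigr|\le \frac{C}{n\lambda}+C\ln\Gamma\,M\,e^{-cn}+\frac{C\ln\Gamma}{M}.
\]

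The main obstacle is balancing the terms on the right: $M$ must be large for the $1/M$ error but small against $e^{cn}$ and $\lambda$. We take $M=\lfloor e^{\kappa n}\rfloor$ with $\kappa<\min(c/2,\gamma/2)$. All errors are then $O(e^{-c_1 n})$. It remains to replace $L_{Mn}$ by $L$. We run the same estimate with $n$ replaced by $2n$, that is, blocks of length $2n$, and with $M$ replaced by $M/2$. This gives a relation between $L_{Mn}$, $L_{2n}$ and $L_{4n}$. Subtracting the two relations gives
\[
|L_n+L_{4n}-2L_{2n}|\cdot(1+o(1))\le Ce^{-c_1n}.
\]
Set $\Delta_n=L_n-L_{2n}$. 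This reads $\Delta_n-\Delta_{2n}=O(e^{-c_1n})$. Since $L_{2^kn}\to L$, we have $\Delta_{2^kn}\to0$, and telescoping gives $|L_n-L_{2n}-(L_{2n}-L)|$, which is exactly $|L+L_n-2L_{2n}|$, bounded by $\sum_k Ce^{-c_12^kn}\le C_0e^{-c_0n}$. All constants depend only on $\Gamma$, $\gamma$ and the ULD constants for $I$, so the bound is uniform in $E\in I$. For small $n$ the inequality is trivial after enlarging $C_0$, since $|L|,|L_n|\le\ln\Gamma$.
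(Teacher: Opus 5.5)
\textbf{There is an error in the final step: the elimination of $L_{Mn}$.} Your first step is sound and is essentially the paper's. The Avalanche Principle on $M\approx e^{\kappa n}$ blocks of length $n$, with the bad set controlled by ULD, gives $L_{Mn}+L_n-2L_{2n}=O(e^{-c_1n})$. The same argument with $M/2$ blocks of length $2n$ gives $L_{Mn}+L_{2n}-2L_{4n}=O(e^{-c_1n})$.

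Subtracting these two relations yields $\epsilon_n:=L_n-3L_{2n}+2L_{4n}=O(e^{-c_1n})$, i.e.\ $\Delta_n-2\Delta_{2n}=O(e^{-c_1n})$. It does not give $L_n+L_{4n}-2L_{2n}=\Delta_n-\Delta_{2n}=O(e^{-c_1n})$.

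The relation you wrote is false in general. Telescoping it would give $\Delta_n=O(e^{-c_1n})$, and then $L_n-L=\sum_k\Delta_{2^kn}=O(e^{-c_1n})$, i.e.\ $\int(\ln\|A^E_n\|-nL)\,d\mu\to0$. But the lemma is consistent with $L_n=L+b/n+O(e^{-cn})$ with $b\neq0$. Already for a constant non-normal hyperbolic matrix, $\ln\|A^n\|-n\ln\rho$ tends to a positive constant. For such $L_n$ one has $L_n+L_{4n}-2L_{2n}=b/(4n)+O(e^{-cn})$, which is not exponentially small.

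Even granting your relation, the last sentence does not follow. Telescoping $\Delta_{2^kn}-\Delta_{2^{k+1}n}$ bounds $\Delta_n$, not $\Delta_n-(L_{2n}-L)$.

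\textbf{The repair is short.} Put $d_i=L_{2^in}-L_{2^{i+1}n}$. Then $\epsilon_{2^in}=d_i-2d_{i+1}$, so $\sum_{i=0}^{K}\epsilon_{2^in}=d_0-\sum_{i=1}^{K}d_i-2d_{K+1}$. Letting $K\to\infty$ and using $L_{2^kn}\to L$ gives the exact identity $L+L_n-2L_{2n}=\sum_{i\ge0}\epsilon_{2^in}$. Hence $|L+L_n-2L_{2n}|\le\sum_i Ce^{-c_12^in}\le C_0e^{-c_1n}$.

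With this fix your argument is correct, and it differs genuinely from the paper's at this stage. The paper keeps the block length $N$ fixed and varies the number of blocks ($n_{s+1}/n_s$ versus $2n_{s+1}/n_s$) along a super-exponentially growing sequence $n_s$. It shows that $L_{2n_{s+1}}-L_{n_{s+1}}$, and then $L_{n_{s+1}}-L_{n_s}$, are tiny, sums these to get $L-L_{n_2}=O(e^{-cn_1/5})$, and substitutes $L$ for $L_{n_2}$. You instead keep the total length $Mn$ fixed and vary the block length. Then $L_{Mn}$ cancels outright and a dyadic telescoping suffices. This avoids the auxiliary sequence and its divisibility constraints; you only need $M$ even.
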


{\it Proof.} First, choose $\ve>0$  so small
that
\[\notag
0<\frac{4\ve}{\gamma-\ve}<\frac12,\qquad \text{where}\quad \gamma=\sup_{E\in I} L(E).
\]
Without loss of generality, we may assume that the constant $c>0$  in  LDE  satisfies  the condition
\[\notag
c<4(\gamma-\ve).
\]
Since we  only need to prove \eqref{a4.5}  for large  values  of $n$, we may assume that $(\ln n)/c>1$. Then
there is an integer number $N$ such that
\[
e^{cN/5}\leq n\leq e^{cN/4}, \quad \text{which  is smaller than}\quad e^{(L-\ve)N}.
\]
We will also assume that $2nCe^{-c N}\leq e^{-cN/4}$.
 For each $\omega\in \Omega$, consider  the matrices
\[\notag
A^{(j)}(\omega)=A^E_N(T^{(j-1)N}\omega)
\]
According to the ULD,  if  $B(n)$ is the set of $\o$'s on which one of the inequalities 
\[\label{a4.8}
\Bigl|\frac1N \ln \| A^{(j)}(\o)\|-L\Bigr|<\ve,\qquad \Bigl|\frac1N \ln \| A^{(j+1)}(\o) A^{(j)}(\o)\|-L\Bigr|<\ve
\] 
does not hold for some $1\leq j\leq n$,  then
\[
\mu(B(n))\leq e^{-c N/4}.
\]
In particular,   
\[\notag
\|A^{(j)}\|\geq e^{(L-\ve)N}> n
\] and
\[\notag
\Bigl|\ln\|A^{(j+1)}(\o)\|+\ln\|A^{(j)}(\o)\|-\ln\|A^{(j+1)}(\o) A^{(j)}(\o)\|\Bigr|< 4N\ve
\]
for all $\o\notin B(n)$.
Applying the Avalanche Principle  with $\lambda=e^{(L-\ve)N}$,  we obtain  that
\[\notag
\Bigl| \ln \| A^E_{n N} (\o)\|+\sum_{j=2}^{n-1}\ln \|A_N^E(T^{(j-1)N})\|-\sum_{j=1}^{n-1}\ln \|A_{2N}^E(T^{(j-1)N})\|\Bigr|\leq C\frac{n}{\lambda}.
\]
Conseqently,
\[\notag
\Bigl| L_{n N}+\frac{(n-2)}{n}L_N-\frac{2(n-1)}{n}L_{2N}\Bigr|\leq \frac{C}{\lambda N}+4 \mu (B(n)) \ln \Gamma\leq C e^{-c N/4}.
\]
which leads to the inequality
\[\notag
\Bigl| L_{n N}+L_N-2L_{2N}\Bigr| \leq C e^{-c N/4}+\frac4{n}\ln \Gamma\leq C e^{-c N/5}.
\]
%%%%%%%%%%%%%%%%%%%%%%%%%%%%%%%%%%%%%%
Now, consider a sequence  of indices $n_s$  such that  $n_{s+1}$  is an integer multiple of $n_s$, and
\[\notag
e^{c n_s/5}\leq \frac{n_{s+1}}{n_s}\leq \frac12e^{c n_s/4}.
\]
Setting $N=n_s$ and $n$ equal to either $\frac{n_{s+1}}{n_s}$ or $2\frac{n_{s+1}}{n_s}$, we obtain two estimates
\[\label{a4.10}
\Bigl| L_{n_{s+1}}+L_{n_s}-2L_{2n_s}\Bigr|\leq C e^{-c n_{s}/5},
\]
and \[\label{a4.11}
\Bigl| L_{2n_{s+1}}+L_{n_s}-2L_{2n_s}\Bigr|\leq  C e^{-c n_{s}/5}.
\]
Combining these inequalities \eqref{a4.10} and \eqref{a4.11},  we obtain
 \[\label{a4.12}
\Bigl| L_{2n_{s+1}}-L_{n_{s+1}} \Bigr|\leq  2C e^{-c n_{s}/5}.
\]
Now we see that \eqref{a4.10} and \eqref{a4.12} imply
\[
\notag
\Bigl| L_{n_{s+1}}-L_{n_{s}} \Bigr|\leq  \tilde C e^{-c n_{s-1}/5},\qquad  \text{for}\quad s\geq 2.
\]
Therefore,
\[\notag
\Bigl|   L- L_{n_2} \Bigr|=\Bigl|  \sum_{s=2}^\infty ( L_{n_{s+1}}-L_{n_{s}}) \Bigr|\leq \tilde  C e^{-c n_{1}/5}
\]
Thus, writing \eqref{a4.10}  with $s=1$ and replacing $L_{n_2}$ by $L$  in the resulting inequality, we obtain
\[\notag
\Bigl| L+L_{n_1}-2L_{2n_1}\Bigr|\leq C_0 e^{-c n_{1}/5}.
\]
It remains to  observe that the first member $n_1$  of the sequence of indices was an arbitrary sufficiently large number. $\,\,\,\,\Box$

\bigskip

\begin{theorem}
There  is a constant $C>0$ and  an exponent $\beta>0$  such that
\[
|L(E)-L(E')|\leq  C |E-E'|^\beta
\]  for all $E,E'\in I$.
\end{theorem}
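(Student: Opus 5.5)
The plan is the standard Goldstein--Schlag / Bourgain argument: combine the exponential approximation $|L(E)+L_n(E)-2L_{2n}(E)|\le C_0e^{-c_0n}$ from the preceding lemma with the trivial Lipschitz bound for the finite-scale averages $L_n$. The finite-scale averages are Lipschitz in $E$ with a constant exponential in $n$. Since $\Gamma=\sup_{E\in I}\|A^E(\cdot)\|_\infty<\infty$ and $A^E(\omega)$ depends smoothly on $k=\sqrt E$ through $\cos k$, with $\sqrt{E}$ Lipschitz on $I$ because $I$ stays away from $0$, we have $\|\partial_E A^E(\omega)\|\le C_1$ uniformly in $\omega$ and $E\in I$. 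The product rule then gives $\|A^E_n(\omega)-A^{E'}_n(\omega)\|\le nC_1\Gamma^{n-1}|E-E'|$. Because $\|A^E_n\|\ge1$ for $\mathrm{SL}(2,\Bbb R)$ matrices, this yields
\[\notag
\bigl|\ln\|A^E_n(\omega)\|-\ln\|A^{E'}_n(\omega)\|\bigr|\le nC_1\Gamma^{n-1}|E-E'| .
\]
Integrating in $\omega$, I obtain $|L_n(E)-L_n(E')|\le C_1\Gamma^{n}|E-E'|$, and the same bound holds for $L_{2n}$ with $\Gamma^{2n}$.

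Next, writing $L(E)=2L_{2n}(E)-L_n(E)+O(e^{-c_0n})$ for both $E$ and $E'$ and subtracting, I get
\[\notag
|L(E)-L(E')|\le 2C_0e^{-c_0n}+3C_1\Gamma^{2n}|E-E'|,
\]
valid for all $n\ge n_*$, where $n_*$ is the threshold from the lemma's proof (large $n$ only). For the constants $c_0,C_0$ to be uniform on $I$, the ULD of Theorem~\ref{thm2.10} has to be uniform on $I$, which it is.

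Given $|E-E'|=\delta$ small, I choose $n=\lfloor \ln(1/\delta)/(4\ln\Gamma)\rfloor$. Then $\Gamma^{2n}\delta\le\delta^{1/2}$ and $e^{-c_0n}\le C\delta^{c_0/(4\ln\Gamma)}$. Hence $|L(E)-L(E')|\le C\delta^\beta$ with $\beta=\min\{1/2,\,c_0/(4\ln\Gamma)\}$, provided $\delta$ is small enough that $n\ge n_*$. For $\delta$ above that threshold, the bound is trivial since $0\le L\le\ln\Gamma$, after enlarging $C$.

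The main obstacle is really inside the preceding lemma rather than here: one needs $c_0$ uniform on $I$, and in particular $\inf_I L>0$ so that the Avalanche Principle applies with $\lambda=e^{(L-\ve)N}>n$. This is where I would invoke Theorem~\ref{thm2} together with the choice of $I$ disjoint from $B_\eta(\tilde{\mathcal E})$, and the continuity of $L$ on $I$ (from the continuity of the unique u-state), so that $L$ is bounded below on the compact interval $I$. Once that is granted, the argument above is routine.
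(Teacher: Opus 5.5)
Your proposal is correct and follows the paper's proof essentially verbatim: the bound $|L_n(E)-L_n(E')|\le C\Gamma^{n-1}|E-E'|$ is combined with the preceding lemma $|L+L_n-2L_{2n}|\le C_0e^{-c_0n}$, and $n\asymp\ln|E-E'|^{-1}$ is chosen to balance the two terms. The differences are cosmetic: you normalize $n$ with $4\ln\Gamma$ where the paper uses $3\ln\tilde C$ (use $\max\{\Gamma,e\}$ in place of $\Gamma$ so the denominator is positive), and you treat large $|E-E'|$ explicitly via $0\le L\le\ln\Gamma$.
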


{\it Proof}.  Observe that
\[\notag
|L_n(E)-L_n(E')|\leq  C_{\ell, I} \Gamma^{n-1} |E-E'|
\]
Thus,  according to the preceding lemma, we have
\[\notag
|L(E)-L(E')|\leq 3 C_{\ell, I} \Gamma^{2n-1} |E-E'|+C_0 e^{-c_0 n}
\]
for all $n\in {\Bbb N}$.  Put  differently, there is a constant $\tilde C>1$
such that
\[\notag
|L(E)-L(E')|\leq ( \tilde C)^n |E-E'|+\tilde C e^{-c_0 n}
\]
Choosing $n=\Bigl[\frac{\ln |E-E'|^{-1}}{3\ln \tilde C}\Bigr]$, we obtain
\[\notag
|L(E)-L(E')|\leq  |E-E'|^{2/3}+\tilde C |E-E'|^{ \frac{c_0}{3\ln \tilde C}}.
\]
The proof is complete. $\Box$

\end{document}